\renewcommand{\P}{\mathbbm{P}}
\newcommand{\Q}{\mathbbm{Q}}
\newcommand{\J}{\mathbf{J}}
\newcommand{\btau}{\boldsymbol{\tau}}
\newcommand{\W}{\mathbf{W}}
\newcommand{\Gv}{\mathbf{G}}
\newcommand{\m}{\mathbf{m}}
\newcommand{\Exp}{\mathbbm{E}}
\newcommand{\E}{\mathcal{E}}
\DeclareMathOperator{\eqlaw}{\stackrel{\mathcal{L}}{=}}
\newcommand{\erf}{\text{erf}}
\newcommand{\G}{\mathcal{G}}
\newcommand{\R}{\mathbbm{R}}
\newcommand{\N}{\mathbbm{N}}
\newcommand{\pa}{P_{\alpha}}
\newcommand{\derpart}[2]{ \frac{\partial #1}{\partial #2} } 
\newcommand{\der}[2]{ \frac{\text{d} #1}{\text{d} #2} }  
\newcommand{\derpartsxy}[3]{ \frac{\partial^2 #1}{\partial #2 \partial #3}} 
\newcommand{\M}{\mathcal{M}}
\newcommand{\C}{\mathcal{C}}
\newcommand{\Ct}{\mathcal{C}_{\tau}}
\newcommand{\ta}{\theta_{\alpha}}
\newcommand{\taag}{\tau_{\alpha \gamma}}
\newcommand{\ka}{k_{\alpha}}
\newcommand{\la}{\lambda_{\alpha}}
\newcommand{\Ja}{\bar{J}_{\alpha}}
\newcommand{\Na}{N_{\alpha}}
\newcommand{\Ng}{N_{\gamma}}
\newcommand{\sag}{\sigma_{\alpha \gamma}}
\newcommand{\Jag}{\bar{J}_{\alpha \gamma}}
\newcommand{\Sag}{S_{\alpha \gamma}}
\newcommand{\tk}{t^{(k)}}
\newcommand{\sk}{s^{(k)}}
\newcommand{\uk}{u^{(k)}}
\journalname{}
\begin{document}

\title{Large deviations, dynamics and phase transitions in large stochastic heterogeneous neural networks
}


\author{Tanguy Cabana \and Jonathan Touboul}


\institute{ The Mathematical Neuroscience Laboratory \\
			Coll\`ege de France / CIRB and INRIA Bang Laboratory\\
			11, place Marcelin Berthelot,
			75005 Paris, France
            Tel.: +33-144271388\\
            \email{jonathan.touboul@college-de-france.fr}           
}

\date{Received: date / Accepted: date}

\maketitle

\begin{abstract}
We analyze the macroscopic behavior of multi-populations randomly connected neural networks with interaction delays. Similar to cases occurring in spin glasses, we show that the sequences of empirical measures satisfy a large deviation principle, and converge towards a self-consistent non-Markovian process. The proof differs in that we are working in infinite-dimensional spaces (interaction delays), non-centered interactions and multiple cell types. The limit equation is qualitatively analyzed, and we identify a number of phase transitions in such systems upon changes in delays,  connectivity patterns and dispersion, particularly focusing on the emergence of non-equilibrium states involving synchronized oscillations.
\keywords{Heterogeneous neuronal networks \and Large deviations \and Mean-field equations}
\PACS{87.19.ll \and 
87.19.lc \and 
87.18.Sn \and 
87.19.lm \and 
}
\end{abstract}

\bigskip

\hrule
\setcounter{tocdepth}{2}

\tableofcontents

\medskip

\hrule

\section{Introduction}
Brain's activity results from the interplay of an extraordinary large number of neurons gathering into large-scale populations. This view is supported by anatomical evidences: neurons with the same dynamics and interconnection properties gather into cortical columns (\emph{neural populations}) with a diameter of about $50 \mu m$ to $1 mm$, containing of the order of few thousands to one hundred thousand neurons in charge of specific functions~\cite{mountcastle:97}. This macroscopic signal emerging from the interaction of a large number of cells is recorded by most non-invasive imaging techniques (EEG/MEG/Optical Imaging and Local Field Potentials). These columns have specific functions and spatial locations resulting in the presence of delays in their interactions due to the transport of information through axons and to the typical time the synaptic machinery needs to transmit it. These delays have a clear role in shaping the neuronal activity, as established by different authors (see e.g.~\cite{coombes-laing:11,series-fregnac:02}). Each neuron involved in this processing has a nonlinear dynamics and is subject to an intense noise.

This picture motivated the development of models of large-scale activity for stochastic neuronal networks~\cite{amari:72,wilson-cowan:73,brunel:00,bressloff:09} for different models. Some of the approaches are heuristic, and other based on statistical physics methods. Recently, extensions of the theory of mean-field limits to neuronal systems including spatial extension and delays have been developed in a mathematically rigorous framework~\cite{touboulNeuralfields:11,touboul-hermann-faugeras:11,touboulNeuralFieldsDynamics:11}. These papers exhibited the limit as the number of neurons goes to infinity, of multi-population (up to a continuum) neuronal networks in the presence of external noise and delays, and showed an essential role of noise in the qualitative dynamics of the macroscopic activity.

However, all these work overlook a prominent aspect of brain networks. Indeed, deeper analysis of the brain's connectivity evidence a high degree of heterogeneity, in particular in the interconnections~\cite{parker:03,marder-goaillard:06}. These are due to a number of phenomena, among which intervene the precise number of receptors and the extremely slow plasticity mechanisms. All these phenomena induce a static disorder termed \emph{static random synaptic heterogeneities}. Moreover, thermal noise, channel noise and the intrinsically noisy mechanisms of release and binding of neurotransmitter~\cite{aldo-faisal:08} result in stochastic variations of the synaptic weights termed \emph{stochastic synaptic noise}.

Experimental studies of cortical areas~\cite{aradi-soltesz:02} showed that the degree of heterogeneity in the connections significantly impacts the input-output function, rhythmicity and synchrony. Yet, qualitative effects induced by connection heterogeneities are still poorly understood theoretically. One notable exception is the work of Sompolinsky and collaborators~\cite{sompolinsky-crisanti-etal:88}. In the thermodynamic limit of a one-population firing-rate neuronal network with synaptic weights modeled as centered independent Gaussian random variables, they evidenced a phase transition between a stationary and a chaotic regime as the disorder is increased. More recently, we showed~\cite{touboul-hermann:12} that an additional phase transition towards synchronized activity could occur upon increase of the heterogeneity in multi-populations networks.

The topic of the present manuscript is to analyze in a mathematically rigorous manner the effect of heterogeneities and noise in large-scale networks in the presence of delays. One objective of the present manuscript is to understand from a mathematical perspective the mean-field equations used by Sompolinsky and colleagues in~\cite{sompolinsky-crisanti-etal:88} and to generalize their approach in order to be closer from the biological problem that motivates this study. The paper is organized as follows. In section~\ref{sec:MathSetting}, we will present the mathematical framework and the equations that will be studied throughout the paper, and in particular the inclusion of  interaction delays between cells, multiple populations and non-centered synaptic coefficients which were not present of prior models analyzed. From a mathematical perspective, this problem falls in the class of interacting diffusions in a random environment and is reminiscent of the work developed by Ben Arous and Guionnet on the Sherrington-Kirpatrick model~\cite{ben-arous-guionnet:95,guionnet:97} for spin-glass dynamics. Following their approach, we will extend their framework in order to address our biological problem. A heuristic physical argument and a summary of the results is provided in that section, and proofs will be presented in section~\ref{sec:BenArous} and appendix~\ref{append:Proofs}. Similarly to their analysis of spin glass dynamics, we will use large-deviation techniques to characterize the macroscopic limits of these systems in the limit where the number of neurons tends to infinity. There are two main technical distinctions with existing literature. First is the fact that we deal with multi-populations networks, which involves highly nontrivial difficulties, some of which will be further analyzed in a forthcoming study. Second, the fact that the connectivity weights are not centered introduce additional terms that need special care. And third is the presence of delays in the interactions, leading to work in infinite-dimensional spaces.

The limit of the systems being found, we will analyze the solutions of the asymptotic system in section~\ref{sec:Solutions}. Fortunately, we will show that the solutions of the limit system has Gaussian solutions that are exponentially attractive. These Gaussian solutions are univocally characterized by their mean and covariance functions, and these variables satisfy a closed set of deterministic equations, similar to the ones generally used by physicists from the dynamic mean-field theory~\cite{sompolinsky-crisanti-etal:88}. The mean satisfies a delayed differential equation coupled to the variance of the solution. These are extremely hard to study analytically. In particular, in contrast with the case of deterministic synaptic weights~\cite{touboul-hermann-faugeras:11,touboulNeuralFieldsDynamics:11}, the variance does not satisfies an ordinary differential equation but can be written as the solution of a fixed point equation. This distinction is fundamentally related to the non-Markovian nature of the asymptotic equation. However, based on the bifurcation analysis of the equation on the mean of the Gaussian solution, an alternative method to more usual analyzes of these mean-field equations, we will be able to characterize the solutions and demonstrate that noise is directly related to the emergence of synchronized oscillations, a highly relevant macroscopic state related to fundamental cortical functions such as memory, attention, sleep and consciousness, and its impairments relate to serious pathologies such as epilepsy or Parkinson's disease~\cite{uhlhaas-singer:06,buszaki:06}, and that may account for the results of Aradi and colleagues showing that increased heterogeneity was related with epilepsy.

\section{Mathematical setting}\label{sec:MathSetting}
In this section we introduce the model that we will be investigating in the present manuscript and the main results of the analysis. A heuristic discussion is provided to explain from a physical viewpoint the results that we will establish in section~\ref{sec:BenArous}.
\subsection{The Random Neural Network Model}
In all the manuscript, we are working in a complete probability space $(\Omega, \mathcal{F},\mathbbm{P})$ endowed with a filtration $\big(\mathcal{F}_t\big)_t$ satisfying the usual conditions. We consider a network composed of $N$ neurons falling into one of $M$ populations. We define by $p:\N\mapsto \{1,\cdots,M\}$ the \emph{population function} associating to a neuron index the population label it belongs to. The state of each neuron $i$ in population $p(i)=\alpha\in\{1,\cdots,M\}$ is described by its membrane potential $x^i\in\R$, and considered to satisfy a firing-rate equation (similar to the Hopfield network well known to physicists, see~\cite{amari:72,hopfield:82,wilson-cowan:72}):
\begin{equation}\label{eq:Network}
	dx^i_t=\left(-\frac{1}{\theta_{\alpha}}x^i_t + \sum_{\gamma=1}^M \sum_{j: p(j)=\gamma} J_{ij} S_{\alpha\gamma}(x^j_{t-\tau_{\alpha\gamma}}) \right)\,dt+\lambda_{\alpha} dW^{i}_t
\end{equation}
where $\theta_{\alpha}$ is the characteristic time of neurons of population $\alpha$, $J_{ij}$ is the synaptic efficiency of the synapse between neuron $j$ and neuron $i$, $S_{\alpha\gamma}$ the sigmoidal voltage-to-rate function, $\tau_{\alpha\gamma}$ the propagation delay between neurons of population $\gamma$ and neurons of population $\alpha$, and $\lambda_{\alpha}$ the noise intensity. The Brownian motions $W^i_t$ account for the noisy input received by all neurons, and the heterogeneity of the connections is integrated in the synaptic weights $J_{ij}$. These weights are assumed to be independent Gaussian random variables, with law $\mathcal{N}(\frac{\bar{J}_{p(i)p(j)}}{N_{p(j)}},\frac{\sigma_{p(i)p(j)}}{\sqrt{N_{p(j)}}})$, and in the stochastic synaptic noise case, these will be assumed to be stochastic processes, further specified below. Let us emphasize that all the results readily generalize to neurons having a nonlinear intrinsic dynamics:
\[	dx^i_t=\left(f_{\alpha}(t,x^i_t) + \sum_{\gamma=1}^M \sum_{j: p(j)=\gamma} J_{ij} S_{\alpha\gamma}(x^j_{t-\tau_{\alpha\gamma}}) \right)\,dt+\lambda_{\alpha} dW^{i}_t\]
under the condition that the functions $f_{\alpha}$ for $\alpha\in \{1,\cdots,M\}$ satisfy two standard regularity conditions:
\begin{enumerate}
	\item The functions $f_{\alpha}$ are uniformly locally Lipschitz-continuous in their second variable.
	\item They satisfy a uniform monotone growth condition: for any $t\in\R$, $xf_{\alpha}(t,x) \leq C(1+x^2)$.
\end{enumerate}
These refinements allow considering more biologically relevant neuron models. However, generalizations of the interaction that would depend on both the pre- and post-synaptic neurons (consisting in replacing the function $S_{\alpha\gamma}(x^j_{t-\taag})$ by a nonlinear function $b(x^i_t,x^j_{t-\taag})$ in equation~\eqref{eq:Network}) are harder to achieve and the methodology used in the present manuscript cannot be extended to these more general cases.

Let us define $\tau=\max_{\alpha\gamma} \tau_{\alpha\gamma}$, and let $(\mu_{\alpha})_{\alpha\in\{1\cdots M\}}$ a collection of $M$ probability distributions on $\Ct=\C([-\tau,0], \R)$ that will be used to characterize the initial condition of neurons in population $\alpha$. The initial condition on the network are considered chaotic, given by:
\begin{equation}\label{eq:ICNet}
	\text{Law of } (x_t)_{t\in[-\tau,0]} = \bigotimes_{i=1}^N \mu_{p(i)}.
\end{equation}

The function $(S_{\alpha\gamma})_{\alpha,\gamma\in\{1,\cdots,M\}}:\R\mapsto \R$ are Lipschitz-continuous sigmoidal functions, i.e. non-decreasing functions tending to $0$ at $-\infty$ and to $1$ at $\infty$. This model differs from the work of Ben-Arous and Guionnet in that the voltage is not bounded, there exist multiple populations and the interactions are nonlinear and delayed, which sets the problem in an infinite-dimensional space. The first question that may arise at this point is the well-posedness of the system.

\begin{proposition}\label{pro:ExistenceUniquenessNetwork}
	For each $J\in\R^{N\times N}$ and $T>0$, there exists a unique weak solution to the system~\eqref{eq:Network} defined on $[-\tau, T]$ with initial condition~\eqref{eq:ICNet}. Moreover, this solution is square integrable.
\end{proposition}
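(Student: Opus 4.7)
The plan is to put~\eqref{eq:Network} into integral form and construct the solution by a Picard fixed-point argument on a suitable $L^2$ space of adapted continuous processes, using that conditionally on $J$ the drift is globally Lipschitz. Fix $J\in\R^{N\times N}$ and $T>0$, and work on the canonical filtered space supporting the independent Brownian motions $(W^i)_{i\le N}$ together with an independent initial segment of law $\bigotimes_i \mu_{p(i)}$ on $\C([-\tau,0],\R^N)$. A weak solution is then an $\F_t$-adapted continuous process $X=(x^i)_{i\le N}$ matching this initial segment on $[-\tau,0]$ and, for $t\in[0,T]$ and $\alpha=p(i)$, satisfying
\begin{equation*}
x^i_t \;=\; x^i_0 \;-\; \int_0^t \frac{x^i_s}{\ta}\,ds \;+\; \int_0^t \sum_{\gamma=1}^M \sum_{j:p(j)=\gamma} J_{ij}\,\Sag(x^j_{s-\taag})\,ds \;+\; \la\,W^i_t .
\end{equation*}
This is a fixed-point equation $X=\Phi(X)$ on the Banach space $\mathcal{H}_T$ of such adapted continuous processes, equipped with $\|X\|_t^2:=\Exp[\sup_{-\tau\le s\le t}|X_s|^2]$.

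The contraction estimate exploits two structural features of the coefficients: the intrinsic drift $x\mapsto -x/\ta$ is globally Lipschitz, and each $\Sag$ is both bounded by $1$ and Lipschitz with some constant $L_{\alpha\gamma}$ (being a sigmoid). For two candidates $X,Y\in\mathcal{H}_T$ coinciding on $[-\tau,0]$, the difference $\Phi(X)-\Phi(Y)$ has no Brownian component; two applications of Cauchy--Schwarz (first on the time integral, then on the sum over $j$) combined with the Lipschitz estimates yield
\begin{equation*}
\|\Phi(X)-\Phi(Y)\|_t^2 \;\le\; K_J\,t\int_0^t \|X-Y\|_s^2\,ds,
\end{equation*}
with $K_J$ a finite constant depending on $N$, $J$, $(\ta)$ and $(L_{\alpha\gamma})$. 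Iterating, one gets $\|\Phi^n(X)-\Phi^n(Y)\|_T^2 \le K_J^n T^{2n}/(2n-1)!!\,\|X-Y\|_T^2 \to 0$, so the extended Banach fixed-point theorem provides a unique solution in $\mathcal{H}_T$; since $T$ is arbitrary this yields existence and uniqueness on every $[-\tau,T]$.

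Square integrability then follows from the boundedness of $\Sag$ by $1$: applying Doob's maximal inequality and It\^o's isometry to the integral form gives an a priori bound
\begin{equation*}
\Exp\Big[\sup_{-\tau\le s\le t}|x^i_s|^2\Big] \;\le\; C_1\bigl(1+\Exp\big[\sup_{-\tau\le s\le 0}|x^i_s|^2\big]\bigr) + C_2(J)\int_0^t \Exp\Big[\sup_{-\tau\le u\le s}|x^i_u|^2\Big]\,ds,
\end{equation*}
and Gronwall concludes, provided the $\mu_\alpha$ have finite second moment (implicit in the square-integrability claim). I do not anticipate a serious obstacle: the only technical subtlety is the presence of the delays $\taag$, which prevents a direct appeal to classical Lipschitz SDE theory but is absorbed into the $L^2$ Picard iteration above because each delayed argument is dominated by the sup over $[-\tau,s]$ of $X-Y$. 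Alternatively, when $\min_{\alpha\gamma}\taag>0$, the method of steps on intervals of length $\min_{\alpha\gamma}\taag$ reduces~\eqref{eq:Network} on each step to a standard globally Lipschitz SDE in $\R^N$ whose classical well-posedness is immediate.
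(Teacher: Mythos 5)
Your proposal is correct and follows essentially the same route as the paper: the paper simply observes that~\eqref{eq:Network} is an $N$-dimensional delayed SDE with Lipschitz, linearly growing coefficients and invokes the standard well-posedness theory for such equations (citing Da Prato--Zabczyk and Mao), which is precisely the Picard/Gronwall argument you spell out in detail. The only point worth noting is your (correct) remark that square integrability of the initial laws $\mu_\alpha$ is implicitly assumed, which the paper leaves tacit as well.
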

The proof of this result direct stems of standard theory of delayed stochastic differential equations~\cite{da-prato:92,mao:08}. Indeed, the network equations correspond to delayed stochastic differential equation in dimension $N$. For these equations, it is established that existence, uniqueness and square integrability of the solution hold under the assumption that the drift and diffusion functions are Lipschitz-continuous and enjoy a linear growth property, which clearly hold for our system.

\begin{remark}
	Note that if the initial condition was given by $(x^i_t)_{t\in[-\tau,0]}=\zeta^i$ with $\zeta^i\eqlaw \mu_{p(i)}$, we can also prove strong existence and uniqueness of solutions.
\end{remark}

\subsection{Heuristic approach to mean-field equations}\label{sec:heuristic}

The mean-field approach consists in reducing the $N$-dimensional delayed stochastic differential equation~\eqref{eq:Network} to a $M$-dimensional equation on the law of the system in the limit where all $N_{\alpha}\to \infty$. In~\eqref{eq:Network}, one needs to characterize the behavior of $\sum_{\gamma=1}^M \sum_{j: p(j)=\gamma} J_{ij} S_{\alpha\gamma}(x^j_{t-\taag})$ in the large $N_{\alpha}$ limit. Following the idea of molecular chaos of Boltzmann ("Sto\ss zahlansatz"), we may assume that all neurons behave independently and independently of the disorder. This physical ansatz is the basis of our heuristical approach. Assuming that the processes $x^j_{t-\taag}$ and the coefficients $J_{ij}$ are independent, the local interaction term $\sum_{\gamma=1}^M \sum_{j: p(j)=\gamma} J_{ij} S_{\alpha\gamma}(x^j_{t-\taag})$ is a sum of independent gaussian processes. A functional version of the central limit theorem (under suitable regularity and boundedness conditions) would ensure that
\[
\sum_{\gamma=1}^M \sum_{j: p(j)=\gamma} J_{ij} S_{\alpha\gamma}(x^j_{t-\taag})\to U^{\alpha, \bar{X}}_t \sim \mathcal{N}\left(\sum_{\gamma=1}^M \Jag \Exp\big[\Sag(\bar{X}^{\gamma}_{t-\taag})\big], \sum_{\gamma=1}^M \sag^2\Exp\big[\Sag(\bar{X}^{\gamma}_{t-\taag})^2\big]  \right)
\]
where the Gaussian limits are pairwise independent. In other words, under the molecular chaos hypothesis, averaging effects occur at the level of all neurons, and an effective interaction term in the form of a Gaussian process, will be obtained as an effective collective interaction term:
\begin{align}\label{eq:MeanfieldEquationInhomogeneity}
\begin{cases}
	d\bar{X}^{\alpha}_t=\left(-\frac{1}{\theta_{\alpha}}\bar{X}^{\alpha}_t + U^{\alpha, \bar{X}}_t \right)\,dt + \la dW^{\alpha}_t\\
U^{\alpha, \bar{X}}_t \sim \mathcal{N}\left(\sum_{\gamma=1}^M \Jag \Exp\big[\Sag(\bar{X}^{\gamma}_{t-\taag})\big], \sum_{\gamma=1}^M \sag^2\Exp\big[\Sag(\bar{X}^{\gamma}_{t-\taag})^2\big]  \right) pairwise \; independent.
\end{cases}
\end{align}

This is precisely the same kind of results one obtains in the mean-field theory for spin-glasses. From a mathematical viewpoint, the assumptions of independence of the variables $x^i$ and of the $J_{ij}$ cannot be rigorously legitimated.

And unfortunately, as is the case in the theory of spin glasses, rigorous approaches up to now rely on relatively abstract methods using in particular large-deviations theory. This is the approach we will now develop. We shall eventually note that a posteriori validation of the intuition provided by this heuristic argument will be obtained as a side result of our analysis: we will indeed show that a propagation of chaos phenomenon occurs in the limit where all $N_{\alpha} \to \infty$, i.e. that, provided the initial conditions are independent identically distributed for all neurons in the same population, then finite subsets of neurons behave independently for all times and have the same law given by the mean-field equation~\eqref{eq:MeanfieldEquationInhomogeneity}.

\subsection{Summary of the results}

Mathematical demonstration of such convergence results for multi-populations networks are still to be developed in general settings, and up to our knowledge no result of this kind for multi-populations networks with different population size exist in the literature. In particular, Sanov's theorem, hinge of most studies for the convergence of large-scale networks with heterogeneities, is still to be proved for such systems where the elements are one of a few populations and not identically distributed. In order to show rigorous results on the dynamics, we will slightly specify the problem and consider that all populations have identical sizes $n$ (hence, $N=Mn$). In that case, the analysis can be reduced to the analysis of one-populations systems but with $M$-dimensional variables, whose $\alpha\in\{1\cdots M\}$ component is the voltage of one of the neurons in population $\alpha$.

We now group the $N$ variables $(x^i)_{i=1\cdots N}$ into $M$-dimensional variables $X^i=(x^{i_{\alpha}},\alpha=1\cdots M)_{i=1\cdots n}$ where $p(i_{\alpha})=\alpha$, and such that for any $i\neq j$, $i_{\alpha}\neq j_{\alpha}$. The dynamics of these variables, readily deduced from the original network dynamics, can be written in vector form as:
\begin{equation}\label{eq:NetworkVector}
	dX^i_t=\left(L\cdot X^i_t + \sum_{j=1}^n \J_{ij}\odot S(X^j_{t-\btau}) \right)\,dt + \Sigma\cdot d\W^i_t
\end{equation}
where $L$ is the diagonal matrix with coefficient $(\alpha,\alpha)$ equal to $-1/\theta_{\alpha}$, $\J_{ij}$ is the $M\times M$ matrix with elements $(J_{i_{\alpha}j_{\beta}})_{\alpha,\beta=1\cdots M}$, and $S(X^j_{t-\btau})$ is the $M\times M$ matrix with elements $(\alpha,\gamma)$ given by $S_{\gamma\alpha}(X^{j_{\alpha}}_{t-\tau_{\gamma\alpha}})$. The linear operator $\odot$ acts on $M\times M$ matrices by giving the diagonal (as a $M$-dimensional vector) of the classical matrix product of two matrices, so that the component $\alpha$ of $\J_{ij}\odot S(X^j_{t-\btau})$ is precisely equal to $\sum_{\gamma=1}^M J_{i_{\alpha}j_{\gamma}} S_{\alpha\gamma}(x^{j_{\gamma}}_{t-\tau_{\alpha\gamma}})$. $\Sigma$ is the diagonal matrix with diagonal element $\lambda_{\alpha}$ and $\W^i_t=(W^{i_{\alpha}}_t)_{\alpha=1\cdots M}$.

We now work with an arbitrary fixed time $T>0$ and denote by $\Q^n(J)$ the unique law solution of the network equations~\eqref{eq:NetworkVector} restricted to the $\sigma$-algebra $\sigma(X^i_s, 1\leq i\leq n, -\tau\leq s\leq T)$. $\Q^n(J)$ is a probability measure on $\C^n$  where $\C$ is the space of continuous functions of $[-\tau,T]$ with value in $\R^M$. When neurons are not coupled (i.e. $J_{ij}=0$ for all $(i,j)$), the law of all variables $X^i_t$ are identical, independent, and given by the unique solution $P$ of the one-dimensional standard SDE:
\begin{equation}\label{eq:Uncoupled}
	\begin{cases}
			dX_t=L X_t dt + \Sigma d\W_t\\
			(X_0)\eqlaw \mu^0
	\end{cases}
\end{equation}
The uncoupled system hence has the law of the Ornstein-Uhlenbeck process. $P$ is the law of this process restricted to the $\sigma$-algebra $\G_T=\sigma(x_s,s\leq T)$, it is a probability measure on the space $\C_0$ of continuous functions of $[0,T]$ in $\R^M$. We will denote by $\pa$ the law of its component $\alpha$ (in our case, $P=\otimes_{\alpha=1}^M \pa$).

We want to study the behavior of the empirical law on each population:
\[\hat{\mu}_n = \frac{1}{n}\sum_{i=1}^n \delta_{X^i}\]
under $\Q^n(J)$ where the coefficients $J_{ij}$ are independent Gaussian with law given above. By a direct application of Girsanov theorem, $\Q^n(J)$ is absolutely continuous with respect to $P^{\otimes n}$ and we have:
\begin{multline}\label{eq:Density}
	\der{\Q^n(J)}{P^{\otimes n}} = \exp\Bigg(\sum_{i=1}^n  \int_{0}^{T}  \Big(\Sigma^{-1}\cdot \sum_{j=1}^n   \J_{ij} \odot S(X_{t-\btau}^j)\Big)'   d\W_{t}^{i}\\
	- \frac{1}{2} \int_{0}^{T} \Big\|\Sigma^{-1} \sum_{j=1}^n \J_{ij} \odot S(X^j_{t-\btau})\Big\|^2 dt\Bigg)
\end{multline}
where the prime denotes the transposition.

%

The aim of the present manuscript is to prove the following:

\begin{theorem}\label{thm:Convergence}
	Under $Q^{n}=\mathbbm{E}_J(\Q^{n}(J))$ the law of the $N$-neurons system averaged over all possible configurations (realizations of the synaptic weights $(J_{ij})$), the sequence of empirical measures $\hat{\mu}_{n}$ converges towards  $\delta_{Q}$ as $n$ goes to infinity for some probability measure $Q$.
\end{theorem}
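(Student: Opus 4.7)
The plan is to establish a large deviation principle (LDP) for the sequence $(\hat\mu_n)$ under the annealed law $Q^n = \Exp_J \Q^n(J)$, identify its unique minimizer $Q \in \M(\C)$, and deduce the convergence $\hat\mu_n \to \delta_Q$ from the LDP having a single zero. The overall strategy follows Ben Arous--Guionnet's treatment of the Sherrington--Kirkpatrick dynamics, but requires nontrivial adaptations to accommodate three new features: delays (forcing path-space analysis in infinite dimension), unbounded voltages, and non-centered multi-population interactions.

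First, I would integrate out the Gaussian disorder explicitly in~\eqref{eq:Density}. The exponent there is linear plus quadratic in each $J_{ij}$, so the $(J_{ij})$-integration yields a closed-form density $dQ^n/dP^{\otimes n}$ depending only on the paths $(X^i)_i$ and the Brownian motions $(\W^i)_i$. The linear-in-$J$ part contributes mean-field drift terms proportional to $\Jag$, while completion of the square on the quadratic-in-$J$ part produces a contribution that is quartic in $S$ and nonlocal in time. Both contributions are symmetric in pairs $(i,j)$ and can therefore be written as continuous functionals of the pair empirical distribution $\hat\mu_n \otimes \hat\mu_n$, and hence of $\hat\mu_n$ alone. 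One thus obtains, schematically,
\[
\frac{dQ^n}{dP^{\otimes n}} = \exp\bigl(n\,\Gamma(\hat\mu_n) + o(n)\bigr),
\]
for an explicit functional $\Gamma$ on $\M(\C)$, with $\C = \mathcal{C}([-\tau,T],\R^M)$.

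Next, I would combine this representation with Sanov's theorem (which yields the LDP for $\hat\mu_n$ under $P^{\otimes n}$ at rate $I(\cdot \mid P)$) through an inverse Varadhan argument. This produces an LDP for $\hat\mu_n$ under $Q^n$ with good rate function $H(\mu) = I(\mu \mid P) - \Gamma(\mu) + C$, where $C$ is a normalization constant. Making this rigorous requires exponential tightness of $\hat\mu_n$ under $Q^n$ together with sufficient continuity of $\Gamma$ on the sublevel sets of $I(\cdot \mid P)$; because voltages are unbounded, uniform exponential moment bounds must be established, which I expect to obtain from the dissipative diagonal drift $L$ combined with the square integrability of Proposition~\ref{pro:ExistenceUniquenessNetwork}. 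The minimizers of $H$ would then be identified by writing out the Euler--Lagrange condition: $\mu$ solves a fixed-point equation on $\M(\C)$ that precisely reproduces the self-consistent non-Markovian Gaussian dynamics heuristically described in~\eqref{eq:MeanfieldEquationInhomogeneity}. Uniqueness of the minimizer $Q$ follows by a contraction argument on $[0,T]$, exploiting the Lipschitz and bounded character of the $\Sag$.

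The main obstacle, in my view, is the LDP itself. Neither Varadhan's lemma nor the continuity of $\Gamma$ apply off the shelf: the quartic-in-$\mu$ term arising from the Gaussian integration, combined with the unbounded state space and the infinite-dimensional path space required by the delays, prevents a direct application of standard results. I would therefore proceed via a truncation strategy, approximating the sigmoids $\Sag$ and the potentials by bounded variants, proving the LDP for the truncated system, and removing the truncation through the exponential moment bounds above and a standard exponential approximation argument. Once the LDP holds with $Q$ as unique zero, exponential concentration then gives $Q^n(\hat\mu_n \notin \mathcal{V}) \to 0$ for every neighborhood $\mathcal{V}$ of $Q$ in $\M(\C)$, which is exactly the claimed statement that the law of $\hat\mu_n$ converges to $\delta_Q$.
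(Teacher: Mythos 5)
Your overall architecture is the same as the paper's (and Ben Arous--Guionnet's): average the Girsanov density over the disorder to get a functional of $\hat\mu_n$, combine with Sanov's theorem to obtain a rate function $H=I(\cdot\vert P)-\Gamma$, prove tightness, identify the unique minimizer $Q$, and conclude. Two remarks on the easy part: the disorder integration is \emph{exact}, not up to $o(n)$ --- for frozen paths the interaction term is a Gaussian process whose law depends only on $\hat\mu_n$, so $dQ^n/dP^{\otimes n}=\exp\{n\Gamma(\hat\mu_n)\}$ identically (Lemma~\ref{lemma2}); and a full LDP is not needed: a weak upper bound on compacts (Theorem~\ref{lemma3}) plus the tightness estimate of Theorem~\ref{thm:tightness} and uniqueness of the minimizer already give $Q^n(\hat\mu_n\notin B(Q,\delta))\to 0$, which is how the paper concludes.

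The genuine gap is in how you propose to get the large deviation bound itself. You correctly observe that Varadhan's lemma does not apply because $\Gamma$ is not a bounded continuous functional on $\M_1^+(\C)$, but your remedy --- truncating the sigmoids and the ``potentials'' and removing the truncation by exponential moment bounds --- does not address the real obstruction: the $\Sag$ are already bounded (they take values in $[0,1]$, which is precisely why the unbounded voltages are harmless, since the coupling enters only through $\Sag$), so there is nothing useful to truncate, and the difficulty is rather that $\Gamma$ is only locally Lipschitz for the Vaserstein distance $d_T$ with constants of order $1+I(\mu\vert P)$ (Lemma~\ref{lemma1}), hence not continuous for the weak topology on all of $\M_1^+(\C)$. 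The paper's resolution is the crux of the proof and has no counterpart in your plan: cover the compact by small $d_T$-balls $B(\nu,\delta)$, introduce the ``frozen'' functional $\Gamma_\nu$ and the associated product law $Q_\nu^{\otimes n}=\exp\{n\Gamma_\nu(\hat\mu_n)\}P^{\otimes n}$, split via H\"older's inequality, control $\exp\{an(\Gamma(\hat\mu_n)-\Gamma_\nu(\hat\mu_n))\}$ on the ball through the delicate exponential estimate of Lemma~\ref{lemma3.1}, and apply Sanov's theorem under the i.i.d.\ tilted law $Q_\nu$ to handle the remaining factor. Without an argument of this type (or an equivalent exponential approximation of $\Gamma$ by functionals that \emph{are} weakly continuous), the ``inverse Varadhan'' step does not go through. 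Similarly, tightness in the paper does not come from exponential moments of the voltages but from the entropy inequality together with the bound $I(Q^n\vert P^{\otimes n})\leq Cn$ and the exponential tightness of $P^{\otimes n}$; your dissipativity route is plausible but would have to be carried out, and it is not the mechanism actually needed here.
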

%

In detail, we will show that the sequence empirical measures $\hat{\mu}_{n}$ satisfies a weak large deviation upper bound property and is tight. Precisely, we will demonstrate that:
\begin{theorem}\label{thm:LDP}
	There exists a good rate function $H$ such that for any compact subset $K$ of $\M_1^+ (\C)$ where $\C=\C([-\tau,T],\R^M)$:
	\[\limsup_{n\to\infty} \frac 1 n \log Q^{n}(\hat{\mu}_{n}\in K)\leq -\inf_{K} H.\]
\end{theorem}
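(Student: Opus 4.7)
The plan is to follow the Ben--Arous--Guionnet strategy developed for the Sherrington--Kirkpatrick dynamics, adapted to the present infinite-dimensional (delays), multi-population and non-centered setting. Since $Q^{n} = \Exp_{J}[\Q^{n}(J)]$ and the Radon--Nikodym derivative in~\eqref{eq:Density} is of the form $\exp(A(J) - \tfrac{1}{2} B(J))$ with $A$ linear and $B$ quadratic in the Gaussian variables $J_{ij}$, one can integrate out the disorder explicitly under $P^{\otimes n}$. This should produce a density of the form
\[
\frac{dQ^{n}}{dP^{\otimes n}} \;=\; \exp\!\bigl( n\,\Gamma^{n}(\hat\mu_{n}) \bigr),
\]
for some functional $\Gamma^{n}$ on $\M_{1}^{+}(\C)$ depending on the trajectories only through the empirical measure. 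This reduction is the crux of the argument: it disentangles the intrinsic Brownian randomness (governed by $P^{\otimes n}$) from the quenched disorder and brings us into the framework of Sanov-type LDPs perturbed by an exponential tilt.

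First I would perform the Gaussian integration carefully. Decomposing $J_{i_{\alpha}j_{\gamma}} = \Jag/n + (\sag/\sqrt{n})\,g_{i_{\alpha}j_{\gamma}}$ with $g_{i_{\alpha}j_{\gamma}}$ standard normal, the mean part contributes a deterministic drift in the exponent, while the centered part can be averaged out via the identity $\Exp[\exp(ag-\tfrac12 bg^{2})] = (1+b)^{-1/2}\exp(\tfrac12 a^{2}/(1+b))$, applied coordinate-wise after isolating the quadratic form in $g$ coming from the Itô correction. The delayed, nonlinear and matrix-valued structure of $\J_{ij}\odot S(X^{j}_{t-\btau})$ complicates the bookkeeping but does not change the principle: the resulting $\Gamma^{n}(\hat\mu_{n})$ only involves the mean functionals $\int \Sag(x^{\gamma}_{t-\taag})\,\mu(dx)$, the two-time covariance kernels $\int \Sag(x^{\gamma}_{s-\taag})\Sag(x^{\gamma}_{t-\taag})\,\mu(dx)$, and the stochastic integrals of $\Sag$ against the Brownian motion of the corresponding trajectory.

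Next I would show that $\Gamma^{n}$ converges, uniformly on compact subsets $K\subset\M_{1}^{+}(\C)$, to a bounded continuous functional $\Gamma$, and define the candidate rate function $H(\mu) = I(\mu\,|\,P) - \Gamma(\mu)$, where $I(\cdot\,|\,P)$ is the relative entropy with respect to the uncoupled Ornstein--Uhlenbeck law~\eqref{eq:Uncoupled}. The upper bound then follows from a Laplace--Varadhan estimate: for any compact $K$,
\[
Q^{n}(\hat\mu_{n}\in K) \;=\; \Exp_{P^{\otimes n}}\!\bigl[\ind{\hat\mu_{n}\in K}\, e^{n\,\Gamma^{n}(\hat\mu_{n})}\bigr] \;\leq\; e^{n\,\sup_{K}\Gamma^{n}}\, P^{\otimes n}(\hat\mu_{n}\in K),
\]
and Sanov's theorem applied to the i.i.d.\ $M$-dimensional samples $X^{i}$ under $P^{\otimes n}$---which is legitimate precisely because we grouped the neurons into $M$-dimensional vectors that are identically distributed---gives $\limsup_{n} n^{-1}\log P^{\otimes n}(\hat\mu_{n}\in K) \leq -\inf_{K} I(\cdot\,|\,P)$. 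Combining the two bounds with the uniform convergence of $\Gamma^{n}$ on $K$ yields the statement. Goodness of $H$ follows from lower semicontinuity of $I(\cdot\,|\,P)$ and continuity of $\Gamma$, together with the fact that sublevel sets of $I(\cdot\,|\,P)$ are compact while $\Gamma$ is bounded.

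The main obstacle I expect is the Gaussian integration itself: the stochastic integrals in~\eqref{eq:Density} involve delayed, matrix-valued nonlinearities, and the non-centered means produce a linear-in-$J$ contribution that couples nontrivially with the quadratic Itô correction. Carrying out this computation and then recognising the result as a bounded continuous functional of $\hat\mu_{n}$ in a topology compatible with $\M_{1}^{+}(\C)$ is where most of the work lies and is the place where our setting departs most visibly from~\cite{ben-arous-guionnet:95}. A secondary concern is the unboundedness of the state space $\R^{M}$ (in contrast to the compact spin space of SK dynamics), which forces careful control of the quadratic growth of the exponent and of the integrability of the Girsanov density; this is also why Theorem~\ref{thm:LDP} is phrased on compact subsets, with tightness and the full convergence of Theorem~\ref{thm:Convergence} handled in a separate step.
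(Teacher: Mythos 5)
Your first step is exactly the paper's Lemma~\ref{lemma2}: averaging the Girsanov density over the Gaussian disorder gives $\der{Q^n}{P^{\otimes n}}=\exp\{n\,\Gamma(\hat\mu_n)\}$ (in fact exactly, for every finite $n$ --- there is no $n$-dependent $\Gamma^n$ to pass to the limit). The gap is in how you go from this identity to the upper bound. Your estimate $Q^{n}(\hat\mu_{n}\in K)\leq e^{\,n\sup_{K}\Gamma}\,P^{\otimes n}(\hat\mu_{n}\in K)$ combined with Sanov under $P^{\otimes n}$ only yields $\limsup_n \frac 1n\log Q^n(\hat\mu_n\in K)\leq \sup_{K}\Gamma-\inf_{K}I(\cdot\vert P)$, which is in general strictly larger than $-\inf_{K}\big(I(\cdot\vert P)-\Gamma\big)=-\inf_K H$, because the supremum of $\Gamma$ and the infimum of $I(\cdot\vert P)$ need not be attained at the same measure. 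The standard repair --- covering $K$ by small balls and letting their radius shrink --- requires $\Gamma$ to be continuous (and controlled) on $\M_1^+(\C)$, and this is precisely what fails here: $\Gamma$ is neither bounded nor continuous; its modulus of continuity is entropy-weighted (cf.\ Lemma~\ref{lemma1}(iv), $|\Gamma_{2,\nu}(\mu)-\Gamma_2(\mu)|\leq C_T(1+I(\mu\vert P))\,d_T(\mu,\nu)$), and one only has $\Gamma\leq a\,I(\cdot\vert P)+\eta$ with $a<1$. For the same reason your argument for goodness of $H$ (``continuity of $\Gamma$'' plus ``$\Gamma$ bounded'') rests on false premises; the paper instead gets compact level sets from $\Gamma\leq aI+\eta$ and lower semicontinuity from uniform convergence of the discretized functionals on entropy level sets.

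The paper's proof of the upper bound is built to avoid exactly this obstruction: after covering $K$ by Vaserstein balls $B(\nu,\delta)$, it does not compare values of $\Gamma$ on the ball, but replaces $\Gamma$ by the frozen functional $\Gamma_\nu$, for which $\exp\{n\Gamma_\nu(\hat\mu_n)\}P^{\otimes n}=Q_\nu^{\otimes n}$ is a genuine product probability measure. Hölder's inequality then splits $Q^n(\hat\mu_n\in K\cap B(\nu,\delta))$ into a term $Q_\nu^{\otimes n}(\hat\mu_n\in K\cap B(\nu,\delta))^{1/p}$, handled by Sanov's theorem under $Q_\nu$ together with the identity $H_\nu=I(\cdot\vert Q_\nu)$, and an exponential-moment error term $\int \exp\{qn(\Gamma-\Gamma_\nu)(\hat\mu_n)\}dQ_\nu^{\otimes n}$, which is the content of the technical Lemma~\ref{lemma3.1} and requires genuine Gaussian computations (coupled Gaussian processes with covariance $K_\xi$), not a sup-norm bound. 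Without this tilting-plus-Hölder step, or some substitute that copes with the entropy-weighted continuity of $\Gamma$ and its unboundedness on the non-compact path space, your argument does not reach $-\inf_K H$.
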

Moreover, we will show a tightness result on that sequence, namely:
\begin{theorem}\label{thm:Tightness}
	For any real number $\varepsilon>0$, there exists a compact subset $K_{\varepsilon}$ such that for any integer $n$,
	\[Q^{n}(\hat{\mu}_{n}\notin K_{\varepsilon})\leq \varepsilon.\]
\end{theorem}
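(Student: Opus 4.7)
The plan is to reduce tightness of the random empirical measures $(\hat{\mu}_n)$ on $\M_1^+(\C)$ to tightness on $\C$ of the deterministic intensity measures $\bar{Q}^n$ defined by $\bar{Q}^n(A) = \Exp_{Q^n}[\hat{\mu}_n(A)]$. This reduction is classical: given a tight family $(\bar{Q}^n)_n$, Prokhorov provides increasing compacts $K_m \subset \C$ with $\sup_n \bar{Q}^n(K_m^c) \leq 3^{-m}$; setting $\mathcal{K}_\varepsilon = \{\mu \in \M_1^+(\C) : \mu(K_m^c) \leq 3^{-m/2} \text{ for all } m \geq m_0\}$ yields a compact of $\M_1^+(\C)$ (again by Prokhorov), and applying Markov's inequality to $\hat{\mu}_n(K_m^c)$ whose mean is $\bar{Q}^n(K_m^c)$ gives $Q^n(\hat{\mu}_n \notin \mathcal{K}_\varepsilon) \leq \sum_{m \geq m_0} 3^{-m/2}$, which is below $\varepsilon$ for $m_0$ large enough.

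It then suffices to prove tightness of $(\bar{Q}^n)$ on $\C = \C([-\tau,T], \R^M)$. By the Kolmogorov--Chentsov criterion this reduces to two uniform-in-$n$ estimates on a typical neuron: a second-moment bound $\sup_{n,i,t} \Exp_{Q^n}[\|X^i_t\|^2] \leq C$ and an increment bound $\Exp_{Q^n}[\|X^i_t - X^i_s\|^4] \leq C |t-s|^2$, together with tightness on the fixed initial segment $[-\tau,0]$ where the law is the given $\bigotimes_\alpha \mu_\alpha$ (assumed to have finite moments). To establish the moment and increment bounds, I would start from the integral form of the network equation,
\[
X^i_t = X^i_0 + \int_0^t L \cdot X^i_s \, ds + \int_0^t \sum_{j=1}^n \J_{ij} \odot S(X^j_{s-\btau}) \, ds + \Sigma \cdot \W^i_t,
\]
and first average out the Gaussian disorder. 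Since each $S_{\alpha\gamma}$ is bounded and $J_{i_\alpha j_\gamma} \sim \mathcal{N}(\bar{J}_{\alpha\gamma}/n, \sigma_{\alpha\gamma}^2/n)$, the $\alpha$-component of the interaction term has mean of order $\sum_\gamma |\bar{J}_{\alpha\gamma}|$ and variance of order $\sum_\gamma \sigma_{\alpha\gamma}^2$; hence $\Exp_{Q^n}\bigl[\|\sum_j \J_{ij} \odot S(X^j_{s-\btau})\|^2\bigr]$ is bounded independently of $n$. A Gronwall argument on $t \mapsto \Exp[\|X^i_t\|^2]$ then yields the second-moment bound, and the quartic increment estimate follows from Cauchy--Schwarz on the drift contribution together with Burkholder--Davis--Gundy on the stochastic integral, using the Gaussian fourth-moment of the $J_{ij}$'s after conditioning on the paths.

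The main obstacle is keeping every estimate uniform in $n$ despite the interaction term being a sum of $nM$ random contributions. This uniformity is precisely enabled by the normalization $\mathcal{N}(\cdot/n, \cdot/n)$ of the synaptic coefficients combined with the uniform boundedness of the sigmoids $S_{\alpha\gamma}$, so that the contributions concentrate on finite scales regardless of $n$. The delay structure introduces only the mild additional book-keeping of extending bounds to the segment $[-\tau,0]$, where they follow directly from integrability properties of the initial laws $\mu_\alpha$; all estimates then propagate forward in time through Gronwall.
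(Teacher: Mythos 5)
Your first step --- reducing uniform tightness of the random measures $\hat{\mu}_n$ in $\M_1^+(\C)$ to tightness of the intensity measures $\bar{Q}^n(\cdot)=\Exp_{Q^n}[\hat{\mu}_n(\cdot)]$ via Markov's inequality and Prokhorov --- is correct and is a legitimately more elementary reduction than what the paper does. The genuine gap is in the second step, at exactly the crucial point. You bound $\Exp_{Q^n}\big[\|\sum_j \J_{ij}\odot S(X^j_{s-\btau})\|^2\big]$ by computing the mean and variance of the interaction term as if the coefficients $J_{i_\alpha j_\gamma}$ were independent of the paths $X^j_{s-\btau}$. Under $Q^n=\Exp_J[\Q^n(J)]$ this independence fails for $s>0$: each path $x^{j_\gamma}$ is a functional of the whole matrix $J$ through the recurrent coupling (in particular it is correlated with $J_{i_\alpha j_\gamma}$, since $x^{j_\gamma}$ feels $x^{i_\alpha}$, which feels $J_{i_\alpha j_\gamma}$). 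Without independence, the cross terms $\Exp[(J_{i_\alpha j_\gamma}-\bar{J}_{\alpha\gamma}/n)(J_{i_\alpha k_\gamma}-\bar{J}_{\alpha\gamma}/n)S_{\alpha\gamma}(x^{j_\gamma})S_{\alpha\gamma}(x^{k_\gamma})]$ need not cancel, and the only unconditional bound, $\sum_j|J_{i_\alpha j_\gamma}|$, has second moment of order $\sigma_{\alpha\gamma}^2 n$, so uniformity in $n$ is lost; the same problem hits the quartic increment estimate. Assuming the needed decorrelation is essentially assuming the molecular-chaos ansatz that the paper explicitly says cannot be legitimated a priori --- it is what the whole large-deviation machinery is there to prove.

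The paper avoids this by never estimating moments under the quenched dynamics: it applies the relative entropy inequality to get $Q^n(A)\leq\big(\log 2+I(Q^n|P^{\otimes n})\big)/\log\big(1+P^{\otimes n}(A)^{-1}\big)$, invokes exponential tightness of $\hat\mu_n$ under the i.i.d.\ law $P^{\otimes n}$, and then reduces everything to the bound $I(Q^n|P^{\otimes n})\leq Cn$. That entropy bound is obtained after integrating out the disorder, where the interaction becomes a Gaussian process with covariance $K_{\hat{\mu}_n}$ depending only on the empirical measure; the density is rewritten through the bounded drift $H^{\alpha}_t(Q^{\Na})$ and controlled by It\^o isometry and Gronwall. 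If you wanted to keep your intensity-measure reduction, you would still need an analogue of this disorder-averaged Girsanov computation (showing that under $Q^n$ the one-particle dynamics is an Ornstein--Uhlenbeck process plus a drift with uniformly bounded $L^2$ norm) to get the moment and increment bounds; the naive mean/variance computation of the interaction term cannot replace it.
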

These two results imply the convergence result provided that we characterize uniquely the minima of $H$, which will be the subject of theorem:
\begin{theorem}\label{thm:Limit}
	The good rate function $H$ is such that:
	\begin{enumerate}
		\item It achieves its minimal value at the probability measure $Q\ll P$ satisfying the implicit equation:
		\[\der{Q}{P}=\mathcal{E}\left[\sum_{\alpha=1}^M\frac 1 {\lambda_{\alpha}}\int_0^T G_t^{\alpha,Q^{\alpha}} dW^{\alpha}_t - \frac 1 {2\lambda_{\alpha}^2} \int_0^T (G_t^{\alpha,Q^{\alpha}})^2 dt\right]\]
		where $\mathcal{E}$ denotes the expectation over $G^{Q}=(G^{\alpha,Q^{\alpha}})_{\alpha=1\cdots M}$ a Gaussian process with mean:
		\[\mathcal{E}[G^{\alpha,Q^{\alpha}}_t] = \sum_{\gamma=1}^M \Jag \int \Sag(x_{t-\tau_{\alpha\beta}})dQ^{\beta}(x) \]
		and covariance:
		\[\mathcal{E}[G^{\alpha,Q^{\alpha}}_tG^{\gamma,Q^{\gamma}}_s] = \delta_{\alpha\gamma}\sum_{\beta=1}^M \sigma_{\alpha\beta}^2\int S_{\alpha\beta}(x_{t-\tau_{\alpha\beta}})S_{\alpha\beta}(x_{s-\tau_{\alpha\beta}})dQ^{\beta}(x)\]
		where $\delta_{\alpha\gamma}$ equals $1$ if $\alpha=\gamma$ and $0$ otherwise.
		\item This provides an implicit self-consistent equation on $Q$ the limit distribution, which has a unique solution.
	\end{enumerate}
\end{theorem}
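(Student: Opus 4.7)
The plan is to adapt the large-deviations strategy of Ben Arous and Guionnet~\cite{ben-arous-guionnet:95,guionnet:97} to the present delayed, multi-population, non-centered setting. Starting from the Radon--Nikodym derivative~\eqref{eq:Density}, which is exponential and quadratic in the Gaussian weights $J_{ij}$, I would apply Fubini and integrate these weights out explicitly. The resulting density of $Q^n$ against $P^{\otimes n}$ can then be rewritten as the expectation of a Girsanov-type exponential driven by an auxiliary Gaussian process $G=(G^{i,\alpha})_{i,\alpha}$ whose mean and covariance depend on the trajectories $(X^j)_j$ only through the population-wise marginals of $\hat\mu_n$. The linear-in-$J$ contribution produces the mean $\sum_\gamma \Jag \int \Sag(x_{t-\taag})\,d\hat\mu_n^{\gamma}(x)$, while the quadratic contribution produces the covariance $\delta_{\alpha\gamma}\sum_\beta \sag^2 \int \Sag(x_{t-\taag})\Sag(x_{s-\taag})\,d\hat\mu_n^{\beta}(x)$, the Kronecker $\delta$ reflecting independence across the post-synaptic index $i$.

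From this annealed expression one reads off the good rate function $H$. Restricted to measures $Q\ll P$, the key identity is that $H(Q)$ equals the relative entropy of $Q$ with respect to $\Phi(Q)$, where $\Phi(Q)\ll P$ is the measure whose Radon--Nikodym derivative against $P$ is precisely the Girsanov exponential appearing in item~1 of the statement, with the population marginals of $Q$ substituted for those of $\hat\mu_n$. Since relative entropy is non-negative and vanishes only on the diagonal, $H\ge 0$ with $H(Q)=0$ if and only if $Q=\Phi(Q)$, which is exactly the implicit self-consistent equation of item~1. Existence of a minimizer then follows from Theorem~\ref{thm:Tightness} together with the lower semicontinuity of $H$ and the compactness of its sublevel sets.

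The main technical obstacle is item~2, uniqueness of the fixed point. I would set up a contraction argument on $\M_1^+(\C)$ equipped with a Vasershtein-type distance $d_{t_0}$ on paths restricted to $[-\tau,t_0]$. For two candidate fixed points $Q_1,Q_2$, couple the auxiliary Gaussian processes $G^{Q_1}$ and $G^{Q_2}$ synchronously (using, for instance, a common Brownian representation with coefficients prescribed by the respective covariance kernels) and retain the identity coupling for the driving Brownian motions $W^\alpha$. The Lipschitz continuity of each $\Sag$ together with a Grönwall estimate then bounds $d_{t_0}(\Phi(Q_1),\Phi(Q_2))\le \kappa(t_0)\,d_{t_0}(Q_1,Q_2)$ with $\kappa(t_0)<1$ for $t_0$ small. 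Since the initial segment on $[-\tau,0]$ is prescribed by $\mu^0$, uniqueness on $[-\tau,t_0]$ can be propagated step by step up to $[-\tau,T]$ by iterating Picard. The bookkeeping is considerably heavier than in the centered single-population Markovian case treated in~\cite{ben-arous-guionnet:95}, because of the infinite-dimensional (delay) state space and the cross-population covariance terms, but the overall architecture is parallel.
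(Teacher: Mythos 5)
Your proposal is sound in outline, but it identifies the minimizers by a genuinely different mechanism than the paper. For item 1 the paper argues variationally: after showing that any minimizer $Q$ is equivalent to $P$ (Lemma~\ref{lem:QequivP}), it perturbs $Q$ into $Q_s(\Phi)=\frac{1+s\Phi}{1+s}Q$, expands $I(Q_s(\Phi)\vert P)$ and $\Gamma(Q_s(\Phi))$ to first order in $s$ (Lemma~\ref{lem:Density}), deduces from the optimality condition that the functional $Z_T$ is $P$-a.s.\ constant, and then obtains the implicit equation~\eqref{eq:densityMinimum} from uniqueness of the semimartingale decomposition of $\der{Q}{P}\big\vert_{\mathcal{F}_t}$. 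You instead exploit the identity $H(Q)=I\big(Q\vert\Phi(Q)\big)$, which is legitimate — it is the diagonal case $\nu=\mu$ of the relation $H_{\nu}=I(\cdot\vert Q_{\nu})$ established in Lemma~\ref{lemma1}(v) and its continuous-time analogue — and conclude from non-negativity of relative entropy that minimizers are exactly the fixed points of $\Phi$. That is a shorter path, but note what it presupposes: it pins down the minimum only once a fixed point with finite entropy is known to exist (otherwise $\inf H$ could be strictly positive and attained elsewhere), so in your scheme item 1 logically leans on the Banach fixed-point argument of item 2; the paper's variational argument characterizes any minimizer without assuming existence, and then treats existence and uniqueness separately (via the moment equations of Theorem~\ref{pro:GaussianSolutions} in the Gaussian case, and the Vaserstein contraction of \cite[Section 5.2]{ben-arous-guionnet:95} in general), so your item 2 is essentially the argument the paper invokes by reference. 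One detail of your contraction sketch needs repair: coupling $G^{Q_1}$ and $G^{Q_2}$ through a common Brownian representation with the respective covariance kernels is delicate, since square roots of covariance operators are not Lipschitz in the kernel; the workable coupling is the one used throughout the paper's appendix, namely a joint centered Gaussian whose cross-covariance is built from a coupling $\xi$ of $Q_1$ and $Q_2$ as in~\eqref{kxi}, which bounds $\mathbb{E}\big[(G^{Q_1}_t-G^{Q_2}_t)^2\big]$ by $\sum_{\gamma}\frac{\sigma_{\alpha\gamma}^2}{\lambda_{\alpha}^2}\int\vert S_{\alpha\gamma}(x^{\gamma}_{t-\taag})-S_{\alpha\gamma}(y^{\gamma}_{t-\taag})\vert^2 d\xi(x,y)$ and then feeds into your Gr\"onwall/Picard iteration as intended.
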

This theorem will be demonstrated in section~\ref{sec:LimitIdentification}.

Eventually, we will show that the unique solution has Gaussian local equilibria (theorem~\ref{pro:GaussianSolutions}), with mean $(\mu^{\alpha}(t))_{\alpha=1\cdots M}$ and covariance $C^{\alpha\beta}(s,t)$ satisfying the well-posed system of deterministic equations:
\begin{equation}\label{eq:MeansSummary}
		\dot{\mu}^{\alpha}(t)=-\frac{1}{\theta_{\alpha}} \mu^{\alpha}(t) + \sum_{\gamma=1}^M \bar{J}_{\alpha\gamma}f_{\alpha\gamma}(\mu^{\gamma}(t-\tau_{\alpha\gamma}), C^{\alpha\alpha}(t-\tau_{\alpha\gamma}, t-\tau_{\alpha\gamma}))
	\end{equation}
	where $f_{\alpha\gamma}(\mu,v)=\int_{\R} S_{\alpha\gamma}(x) \frac{e^{-(x-\mu)^2/2v}}{\sqrt{2\pi v}} dx$. The covariance is equal to zero when $\beta\neq \alpha$ and:
	\begin{equation}
		C^{\alpha}(t,s)=e^{-(t+s)/\theta_{\alpha}}\Big[C^{\alpha}(0,0)+ \frac{\ta \la^2}{2} (\exp{2(t\wedge s)/\ta} -1) + \sum_{\gamma=1}^P \sigma_{\alpha\gamma}^2\int_{0}^t\int_{0}^s e^{(u+v)/\theta_{\alpha}} \Delta_{\mu,C}^{\alpha\gamma}(u-\tau_{\alpha\gamma},v-\tau_{\alpha\gamma}) dudv\Big]
	\end{equation}
	where $\Delta_{\mu,C}^{\alpha\gamma}(u,v)=\mathbb{E}\Big[S_{\alpha\gamma}(X_{u}^{\gamma})S_{\alpha\gamma}(X_{v}^{\gamma})\Big]$ is a nonlinear function of $\mu^{\gamma}(u)$, $\mu^{\gamma}(v)$, $C^{\gamma\gamma}(u,v)$, $C^{\gamma\gamma}(u,u)$ and $C^{\gamma\gamma}(v,v)$.
	
	In other words, the solution can be written, in law, as the solution of an implicit equation:
	\begin{equation}\label{eq:MFEwithUSummary}
		d\bar{X}^{\alpha}_t = \left(-\frac{1}{\theta_{\alpha}}\bar{X}^{\alpha}_t + U^{\alpha,\bar{X}}_t\right)\,dt+\lambda_{\alpha}dW^{\alpha}_t
	\end{equation}
	where the processes $(W^{\alpha}_t)$ are independent Brownian motions and the processes $U^{\alpha,\bar{X}}_t$ are Gaussian processes with law as $G^{Q}$ as described in theorem~\ref{thm:Limit}. These characterizations will be very handy to analyze the qualitative dynamics and phase transitions of the system.

	\section{Large Deviations and Mean-Field limits}\label{sec:BenArous}
	\subsection{Construction of the good rate function}
	The aim of this section is to identify a good rate function for the system that we will use in Theorem \ref{thm:LDP} to show a large deviation principle. In order to introduce our good rate function, it is convenient to analyze for a moment a time-discretization of the equation (section~\ref{sec:DiscreteTime}), which will expedite the analysis of our continuous time problem.

	\subsubsection{Analysis of the discrete time system}\label{sec:DiscreteTime}
	Given an integer $k$, we define $\Delta^k = \big\{0=t_0 < t_1 < ... < t_k < t_{k+1}=T \big\}$ a partition of $[0,T]$ and consider the following dynamics for the neurons of population $\alpha$:
	\begin{equation}\label{eq:DiscretizedNetworkAlpha}
		\begin{cases}
				dx^i_t=\left(-\frac{1}{\ta}x^i_t + \sum_{\gamma=1}^M \sum_{j: p(j)=\gamma} J_{ij} S_{\alpha p(j)}(x^j_{\tk-\tau_{\alpha p(j)}}) \right)\,dt+ \la dW^{i}_t\\
	            \tk= \sup\big\{ t_l \in \Delta^k | t_l \leq t \big\}\\
				\text{Law of } (x_t^{\alpha})_{t\in[-\tau,0]} = \mu_{\alpha}^{\otimes \Na}
		\end{cases}
	\end{equation}

	As in Proposition \ref{pro:ExistenceUniquenessNetwork}, this system clearly admits a unique weak solution for any $J \in \R^{N \times N}$. We will denote $\Q^{\Na,n}(J)$ its restriction to the $\sigma$-algebra $\sigma(x^i_s, 1\leq i \leq N ; p(i)=\alpha, -\tau\leq s\leq T)$, and $Q^{\Na,n}=\mathbbm{E}_J(\Q^{\Na,n}(J))$. They are both probability measures on $\C([-\tau,T],\R)^n$. By Girsanov Theorem, $\Q^{\Na,n}(J) \ll \pa^{\otimes n}$ with:
	\begin{multline*}
	\der{\Q^{\Na,n}(J)}{{\pa}^{\otimes n}}   =   \exp  \bigg\{\sum_{i:p(i)=\alpha}  \int_{0}^{T}  \Big(\frac{1}{\la}  \sum_{\gamma=1}^{P} \sum_{j:p(j)=\gamma}   J_{ij} \Sag(x_{\tk-\taag}^j)\Big)   dW_{t}^{i}  \\
	-  \int_{0}^{T} \Big(\frac{1}{\la} \sum_{\gamma=1}^{P} \sum_{j:p(j)=\gamma} J_{ij} \Sag(x_{\tk-\taag}^j)\Big)^2 dt \bigg\}.
	\end{multline*}
	For every $\mu \in {\M_1^+(\C)}$, the relative entropy with respect to $P$ is defined by:
	\begin{equation*}
	I(\mu|P) =
	\begin{cases}
		\displaystyle{\int \log{ \frac{d\mu}{dP} } }d\mu  &  \text{if }  \mu \ll P,\\
		\infty & \text{otherwise }.
	\end{cases}
	\end{equation*}
	We introduce, for $\mu \in \M_1^+(\C)$, the two following functions, respectively defined on $[0,T]^2$ and $[0,T]$:
	\begin{equation*}
	\begin{cases}
		K_{\mu}^{\alpha}(s,t)&=\displaystyle{\frac{1}{\la^2} \sum_{\gamma=1}^P \sag^2 \int_{\C} \Sag(x^{\gamma}_{t-\taag})\Sag(x^{\gamma}_{s-\taag}) d\mu(x)}\\
		m_{\mu}^{\alpha}(t) &=\displaystyle{\frac{1}{\la} \sum_{\gamma=1}^{P} \Jag \int_{\C} \Sag(x^{\gamma}_{t-\taag}) d\mu(x)}
	\end{cases}.
	\end{equation*}

	Remark that, since $\Sag$ takes value in $[0,1]$, both functions are bounded: $K^{\alpha}_{\mu}(s,t) \leq \frac{\ka}{\la^2}$ and $m^{\alpha}_{\mu}(t) \leq \frac{\Ja}{\la}$, with $\ka=\sum_{\gamma=1}^P \sag^2$ and $\Ja = \sum_{\gamma=1}^P |\Jag|$.

	We now define
	\begin{equation*}
	K_{\mu}(s,t) =
	\left(
	\begin{array}{cccc}
	K_{\mu}^1(s,t) & 0 & \ldots & 0 \\
	0 & K_{\mu}^2(s,t) & \ddots & \vdots \\
	\vdots & \ddots & \ddots & 0\\
	0 & \ldots & 0 & K_{\mu}^M(s,t)
	\end{array}
	\right)
	\end{equation*}
	It is well known that we can find a $M$-dimensional stochastic process $ \Gv = (G^{\alpha})_{\alpha=1\cdots M}$ on $(\Omega,\mathcal{F},\P)$ such that, for any $M\times M$ variance-covariance matrix $K$ on $[0,T]^2$, there exists a probability measure $\gamma_K$, under which $\Gv$ is a centered gaussian process with covariance $K$. We shall use the shorthand notation $\gamma_{\mu}$ for $\gamma_{K_{\mu}}$, and ${\E}_{\mu}$ the expectation under $\gamma_{\mu}$. We now define for $\mu \in \M_1^+(\C)$:
	\begin{multline*}
	\Gamma^{k}(\mu) = \int_{\C} \log\bigg( \int \exp\bigg\{ \sum_{l=0}^k \big(\Gv_{t_l}(\omega)+\m_{\mu}(t_l)\big)' \cdot \big(\W_{t_{l+1}} - \W_{t_l}\big)(x) \\- \frac{1}{2} \sum_{l=0}^k \Big\|\Gv_{t_l}(\omega)+\m_{\mu}(t_l)\Big\|^2(t_{l+1}-t_l)\bigg\}  d\gamma_{K_{\mu}}(\omega) \bigg) d\mu(x)
	\end{multline*}

	\begin{multline*}
	\Gamma^{\alpha,k}(\mu) = \int_{\C} \log\bigg( \int \exp\bigg\{ \sum_{l=0}^k \big(G_{t_l}^{\alpha}(\omega)+m_{\mu}^{\alpha}(t_l)\big)\big(W^{\alpha}_{t_{k+1}} - W^{\alpha}_{t_l}\big)(x) \\- \frac{1}{2} \sum_{l=0}^k \big(G_{t_l}^{\alpha}(\omega)+m_{\mu}^{\alpha}(t_l)\big)^2(\omega)(t_{l+1}-t_l)\bigg\}  d\gamma_{K_{\mu}}(\omega) \bigg) d\mu(x)
	\end{multline*}
	where $\W_t(x)= \left( W^{\alpha}_t(x)=\frac{x^{\alpha}_t-x^{\alpha}_0}{\la} + \int_0^t \frac{x^{\alpha}_s}{\ta \la} ds\right)_{\alpha=1\cdots M}$, for $x \in \C$. Eventually, we define the function:
	\begin{align*}
	H^k (\mu) =
	\left\{
	\begin{array}{rl}
	I(\mu|P) - \Gamma^k(\mu) & \text{if } I(\mu|P) < \infty,\\
	\infty & \text{otherwise }.
	\end{array}
	\right.
	\end{align*}

	One can easily see that $\Gamma^{k}= \sum_{\alpha=1}^M \Gamma^{\alpha,k}$, as the components of $\Gv$ are independent under $\gamma_{\mu}$.

	We further define:
	 \[
	\Gamma_1^k(\mu)= \log\Big( \int \exp\big( - \frac{1}{2} \int_0^T \big\|\Gv_{\tk}(\omega)\big\|^2dt \big) \\ d\gamma_{K_{\mu}}(\omega)\Big)  - \frac{1}{2} \int_0^T \big\|\m_{\mu}(\tk)\big\|^2 dt
	 \]
	\[
	\Gamma_2^k(\mu)= \frac{1}{2} \int \int \Big( \int_0^T \Gv_{\tk}' \cdot d\W_t(x) - \m_{\mu}(\tk)dt) \Big)^2 d\gamma_{\widetilde{K}_{\mu}^{T,k}} d\mu(x) + \\ \int \int \m_{\mu}(\tk)' \cdot d\W_t (x) d\mu(x)
	\]

	where
	\[
	\widetilde{K}_{\mu}^{t,k}(s,u)=\Big(\int \frac{ \exp\Big\{ - \frac{1}{2} \int_0^t \big(G^{\alpha}_{\uk}(\omega)\big)^2+\mathbf{1}_{\alpha \neq \gamma}\big(G^{\gamma}_{\uk}(\omega)\big)^2du\Big\} G^{\gamma}_{\uk}(\omega)G^{\alpha}_{\sk}(\omega) }{\int \exp\Big\{ - \frac{1}{2} \int_0^t (G^{\alpha}_{\uk}(\omega)\big)^2+\mathbf{1}_{\alpha \neq \gamma}\big(G^{\gamma}_{\uk}(\omega)\big)^2du\Big\} d\gamma_{\mu}} d\gamma_{\mu} \Big)_{\alpha, \gamma \in \{1 \cdots M\}}.
	\]
	One can easily see that this function takes values in the $M\times M$ diagonal positive matrices. Moreover, we can define $\gamma_{\widetilde{K}_{\mu}^{T,k}}$, probability measure on $\Omega$, such that
	\[d\gamma_{\widetilde{K}_{\mu}^{T,k}}=  \frac{\prod_{\alpha=1}^M \exp\Big\{ - \frac{1}{2} \int_0^T \big(G^{\alpha}_{\tk}(\omega)\big)^2 dt\Big\}}{\int \prod_{\alpha=1}^M \exp\Big\{ - \frac{1}{2} \int_0^T \big(G^{\alpha}_{\tk}(\omega)\big)^2 dt\Big\} d\gamma_{\mu}} d\gamma_{\mu},\]
	under which $\Gv$ is a $M$-dimensional centered gaussian process with covariance $\widetilde{K}_{\mu}^{T,k}$ (this Gaussian calculus property is proved for instance in ~\cite[Appendix A]{ben-arous-guionnet:95}).

	\begin{proposition} \label{pro:DécompositionGamma}
		We have:
	\[
	\Gamma^{k}(\mu)= \Gamma_1^{k}(\mu) + \Gamma_2^{k}(\mu)
	\]
	\end{proposition}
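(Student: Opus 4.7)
The plan is to factor the inner Gaussian integral so that the $\Gv$-dependent quadratic part is absorbed by a change of measure, after which the remaining functional is linear in $\Gv$ and can be evaluated via the Gaussian moment generating function.

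First I would expand $\|\Gv_{t_l}+\m_\mu(t_l)\|^2$ in the exponent defining $\Gamma^k(\mu)$ and rewrite the Riemann sums as integrals of piecewise-constant functions of $\tk$. The exponent then splits cleanly as $A(x,\omega)+B(x)-C(\omega)-D(\omega)-E$ with
\begin{align*}
A(x,\omega) &= \int_0^T \Gv_{\tk}'\,d\W_t(x), & B(x) &= \int_0^T \m_\mu(\tk)'\,d\W_t(x),\\
C(\omega) &= \tfrac12\int_0^T\|\Gv_{\tk}\|^2\,dt, & D(\omega) &= \int_0^T \Gv_{\tk}'\m_\mu(\tk)\,dt,
\end{align*}
and $E=\tfrac12\int_0^T\|\m_\mu(\tk)\|^2\,dt$. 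Since $B$ and $E$ do not depend on $\omega$, they factor out of the $d\gamma_{K_\mu}$-integral as the additive contribution $B(x)-E$ inside the logarithm.

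Next I would absorb the factor $e^{-C}$ into the Gaussian measure by writing
$$\int e^{A-D-C}\,d\gamma_{K_\mu}=Z\int e^{A-D}\,d\gamma_{\widetilde{K}_\mu^{T,k}},\qquad Z=\int e^{-C}\,d\gamma_{K_\mu}.$$
The Gaussian-calculus identity used in Appendix A of \cite{ben-arous-guionnet:95} ensures that tilting a centered Gaussian law by the exponential of a negative semidefinite quadratic form yields another centered Gaussian, whose covariance is by definition the matrix of second moments under the tilted law, exactly the expression defining $\widetilde{K}_\mu^{T,k}$. Since $K_\mu$ is block-diagonal, the components $G^\alpha$ are independent under $\gamma_{K_\mu}$ and the tilting factorizes over $\alpha$, which is the role played by the indicator $\mathbf{1}_{\alpha\neq\gamma}$ in the cross-indexed normalization and which forces the off-diagonal entries of $\widetilde{K}_\mu^{T,k}$ to vanish. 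The residual integral $\int e^{A-D}\,d\gamma_{\widetilde{K}_\mu^{T,k}}$ is the Laplace transform of the centered Gaussian variable $A-D=\int_0^T\Gv_{\tk}'(d\W_t(x)-\m_\mu(\tk)\,dt)$, so it equals $\exp\bigl\{\tfrac12\,\E_{\widetilde K}[(A-D)^2]\bigr\}$.

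Taking logarithms and integrating against $d\mu(x)$ would then give
$$\Gamma^k(\mu)=(\log Z-E)+\int B(x)\,d\mu(x)+\tfrac12\int \E_{\widetilde K}[(A-D)^2]\,d\mu(x),$$
the first parenthesis being precisely $\Gamma_1^k(\mu)$ and the remaining two terms reproducing the two summands of $\Gamma_2^k(\mu)$. The main obstacle will be justifying the change-of-measure step cleanly: verifying that tilting $\gamma_{K_\mu}$ by $e^{-C}$ produces exactly the centered Gaussian with covariance $\widetilde{K}_\mu^{T,k}$ as written, and reconciling the somewhat awkward cross-indexed normalization in the definition of $\widetilde{K}_\mu^{T,k}$ with the product structure of the tilting. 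After the discretization this reduces to a finite-dimensional Gaussian computation, but it still requires careful bookkeeping of the indices $\alpha,\gamma$.
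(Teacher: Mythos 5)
Your proposal is correct and follows essentially the same route as the paper: split the exponent, factor out the terms not depending on $\omega$, tilt $\gamma_{K_\mu}$ by the quadratic factor $e^{-\frac12\int\|\Gv_{\tk}\|^2dt}$ to obtain $\gamma_{\widetilde K_\mu^{T,k}}$ (the Gaussian-calculus property cited from Appendix A of Ben Arous--Guionnet), and evaluate the remaining Laplace transform of the centered linear functional $\int_0^T \Gv_{\tk}'(d\W_t-\m_\mu(\tk)dt)$ to produce the squared-expectation term of $\Gamma_2^k$. The only cosmetic difference is that the paper carries out this computation population by population (for each $\Gamma^{\alpha,k}$) and then sums over $\alpha$, whereas you work directly in vector form, using the diagonality of $K_\mu$ and of the tilted covariance exactly where the paper does.
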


	\begin{proof}
	Let
	 \[
	\Gamma_1^{\alpha,k}(\mu)= \log\Big( \int \exp\big( - \frac{1}{2} \int_0^T {G^{\alpha}_{\tk}}^2(\omega)dt \big) \\ d\gamma_{K_{\mu}}(\omega)\Big)  - \frac{1}{2} \int_0^T (m_{\mu}^{\alpha}(\tk))^2 dt
	 \]
	\[
	\Gamma_2^{\alpha,k}(\mu)= \frac{1}{2} \int \int \Big( \int_0^T G_{\tk}^{\alpha} (dW_t^{\alpha}(x) - m^{\alpha}_{\mu}(\tk)dt) \Big)^2 d\gamma_{\widetilde{K}_{\mu}^{T,k}} d\mu(x) + \\ \int \int m_{\mu}^{\alpha}(\tk)dW^{\alpha}_t (x) d\mu(x)
	\]
	For $\Gamma_i^k= \sum_{\alpha=1}^M  \Gamma_i^{\alpha,k}, i\in \{1,2\}$, it is sufficient to prove that
	\[
	\Gamma^{\alpha,k}(\mu)= \Gamma_1^{\alpha,k}(\mu) + \Gamma_2^{\alpha,k}(\mu).
	\]
	But,
	\begin{align*}
	\Gamma^{\alpha, k}(\mu)  = & \int \log\bigg( \int \exp\bigg\{ \int_0^T \big(G^{\alpha}_{\tk}(\omega)+m_{\mu}^{\alpha}(\tk)\big)dW_t^{\alpha}(x) \\
	& - \frac{1}{2} \int_0^T \big(G_{\tk}^{\alpha}(\omega)+m_{\mu}^{\alpha}(\tk)\big)^2 dt \bigg\}  d\gamma_{K_{\mu}}(\omega) \bigg) d\mu(x)\\
	= & \int \log\bigg\{ \bigg( \exp\Big\{-\frac{1}{2} \int_0^T (m^{\alpha}_{\mu}(\tk))^2dt \Big\} \int \exp\Big\{-\frac{1}{2}\int_0^T (G^{\alpha}_{\tk})^2dt \Big\} d\gamma_{\mu}  \bigg) \\
	& \times \bigg( \exp\Big\{ \int_0^T m_{\mu}^{\alpha}(\tk) dW_t^{\alpha}\Big\} \int \exp\Big\{ \int_0^T G_{\tk}^{\alpha}\big(dW_t^{\alpha} - m_{\mu}^{\alpha}(\tk)dt\big) \Big\} d\gamma_{\widetilde{K}_{\mu}^{T,k}} \bigg) \bigg\} d\mu \\
	= & \log\bigg\{ {\E}_{\mu}\bigg[\exp{ \bigg( -\frac{1}{2} \int_0^T (G^{\alpha}_{\tk})^2 dt \bigg) } \bigg] \bigg\} - \frac{1}{2}\int_0^T (m^{\alpha}_{\mu}(\tk))^2 dt + \int \int_0^T m_{\mu}^{\alpha}(\tk) dW_t^{\alpha} d\mu \\
	& +\int \log\bigg\{ \int \exp{ \bigg(\int_0^T G_{\tk}^{\alpha}\big(dW_t^{\alpha} - m_{\mu}^{\alpha}(\tk)dt\big) \bigg) } d\gamma_{\widetilde{K}_{\mu}^{T,k}} \bigg\} d\mu
	\end{align*}
	Standard gaussian calculus yields:
	\begin{equation*}
	\int \exp{ \bigg(\int_0^T G_{\tk}^{\alpha}\big(dW_t^{\alpha} - m_{\mu}^{\alpha}(\tk)dt\big) \bigg) } d\gamma_{\widetilde{K}_{\mu}^{T,k}} = \exp\bigg\{ \frac{1}{2} \int \bigg( \int_0^T G_{\tk}^{\alpha} \big(dW_t^{\alpha}-m_{\mu}^{\alpha}(\tk)dt\big) \bigg)^2 d\gamma_{\widetilde{K}_{\mu}^{T,k}} \bigg\}
	\end{equation*}
	so that,
	\begin{align*}
	\Gamma^{\alpha, k}(\mu) & = \Gamma^{\alpha, k}_1(\mu) \\
	&+ \int \int_0^T m_{\mu}^{\alpha}(\tk) dW_t^{\alpha} d\mu + \int \log\bigg\{ \exp\bigg\{ \frac{1}{2} \int \bigg( \int_0^T G_{\tk}^{\alpha} \big(dW_t^{\alpha}-m_{\mu}^{\alpha}(\tk)dt\big) \bigg)^2 d\gamma_{\widetilde{K}_{\mu}^{T,k}} \bigg\} \bigg\} d\mu\\
	& = \Gamma_1^{\alpha, n}(\mu) + \int \int_0^T m_{\mu}^{\alpha}(\tk) dW_t^{\alpha} d\mu +  \frac{1}{2} \int \int \bigg( \int_0^T G_{\tk}^{\alpha} \big(dW_t^{\alpha}-m_{\mu}^{\alpha}(\tk)dt\big) \bigg)^2 d\gamma_{\widetilde{K}_{\mu}^{T,k}}  d\mu
	\end{align*}
	which concludes the proof.
	\end{proof}



	For all $\mu, \nu \in \M_1^+(\C)$, let:

	\begin{multline*}
	\Gamma^{k}_{\nu}(\mu) = \int_{\C} \log\bigg( \int \exp\bigg\{ \int_0^T \big(\Gv_{\tk}(\omega)+\m_{\nu}(\tk)\big)' \cdot d\W_t(x) \\
	- \frac{1}{2} \int_0^T \Big\|\Gv_{\tk}(\omega)+\m_{\nu}(\tk)\Big\|^2dt\bigg\}  d\gamma_{K_{\nu}}(\omega) \bigg) d\mu(x)
	\end{multline*}

	\begin{align*}
	& \Gamma_{2,\nu}^{\alpha, k}(\mu) = \frac{1}{2} \int \int \Big( \int G^{\alpha}_{\tk} (dW^{\alpha}_t - m^{\alpha}_{\nu}(\tk)dt) \Big)^2 d\gamma_{\widetilde{K}_{\nu}^{T,n}} d\mu + \int \int m^{\alpha}_{\nu}(\tk)dW^{\alpha}_t d\mu \\
    & \Gamma_{2,\nu}^{k} = \sum_{\alpha=1}^M \Gamma_{2,\nu}^{\alpha, k} \\
	& \Gamma_{\nu}^{\alpha, k}(\mu) =\Gamma_1^{\alpha, k}(\nu) + \Gamma_{2,\nu}^{\alpha, k}(\mu)
	\end{align*}

	One can easily see that $\Gamma^{k}_{\nu} = \sum_{\alpha=1}^M \Gamma_{\nu}^{\alpha, k}$.
	Let

	\begin{align*}
	H^{k}_{\nu} : \M_1^+(\C) & \rightarrow \R^+\\
	\mu & \mapsto
	\begin{cases}
		\displaystyle{I(\mu|P) - \Gamma_{\nu}^{k}(\mu)} & \text{if } I(\mu|P) < \infty,\\
		\infty & \text{otherwise }.	
	\end{cases}
	\end{align*}
	We eventually denote by $d_T$ the Vaserstein distance on $M_1^+(\C)$, compatible with the weak topology:
	\begin{equation*}
	d_T(\mu,\nu)=\inf_{\xi}\bigg\{ \int \sup_{-\tau \leq t \leq T; \gamma \in \{ 1,...,M\} } |x^{\gamma}_t-y^{\gamma}_t|^2 d\xi(x,y) \bigg\}^{\frac{1}{2}}
	\end{equation*}
	the infimum being taken on the laws $\xi$ with marginals $\mu$ and $\nu$.

	We now show the following regularity properties on the introduced functions:
	\begin{lemma}\label{lemma1}
	\begin{enumerate}
	\item There exists a positive constant $C_T$, depending on T but not on n,
	such that: $|\Gamma_1^{k}(\mu)-\Gamma_1^{k}(\nu)| \leq C_T d_T(\mu,\nu)$.
	\item $\Gamma^{k} \leq I(.|P)$ i.e. $H^k$ is a positive function. In particular, $\Gamma^{k}$ is finite whenever $I(.|P)$ is.
	\item There exists real constants $a < 1$ and $\eta >0$ such that $\Gamma^{k}\leq a I(.|P) + \eta$.
	\item There exists a positive constant $C_T$, depending on T but not on n,
	such that: $|\Gamma_{2,\nu}^{k}(\mu)-\Gamma_2^{k}(\mu)| \leq C_T \big(1+I(\mu|P)\big) d_T(\mu,\nu)$.
	\item Defining the following probability measure on $\M_1^+(\C)$:
	\begin{align*}
	dQ_{\nu}^k(x) & = \exp{\Gamma_{\nu}^{k}(\delta_x)}dP(x)  \\
	& = \int \exp{ \Bigg( \int_0^T \big( \Gv_{\tk} + \m_{\nu}(\tk) \big)' \cdot \, d\W_t(x) - \frac{1}{2} \int_0^T \Big\| \Gv_{\tk} + \m_{\nu}(\tk)\Big\|^2 dt \Bigg) } \, d\gamma_{\nu} \, dP(x)
	\end{align*}
	 we have $H_{\nu}^{k}=I(.|Q_{\nu}^k)$, so that $H_{\nu}^k$ is lower semi-continuous on $\M_1^+(\C)$.
	\item $H^k$ is a good rate function.
	\end{enumerate}
	\end{lemma}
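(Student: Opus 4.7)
The strategy is to exploit the interpretation of $\Gamma_\nu^k$ as the integral of a log Radon--Nikodym derivative, and to derive every bound from two main ingredients: the Girsanov formula (which identifies $Q_\nu^k$ as a probability measure) and the Donsker--Varadhan variational inequality (which controls linear functionals of $\mu$ in terms of $I(\mu|P)$). I would start with (5). For each fixed realisation $\omega$ of $\Gv$, the inner integrand under $\exp(\Gamma_\nu^k(\delta_x))$ is precisely the Girsanov exponential martingale associated with the drift $\Gv_{\tk}+\m_\nu(\tk)$ under $P$; Fubini combined with Girsanov then gives $\int \exp(\Gamma_\nu^k(\delta_x))\,dP(x)=1$, so $Q_\nu^k$ is a probability measure and, by construction, $\Gamma_\nu^k(\mu)=\int\log(dQ_\nu^k/dP)\,d\mu$ whenever $\mu\ll P$. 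Hence $H_\nu^k(\mu)=I(\mu|Q_\nu^k)$, which is non-negative and lower semi-continuous by the classical properties of relative entropy. Taking $\nu=\mu$ and noting that $\Gamma^k(\mu)=\Gamma_\mu^k(\mu)$ then yields (2).

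For the two Lipschitz estimates (1) and (4), the key observation is that $\m_\mu$, $K_\mu$ and $\widetilde K^{T,k}_\mu$ depend on $\mu$ only through integrals of the bounded Lipschitz functions $S_{\alpha\gamma}$, so the map $\mu\mapsto(\m_\mu,K_\mu,\widetilde K^{T,k}_\mu)$ is itself Lipschitz with respect to $d_T$, with a constant depending only on the model. Since $\Gamma_1^k$ is a smooth function of $(\m_\mu,K_\mu)$ ranging in a bounded set, (1) follows by differentiating under the Gaussian integral $\gamma_{K_\mu}$. For (4), the differences $\Gamma_{2,\nu}^{k}(\mu)-\Gamma_{2}^{k}(\mu)$ split into two types of contributions: uniformly bounded quadratic Gaussian expectations, which produce a $C_T\,d_T(\mu,\nu)$ term, and a linear functional of $\mu$ of the form $\int\int_0^T(\m_\mu-\m_\nu)(\tk)'\,d\W_t(x)\,d\mu(x)$. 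Applying Donsker--Varadhan to this latter piece, together with the Gaussian moment bound under $P$, produces exactly the $(1+I(\mu|P))\,d_T(\mu,\nu)$ factor.

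The hardest step is (3): obtaining $\Gamma^k(\mu)\le a\,I(\mu|P)+\eta$ with some $a<1$. Using Proposition~\ref{pro:DécompositionGamma}, $\Gamma_1^k$ is uniformly bounded in $\mu$, and the quadratic Gaussian piece in $\Gamma_2^k$ is likewise uniformly bounded because $\widetilde K^{T,k}_\mu$ is. The only contribution that scales with the entropy is the linear stochastic-integral piece $\int\int_0^T\m_\mu(\tk)'\,d\W_t\,d\mu$, to which I would apply the Donsker--Varadhan inequality at a well-chosen $\lambda>1$: since under $P$ this stochastic integral is Gaussian with variance bounded uniformly in $\mu$ by some constant $C$, one obtains $\int\phi\,d\mu\le\tfrac{1}{\lambda}I(\mu|P)+\tfrac{\lambda}{2}C$, yielding $a=1/\lambda<1$. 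The main delicacy will be to carefully handle the cross-terms arising from expanding the square $(\int_0^T G^\alpha_{\tk}(dW^\alpha_t-m^\alpha_\mu(\tk)\,dt))^2$ in $\Gamma_2^k$, and to check that every such term is either $\mu$-uniform or controllable by the same Donsker--Varadhan argument, so that all of them can ultimately be absorbed into a single constant $\eta$.

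Finally, (6) follows from (3) and the lower semi-continuity of $I(\cdot|P)$. The bound $H^k\ge(1-a)\,I(\cdot|P)-\eta$ shows that the level sets of $H^k$ are contained in those of $I(\cdot|P)$, which are compact since $I(\cdot|P)$ is the good rate function of Sanov's theorem; this gives compactness of the level sets of $H^k$. Lower semi-continuity of $H^k$ is then obtained by combining the continuity of $\Gamma^k$ on sub-level sets of $I(\cdot|P)$ (from (1) and (4), together with the decomposition $\Gamma^k=\Gamma_1^k+\Gamma_2^k=\Gamma_1^k+\Gamma_{2,\mu}^k$) with the lower semi-continuity of $I(\cdot|P)$, the entropy lower bound above handling those sequences along which $I(\cdot|P)$ diverges.
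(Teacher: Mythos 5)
You have a genuine gap in point (3), and it stems from a misreading of what is actually bounded. The quadratic piece of $\Gamma_2^{\alpha,k}$, namely $\tfrac12\int\!\int\big(\int_0^T G^{\alpha}_{\tk}(dW^{\alpha}_t(x)-m^{\alpha}_{\mu}(\tk)dt)\big)^2 d\gamma_{\widetilde{K}_{\mu}^{T,k}}\,d\mu(x)$, is \emph{not} uniformly bounded in $\mu$: boundedness of the kernel $\widetilde{K}_{\mu}^{T,k}$ only controls the Gaussian average over $\omega$, while for fixed $x$ the inner expectation is a positive quadratic form in the increments of $W(x)$, hence an unbounded functional of the path, and its $\mu$-integral genuinely scales with $I(\mu|P)$ — this is precisely why the paper's point (iv) carries a $(1+I(\mu|P))$ factor, obtained by applying the entropy inequality to $(\int h_t\,dW^{\alpha}_t)^2$ (inequality \eqref{ineqphi}). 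Your proposed repair, Donsker--Varadhan at a large parameter, works for the linear drift term but fails for this quadratic one: under $P$ the square of a Wiener integral has exponential moments only below a threshold of order $(\int_0^T h_t^2dt)^{-1}$, so the coefficient you can place in front of $I(\mu|P)$ for that term is bounded below by a quantity of order $\ka T/\la^2$, which need not be smaller than $1$. The decomposition route therefore does not deliver $a<1$ in general. The paper proves (iii) without decomposing: it applies the entropy inequality to $aF_{\mu}$ with $a>1$, integrates the stochastic exponential in $x$ under $P$ first (Girsanov cancellation), so that only $\exp\{\frac{a^2-a}{2}\int_0^T\|\Gv_{\tk}+\m_{\mu}(\tk)\|^2dt\}$ survives, and this has $\gamma_{\mu}$-expectation bounded uniformly in $\mu$ as soon as $a-1$ is small, by the Ben-Arous--Guionnet exponential-moment estimate (their Lemma A.3(2)). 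That cancellation is the idea missing from your argument.

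The same misconception appears in your point (4) (``uniformly bounded quadratic Gaussian expectations, which produce a $C_T\,d_T$ term''): in the paper all the blocks $B_1$--$B_4$, not just the linear term, require the entropy inequality and contribute the $(1+I(\mu|P))$ factor; since the target bound allows that factor, this is repairable with the tools you already invoke, unlike (3). In (5), the identity $H_{\nu}^{k}=I(\cdot|Q_{\nu}^k)$ ``by construction'' is too quick: one must rule out an $\infty-\infty$ ambiguity in $I(\mu|P)-\int\log(dQ^k_{\nu}/dP)\,d\mu$, which the paper does by first checking $I(Q_{\nu}^k|P)<\infty$ through the Girsanov representation of the density, then localizing with stopping times so that the density is bounded below, and passing to the limit. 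Your derivation of (2) from (5) with $\nu=\mu$ is a legitimate shortcut once (5) is made rigorous, and your treatment of (1) and (6) is essentially the paper's, up to writing out the coupling estimate that makes $\mu\mapsto(\m_{\mu},K_{\mu})$ Lipschitz for $d_T$.
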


	This technical lemma is proved in appendix~\ref{append:ProofLemma1}

	\subsubsection{Analysis in continuous time}

	In this section, we shall prove theorem~\ref{thm:Convergence} as a consequence of theorems~\ref{thm:LDP} and~\ref{thm:Tightness}. We start by extending the function constructed in the previous section to our continuous time setting. To this purpose, let us define
	\begin{align*}
	\Gamma : & \{\mu \in \M_1^+(\C) | I(\mu|P) < \infty\} \to \R \\
	& \mu \to \int \log{\bigg(\int \exp{\bigg(\int_0^T \big(\Gv_t + \m_{\mu}(t)\big)' \cdot d\W_t - \frac{1}{2}\int_0^T \Big\|\Gv_t + \m_{\mu}(t)\Big\|^2 dt \bigg)} d\gamma_{\mu} \bigg)} d\mu,
	\end{align*}
	\begin{align*}
	\Gamma^{\alpha} : & \{\mu \in \M_1^+(\C) | I(\mu|P) < \infty\}\to \R \\
	& \mu  \to \int \log{\bigg(\int \exp{\bigg(\int_0^T \big(G^{\alpha}_t + m_{\mu}^{\alpha}(t)\big) dW^{\alpha}_t - \frac{1}{2}\int_0^T \big(G^{\alpha}_t + m_{\mu}^{\alpha}(t)\big)^2 dt \bigg)} d\gamma_{\mu} \bigg)} d\mu,
	\end{align*}
	and
	\begin{align*}
	H(\mu) =
	\left\{
	\begin{array}{rl}
	I(\mu|P) - \Gamma(\mu) & \text{if } I(\mu|P) < \infty,\\
	\infty & \text{otherwise }.
	\end{array}
	\right.
	\end{align*}

	Remark that, as for the discrete time case, $\Gamma= \sum_{\alpha=1}^M \Gamma^{\alpha}$.

	\begin{proposition}
	\begin{enumerate}
	\item On the compact set $K_L = \{\mu \in \M_1^+(\C)| I(\mu|P) \leq L \}$, $\Gamma^k$ converges uniformly to $\Gamma$.
	\item $\forall \mu \in K_L, \; \Gamma^{\alpha}(\mu) = \Gamma^{\alpha}_1(\mu) + \Gamma^{\alpha}_2(\mu)$ where
	\begin{equation*}
	\Gamma^{\alpha}_1(\mu) = \log{\int \exp{\bigg( -\frac{1}{2} \int_0^T {G^{\alpha}_t}^2 dt \bigg) } d\gamma_{\mu}} -\frac{1}{2} \int_0^T {m^{\alpha}_{\mu}}^2(t) dt,
	\end{equation*}
	\begin{equation*}
	\Gamma^{\alpha}_2(\mu)= \frac{1}{2}\int\int \bigg(\int_0^T G^{\alpha}_t \big(dW^{\alpha}_t - m_{\mu}^{\alpha}(t) dt \big) \bigg)^2 d\gamma_{\widetilde{K}^T_{\mu}} d\mu + \int\int_0^T m_{\mu}^{\alpha}(t)dW^{\alpha}_t d\mu
	\end{equation*}
	\item $\Gamma \leq I(.|P)$ and $\exists a < 1, \eta > 0 | \; \Gamma \leq a I(.|P) + \eta$.
	\item $H$ is a good rate function.
	\end{enumerate}
	\end{proposition}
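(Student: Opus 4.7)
The plan is to establish (1)--(4) in sequence, leveraging the discrete-time analogues in Lemma~\ref{lemma1} and Proposition~\ref{pro:DécompositionGamma} and passing to the limit as the mesh of the partition $\Delta^k$ tends to zero. For (1), the strategy is first to prove pointwise convergence $\Gamma^k(\mu)\to\Gamma(\mu)$ for every $\mu$ with $I(\mu|P)<\infty$, then to upgrade to uniform convergence on $K_L$ by equicontinuity. The only $k$-dependence in $\Gamma^k$ enters through $\tk$ inside $\m_\mu(\cdot)$ and $\Gv_{\cdot}$; since each $\Sag$ is Lipschitz and valued in $[0,1]$, the maps $t\mapsto\m_\mu(t)$ and $(s,t)\mapsto K_\mu(s,t)$ are continuous and uniformly bounded (by $\Ja/\la$ and $\ka/\la^2$), so $\m_\mu(\tk)\to\m_\mu(t)$ and $K_\mu(\sk,\tk)\to K_\mu(s,t)$ pointwise. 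Applying dominated convergence inside the Gaussian integral and then inside the outer $\mu$-integral, with domination coming from those uniform bounds, yields the pointwise limit. To upgrade to uniform convergence on $K_L$, I would extend Lemma~\ref{lemma1}.1 from $\Gamma_1^k$ to the full $\Gamma^k$ using the decomposition of Proposition~\ref{pro:DécompositionGamma} together with Lemma~\ref{lemma1}.4, which produces a Vaserstein-Lipschitz constant of the form $C_T(1+L)$, uniform in $k$; since $K_L$ is weakly compact (relative entropy being a good rate function), equicontinuity plus pointwise convergence forces uniform convergence.

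For (2), I would apply Proposition~\ref{pro:DécompositionGamma} coordinate-wise, so that $\Gamma^{\alpha,k}=\Gamma_1^{\alpha,k}+\Gamma_2^{\alpha,k}$, and let $k\to\infty$. The left-hand side converges to $\Gamma^\alpha(\mu)$ by (1). On the right, $\Gamma_1^{\alpha,k}(\mu)\to\Gamma_1^\alpha(\mu)$ follows from Riemann-sum convergence of the bounded integrals $\int_0^T (m^\alpha_\mu(\tk))^2\,dt$ and, under $\gamma_\mu$, $\int_0^T (G^\alpha_{\tk})^2\,dt$, with the logarithm passing through the limit by dominated convergence. For $\Gamma_2^{\alpha,k}(\mu)\to\Gamma_2^\alpha(\mu)$ the extra ingredient is the convergence of the modified covariance $\widetilde K_\mu^{T,k}\to\widetilde K_\mu^{T}$, which is itself a dominated-convergence argument inside the Girsanov-type ratio defining $\widetilde K_\mu^{T,k}$, using the uniform boundedness of $G^\alpha$ under $\gamma_\mu$.

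Part (3) is obtained by passing the pointwise inequalities $\Gamma^k\leq I(\cdot|P)$ and $\Gamma^k\leq a I(\cdot|P)+\eta$ of Lemma~\ref{lemma1}.2--3 to the limit via (1), the constants $a<1$ and $\eta$ being independent of $k$. For (4), the bound in (3) yields $H(\mu)\geq (1-a)I(\mu|P)-\eta$, so every sublevel set $\{H\leq L\}$ embeds into a sublevel set of the good rate function $I(\cdot|P)$ and is therefore relatively compact; on such a compact set, $H$ is the uniform limit of the lower semi-continuous functions $H^k$ provided by Lemma~\ref{lemma1}.5--6, hence is itself lower semi-continuous there, which closes $\{H\leq L\}$ and makes it compact. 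Since this holds for every $L$, $H$ is lower semi-continuous on all of $\M_1^+(\C)$ and is a good rate function.

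The main obstacle is the uniform convergence asserted in (1): the double-integral structure of $\Gamma^k(\mu)$, with a Gaussian expectation over $\gamma_{K_\mu}$ nested inside a $\mu$-expectation and further nested inside the outer logarithm, must be made compatible with a Vaserstein-type continuity estimate that is uniform in both $k$ and $\mu\in K_L$. Once this equicontinuity is secured, parts~(2)--(4) follow in a relatively mechanical fashion from their discrete-time counterparts and standard properties of the relative entropy.
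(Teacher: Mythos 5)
There is a genuine gap, and it sits exactly at the step you yourself flag as the main obstacle: the uniform convergence in (1). Your route is pointwise convergence plus equicontinuity, with the equicontinuity supposedly supplied by Lemma~\ref{lemma1}.1 together with Lemma~\ref{lemma1}.4. But Lemma~\ref{lemma1}.4 does not give Lipschitz continuity of $\mu\mapsto\Gamma_2^{k}(\mu)$: it bounds $|\Gamma_{2,\nu}^{k}(\mu)-\Gamma_2^{k}(\mu)|$, i.e.\ the effect of freezing the Gaussian parameters $(m_\nu,K_\nu,\widetilde K_\nu^{T,k})$ while the outer integration measure stays equal to $\mu$. To get a genuine Vaserstein--Lipschitz bound for $\Gamma_2^k$ you would in addition have to control $|\Gamma_{2,\nu}^{k}(\mu)-\Gamma_{2,\nu}^{k}(\nu)|$, i.e.\ the variation in the outer $d\mu$ slot, where the integrand contains the stochastic integrals $\int_0^T G^{\alpha}_{\tk}\,dW^{\alpha}_t(x)$ and $\int_0^T m^{\alpha}_\mu(\tk)\,dW^{\alpha}_t(x)$; no such estimate is available in the paper, and it is not obviously true with a constant of the form $C_T(1+L)$ uniform in $k$. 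The pointwise step is also shakier than stated: the exponential of the (discretized) stochastic integral is unbounded, so "domination coming from the uniform bounds on $\m_\mu$ and $K_\mu$" is not enough (one needs exponential moment estimates of Ben Arous--Guionnet type), and pathwise convergence of the Riemann sums $\sum_l G_{t_l}(W_{t_{l+1}}-W_{t_l})$ for $\mu$-a.e.\ $x$ and $\gamma_\mu$-a.e.\ $\omega$ requires regularity of the sample paths of $G$ under $\gamma_\mu$ that is not established.

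The paper's proof avoids both problems by a different mechanism: it shows that $(\Gamma_1^{\alpha,k})_k$ and $(\Gamma_2^{\alpha,k})_k$ are \emph{uniformly Cauchy} on $K_L$. Repeating the computations of Lemma~\ref{lemma1} (i) and (iv) with the diagonal coupling of $\mu$ with itself (same path, evaluated at $\tk$ versus $t^{(k+p)}$) gives
\begin{equation*}
\big|\big(\Gamma_i^{\alpha,k}-\Gamma_i^{\alpha,k+p}\big)(\mu)\big| \leq C_T\big(1+I(\mu|P)\big)\max_{\gamma}\Big(\int \sup_{|t-s|\leq|\Delta_k|}\big|\Sag(x^{\gamma}_t)-\Sag(x^{\gamma}_s)\big|^2 d\mu(x)\Big)^{\frac12},
\end{equation*}
and the crucial point is how this is made small \emph{uniformly} in $\mu\in K_L$: one applies the relative entropy inequality~\eqref{eq:IneqRelativeEntropy} to $a\sup_{|t-s|\leq|\Delta_k|}|\Sag(x^{\gamma}_t)-\Sag(x^{\gamma}_s)|^2$, so that the $\mu$-integral is bounded by $\frac1a\big(I(\mu|P)+\log\int\exp\{a\sup(\cdots)\}dP\big)$, and the $P$-term, which no longer depends on $\mu$, tends to $0$ by bounded convergence because $P$-a.s.\ paths are continuous. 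This entropy trick is the missing idea in your proposal; once it is in place, the identification of the limits as $\Gamma_1^{\alpha}$, $\Gamma_2^{\alpha}$, $\Gamma^{\alpha}$ and your parts (2)--(4) go through essentially as you describe (and as in the paper).
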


	\begin{proof}
	\begin{description}
	\item[(i), (ii)]  Following the same demonstration as in Lemma.\ref{lemma1} (i) and (iv), we find that there exists $C_{1,T}$ and $C_{2,T}$ such that (see \eqref{gamma1} and \eqref{gamma2})
	    \begin{align*}
	    \big|\big(\Gamma_1^{\alpha, k}-\Gamma_1^{\alpha, k+p}\big)(\mu)\big| & \leq C_{1,T}  \max_{\gamma=1,M}  \Big( \int \int_0^T \big|\Sag (x^{\gamma}_{\tk-\taag}) - \Sag (x^{\gamma}_{t^{(k+p)}-\taag})\big|^2 dt \, d\mu(x)\Big)^{\frac{1}{2}} \\
	    & \leq C_{1,T} \sqrt{T}  \max_{\gamma=1,M}  \Big( \int \sup_{|t-s| \leq |\Delta_k|} \big|\Sag (x^{\gamma}_t) - \Sag (x^{\gamma}_s)\big|^2  d\mu(x)\Big)^{\frac{1}{2}} \\
	    \big|\big(\Gamma_2^{\alpha,k}-\Gamma_2^{\alpha, k+p}\big)(\mu)\big| & \leq C_{2,T} \big(I(\mu|P) +1\big) \max_{\gamma=1,M}  \Big( \int \sup_{|t-s| \leq |\Delta_k|} \big|\Sag (x^{\gamma}_t) - \Sag (x^{\gamma}_s)\big|^2  d\mu(x)\Big)^{\frac{1}{2}}
	    \end{align*}
	    But, according to \eqref{eq:IneqRelativeEntropy}, we have for any $a\geq0$
	    \begin{equation*}
	    a \int \sup_{|t-s| \leq |\Delta_k|} \big|\Sag (x^{\gamma}_t) - \Sag (x^{\gamma}_s)\big|^2  d\mu(x) \leq I(\mu|P) + \log{\int \exp{\bigg\{ a \sup_{|t-s| \leq |\Delta_k|} \big|\Sag (x^{\gamma}_t) - \Sag (x^{\gamma}_s)\big|^2 \bigg\}} dP(x) }
	    \end{equation*}
	    And by bounded convergence theorem
	    \begin{equation*}
	    \lim_{k\to \infty} \log{\int \exp{\bigg\{ a \sup_{|t-s| \leq |\Delta_k|} \big|\Sag (x^{\gamma}_t) - \Sag (x^{\gamma}_s)\big|^2 \bigg\}} dP(x) } = 0
	    \end{equation*}
	    Let $\varepsilon>0$, choosing $a=\frac{1}{\varepsilon^2}$, it is easy to see that there exists an integer $k(\varepsilon)$ such that, for $k \geq k(\varepsilon), \forall \gamma \in \{1,...M\}$:
	    \begin{equation*}
	    \int \sup_{|t-s| \leq |\Delta_k|} \big|\Sag (x^{\gamma}_t) - \Sag (x^{\gamma}_s)\big|^2  d\mu(x) \leq  \big(I(\mu|P) + 1\big) \varepsilon^2
	    \end{equation*}
	    Hence, for any $k \geq k(\varepsilon)$, any $p$, and any $\mu \in K_L$:
	    \begin{align*}
	    \big|\big(\Gamma_1^{\alpha,k}-\Gamma_1^{\alpha,k+p}\big)(\mu)\big| & \leq C_T\big(1+L\big)^\frac{1}{2}\varepsilon \\
	    \big|\big(\Gamma_2^{\alpha,k}-\Gamma_2^{\alpha,k+p}\big)(\mu)\big| & \leq C_T\big(1+L\big)^\frac{3}{2}\varepsilon
	    \end{align*}
	    Which shows that $\Gamma_1^{\alpha,k}$, $\Gamma_2^{\alpha,k}$, and thus $\Gamma^{\alpha,k}$ converge uniformly on $K_L$. It is not difficult to see that the respective limits are $\Gamma_1^{\alpha}$, $\Gamma_2^{\alpha}$ and $\Gamma^{\alpha}$, which implies $\Gamma^{\alpha}=\Gamma_1^{\alpha}+\Gamma_2^{\alpha}$ on $K_L$. Besides, as $\Gamma^k = \sum_{\alpha=1}^M \Gamma^{\alpha,k}$ and $\Gamma = \sum_{\alpha=1}^M \Gamma^{\alpha}$, we also have the uniform convergence of $\Gamma^k$ towards $\Gamma$ on $K_L$.
	\item[(iii)] The proof is identical to Lemma\ref{lemma1} (iii) and (iv).
	\item[(iv)] Lets show that $\big\{H \leq L\big\}$ is a compact set.
	$H \geq (1-a) I(|P) - \eta$ so that $I(|P) \leq \frac{H+\eta}{1-a}$. Hence $\big\{H \leq L\big\}\subset \big\{I(|P) \leq \frac{L+\eta}{1-a}\big\}$.
	Let $(\mu_p)_p \in \big\{H \leq L\big\}^\mathbb{N} \subset \big\{I(|P) \leq \frac{L+\eta}{1-a}\big\}^\mathbb{N}$.
	As here $\big\{I(|P) \leq \frac{L+\eta}{1-a}\big\}$ is a compact set, there exists a subsequence $(\mu_{p_m})_m$ such that $\mu_{p_m} \to \mu$ as $m \to \infty$. We conclude by stating that, as $H^k$ converge uniformly towards $H$ on $\big\{I(|P) \leq \frac{L+\eta}{1-a}\big\}$, the latest inherits the lower semi-continuity of the firsts. Hence $\big\{H \leq L\big\}$ is a closed set so that $\mu \in \big\{H \leq L\big\}$ and $(\mu_{p_m})_m$ converges in $\big\{H \leq L\big\}$.
	\end{description}
	\end{proof}


	\begin{lemma}\label{lemma2}

	\begin{equation*}
	\frac{dQ^n}{dP^{\otimes n}} = \exp{\Big\{n \Gamma(\hat{\mu}_n) \Big\} }
	\end{equation*}
	\end{lemma}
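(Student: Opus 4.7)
The plan is to start from the expression $Q^n = \mathbb{E}_J[\mathbb{Q}^n(J)]$, so that $\der{Q^n}{P^{\otimes n}} = \mathbb{E}_J\!\left[\der{\mathbb{Q}^n(J)}{P^{\otimes n}}\right]$, and then use equation~\eqref{eq:Density} and directly compute this Gaussian expectation over the disorder $J$. The central observation is that, for a fixed trajectory $X=(X^i)_{i=1\cdots n}$, the summand indexed by $i$ inside the exponential of~\eqref{eq:Density} depends only on the entries $(J_{i_\alpha j_\gamma})_{\alpha,j,\gamma}$, i.e., on the presynaptic weights arriving at the $i$-th ``block'' of neurons. Since the $J_{ij}$ are independent Gaussian variables, these blocks are mutually independent, and therefore the expectation of the exponential factorizes into a product over $i$.

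Then, for each fixed $i$ and fixed $X$, I would observe that the process $t \mapsto \Sigma^{-1}\sum_{j=1}^n \mathbf{J}_{ij}\odot S(X^j_{t-\btau})$ is a linear functional of a Gaussian vector, hence is itself Gaussian. A direct computation of its first two moments using the law $\mathcal{N}(\bar{J}_{\alpha\gamma}/n,\sigma_{\alpha\gamma}/\sqrt{n})$ of $J_{i_\alpha j_\gamma}$ yields mean vector $\m_{\hat\mu_n}(t)$ and covariance matrix $K_{\hat\mu_n}(s,t)$: the cross terms between populations $\alpha\neq\alpha'$ vanish by independence of the $J$'s, the diagonal covariance $\sum_\gamma \sigma_{\alpha\gamma}^2 \cdot \frac{1}{n}\sum_j S_{\alpha\gamma}(X^{j_\gamma}_{t-\taag})S_{\alpha\gamma}(X^{j_\gamma}_{s-\taag})$ reduces after division by $\la^2$ to precisely $K^\alpha_{\hat\mu_n}(s,t)$, and the mean reduces to $\lambda_\alpha m^\alpha_{\hat\mu_n}(t)$ after the $1/n$ averaging. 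This identification is exactly what plugs into the definition of $\gamma_\mu$.

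Consequently, viewing the disorder integrated against the Brownian motion $\W^i$ as a Gaussian average, the $i$-th factor is
\[
\int \exp\!\Bigl(\int_0^T \bigl(\Gv_t + \m_{\hat\mu_n}(t)\bigr)'\cdot d\W_t(X^i) - \tfrac{1}{2}\int_0^T \|\Gv_t + \m_{\hat\mu_n}(t)\|^2\, dt\Bigr)\,d\gamma_{\hat\mu_n}(\omega),
\]
where I have used that $\W^i_t$ coincides with $\W_t$ evaluated at the trajectory $X^i$ under $P^{\otimes n}$. Taking the product over $i$ of these factors, then the logarithm, and rewriting $\sum_i f(X^i) = n\int f\, d\hat\mu_n$ immediately gives $n\Gamma(\hat\mu_n)$ by the continuous-time definition of $\Gamma$, which concludes the proof.

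The main technical obstacle is justifying that one may legitimately identify $\Sigma^{-1}\sum_j \mathbf{J}_{ij}\odot S(X^j_{t-\btau})$ with $\Gv + \m_{\hat\mu_n}$ inside the stochastic integral with respect to $\W^i$. Since the $J$'s are independent of the driving Brownian motions in $\W^i$, one may proceed by first freezing the trajectory $X$ (so that $d\W^i_t$ becomes a deterministic integrator via the definition of $\W_t(x)$ in Section~\ref{sec:DiscreteTime}) and applying Fubini/Gaussian-calculus on a discretization, then passing to the continuous-time limit using the uniform convergence of $\Gamma^k$ to $\Gamma$ on $\{I(\cdot|P)\leq L\}$ established in the preceding proposition; alternatively one can view the law identification at the level of Gaussian processes, whose finite-dimensional distributions determine the law of the continuous stochastic integral. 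Either way this is an equality of Gaussian Laplace transforms, which reduces to the matching of mean and covariance functionals already computed.
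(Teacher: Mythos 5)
Your proposal is correct and follows essentially the same route as the paper: apply Fubini to write $\der{Q^n}{P^{\otimes n}}=\Exp_J[\der{\Q^n(J)}{P^{\otimes n}}]$, factorize over $i$ by independence of the rows of $J$, identify (for frozen trajectories) the process $\Sigma^{-1}\sum_j \J_{ij}\odot S(X^j_{t-\btau})$ as a Gaussian process with mean $\m_{\hat\mu_n}$ and covariance $K_{\hat\mu_n}$, and rewrite $\sum_i$ as $n\int\cdot\,d\hat\mu_n$ to obtain $n\Gamma(\hat\mu_n)$. The extra care you devote to justifying the Gaussian-law identification inside the stochastic integral (discretization or matching of finite-dimensional distributions) is a sound elaboration of what the paper handles implicitly by treating the $x^j$ as deterministic arguments of the density.
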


	\begin{proof}
	By \eqref{eq:Density}, we have:
	\begin{multline*}
		\der{\Q^n(J)}{P^{\otimes n}} = \exp\Bigg(\sum_{i=1}^n  \int_{0}^{T}  \Big(\Sigma^{-1}\cdot \sum_{j=1}^n   \J_{ij} \odot S(X_{t-\btau}^j)\Big)' \cdot  d\W_{t}^{i} - \frac{1}{2} \int_{0}^{T} \Big\|\Sigma^{-1} \sum_{j=1}^n \J_{ij} \odot S(X^j_{t-\btau})\Big\|^2 dt\Bigg)
	\end{multline*}
	Applying Fubini Theorem, we find that $Q^n \ll P^{\otimes n}$ and:
	\begin{equation*}
		\der{Q^n}{P^{\otimes n}} = {\Exp}_{J} \bigg[ \exp\Bigg(\sum_{i=1}^n  \int_{0}^{T} \Big(\Sigma^{-1}\cdot \sum_{j=1}^n   \J_{ij} \odot S(X_{t-\btau}^j)\Big)'  \cdot d\W_{t}^{i}
		- \frac{1}{2} \int_{0}^{T} \Big\|\Sigma^{-1} \sum_{j=1}^n \J_{ij} \odot S(X^j_{t-\btau})\Big\|^2 dt\Bigg) \bigg].
	\end{equation*}
	But, under ${\Exp}_{J}$, the $J_{ij}$ are independent, so that:
	\begin{equation*}
		\der{Q^n}{P^{\otimes n}} = \prod_{i=1}^n {\Exp}_{J} \bigg[ \exp\Bigg(\int_{0}^{T}  \Big(\Sigma^{-1}\cdot \sum_{j=1}^n   \J_{ij} \odot S(X_{t-\btau}^j)\Big)' \cdot  d\W_{t}^{i}
		- \frac{1}{2} \int_{0}^{T} \Big\|\Sigma^{-1} \sum_{j=1}^n \J_{ij} \odot S(X^j_{t-\btau})\Big\|^2 dt\Bigg) \bigg].
	\end{equation*}
	Lets show that $\bigg\{ \Big(\Sigma^{-1}\cdot \sum_{j=1}^n   \J_{ij} \odot S(X_{t-\btau}^j)\Big) , 0 \leq t \leq T \bigg\}$ is an $M$-dimensional Gaussian process with covariance $K_{\hat{\mu}_n}(t,s)$, and mean $\m_{\hat{\mu}_n}(t)$. In fact
	\begin{equation*}
	\Big(\Sigma^{-1}\cdot \sum_{j=1}^n   \J_{ij} \odot S(X_{t-\btau}^j)\Big) = \left(\frac{1}{\la} \sum_{j=1}^n \sum_{\gamma=1}^M J_{i_{\alpha} j_{\gamma}} \Sag(x^{j_{\gamma}}_{t-\taag})\right)_{\alpha=1 \cdots M}
	\end{equation*}
	Lets remember that the functions $x^j, j=1\cdots N$ are, in this equality, seen as determinists as arguments of the studied density. With this in mind and because of the independence of the gaussian random variables $J_{ij}$, we can see that the components of the above vector are gaussian processes, mutually independents. Moreover, the mean and covariance of the component $\alpha \in \{1\cdots M\}$ are the following:
	\begin{align*}
	\Exp\bigg[\frac{1}{\la} \sum_{j=1}^n \sum_{\gamma=1}^M J_{i_{\alpha} j_{\gamma}} \Sag(x^{j_{\gamma}}_{t-\taag})\bigg] & = \frac{1}{\la} \sum_{\gamma=1}^M \Jag \frac{1}{n} \sum_{j=1}^n \Sag(x^{j_{\gamma}}_{t-\taag})\\
	& = m^{\alpha}_{\hat{\mu}_n}(t)
	\end{align*}
	\begin{align*}
	\Exp\bigg[\Big(\frac{1}{\la} &  \sum_{j=1}^n \sum_{\gamma=1}^M \big(J_{i_{\alpha} j_{\gamma}} - \frac{\Jag}{n} \big) \Sag(x^{j_{\gamma}}_{t-\taag})\Big)\Big(\frac{1}{\la} \sum_{j=1}^n \sum_{\gamma=1}^M \big(J_{i_{\alpha} j_{\gamma}} - \frac{\Jag}{n} \big) \Sag(x^{j_{\gamma}}_{s-\taag})\Big) \bigg]\\
	& = \frac{1}{\la^2} \sum_{\gamma=1}^M \sag^2 \frac{1}{n} \sum_{j=1}^n \Sag(x^{j_{\gamma}}_{t-\taag})\Sag(x^{j_{\gamma}}_{s-\taag})\\ = K^{\alpha}_{\hat{\mu}_n}(t)
	\end{align*}

	We eventually find that:
	\begin{align*}
	\der{Q^n}{P^{\otimes n}} & = \exp\bigg\{ \sum_{i=1}^n \log{ \bigg( \int \exp\Big\{ \int_0^T \big(\Gv_t + \m_{\hat{\mu}_n}(t)\big)' \cdot d\W^i_t - \frac{1}{2} \int_0^T \Big\|\Gv_t + \m_{\hat{\mu}_n}(t)\Big\|^2 dt \Big\}  d\gamma_{\hat{\mu}_n} \bigg) }\bigg\}\\
	& = \exp\bigg\{ n \int \log{ \bigg( \int \exp\Big\{ \int_0^T \big(\Gv_t + \m_{\hat{\mu}_n}(t)\big)' \cdot d\W^i_t - \frac{1}{2} \int_0^T \Big\|\Gv_t + \m_{\hat{\mu}_n}(t)\Big\|^2 dt \Big\}  d\gamma_{\hat{\mu}_n} \bigg) } d\hat{\mu}_n \bigg\}
	\end{align*}
	\end{proof}

	Let, for any $\nu \in \M_1^+(\C), \mu \in \M_1^+(\C)$,
	\begin{equation*}
	\Gamma_{\nu}(\mu)= \int \int \exp\bigg\{ \int_0^T (\Gv_t + \m_{\nu}(t))' \cdot d\W_t - \frac{1}{2} \int_0^T \Big\|\Gv_t + \m_{\nu}(t)\Big\|^2 dt \bigg\} d\gamma_{\nu} d\mu,
	\end{equation*}
	\begin{equation*}
	\Gamma^{\alpha}_{\nu}(\mu)= \int \int \exp\bigg\{ \int_0^T (G^{\alpha}_t + m^{\alpha}_{\nu}(t)) dW^{\alpha}_t - \frac{1}{2} \int_0^T \big(G^{\alpha}_t + m^{\alpha}_{\nu}(t)\big)^2 dt \bigg\} d\gamma_{\nu} d\mu,
	\end{equation*}
	\begin{align*}
	H_{\nu}(\mu) =
	\left\{
	\begin{array}{rl}
	I(\mu|P) - \Gamma_{\nu}(\mu) & \text{if } I(\mu|P) < \infty,\\
	\infty & \text{otherwise }.
	\end{array}
	\right.
	\end{align*}
	\begin{equation*}
	dQ_{\nu}(x) = \exp\{ \Gamma_{\nu}(\delta_{x}) \} dP(x).
	\end{equation*}
	As in Lemma~\ref{lemma1}. $(v)$, we can show that $H_{\nu} = I(|Q_{\nu})$ and is therefore semi-continuous.

	\begin{theorem}\label{lemma3}
	For any compact subset $K$ of $\M_1^+(\C)$,
	\begin{equation*}
	\limsup_{n \rightarrow \infty} \frac{1}{n} \log{ Q^n(\hat{\mu}_n \in K)} \leq - \inf_{K} H.
	\end{equation*}
	\end{theorem}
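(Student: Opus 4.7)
The plan is to follow the Laplace-type strategy of Ben Arous and Guionnet, relying on a change of measure that ``freezes'' the self-referential part of $\Gamma$. By Lemma~\ref{lemma2}, $\frac{dQ^n}{dP^{\otimes n}} = \exp\{n\Gamma(\hat\mu_n)\}$, and by the very construction of $Q_\nu$ the same computation shows $\frac{dQ_\nu^{\otimes n}}{dP^{\otimes n}} = \exp\{n\Gamma_\nu(\hat\mu_n)\}$ (since $\Gamma_\nu(\mu)=\int\Gamma_\nu(\delta_x)d\mu(x)$ and $\Gamma_\nu(\delta_x)$ is precisely the log-density of $Q_\nu$ with respect to $P$). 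Hence for any Borel set $A\subset\M_1^+(\C)$ and any reference $\nu$,
\begin{equation*}
Q^n(\hat\mu_n \in A) = \int_{\{\hat\mu_n\in A\}} \exp\bigl\{n[\Gamma(\hat\mu_n) - \Gamma_\nu(\hat\mu_n)]\bigr\}\, dQ_\nu^{\otimes n}.
\end{equation*}

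The strategy is then to cover $K$ by finitely many Vaserstein balls $B(\nu_i,\delta)$, restrict attention to the entropy sublevel set $\{I(\hat\mu_n|P)\leq L\}$, and handle its complement separately. On this sublevel, the continuous-time analog of Lemma~\ref{lemma1} (i) and (iv) yields $|\Gamma(\mu) - \Gamma_{\nu_i}(\mu)| \leq C_T(1+I(\mu|P))\,d_T(\mu,\nu_i) \leq C_T(1+L)\delta$, so that
\begin{equation*}
Q^n\bigl(\hat\mu_n\in B(\nu_i,\delta)\cap\{I\leq L\}\bigr) \leq e^{nC_T(1+L)\delta}\, Q_{\nu_i}^{\otimes n}\bigl(\hat\mu_n\in\overline{B(\nu_i,\delta)}\bigr).
\end{equation*}
Applying Sanov's theorem to the i.i.d.\ law $Q_{\nu_i}^{\otimes n}$, whose rate function is $H_{\nu_i}=I(\cdot|Q_{\nu_i})$ (as in Lemma~\ref{lemma1} (v), transferred to continuous time), one obtains $\limsup_n \frac{1}{n}\log Q^n\bigl(\hat\mu_n\in B(\nu_i,\delta)\cap\{I\leq L\}\bigr) \leq C_T(1+L)\delta - \inf_{\overline{B(\nu_i,\delta)}} H_{\nu_i}$. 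Since $\Gamma_\nu(\nu)=\Gamma(\nu)$ forces $H_\nu(\nu)=H(\nu)$, letting $\delta\to 0$ and using lower semi-continuity of $H_{\nu_i}$ shrinks the infimum to $H(\nu_i)$, after which the finite covering of $K$ yields $-\min_i H(\nu_i)\leq -\inf_K H$.

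To absorb the complement $\{I(\hat\mu_n|P)>L\}$, I would use $\Gamma\leq aI(\cdot|P)+\eta$ with $a<1$ from the preceding proposition: then $Q^n(I(\hat\mu_n|P)>L) \leq e^{n\eta}\,E^{P^{\otimes n}}[e^{an I(\hat\mu_n|P)}\mathbf{1}_{I>L}]$, and a standard Sanov-type exponential estimate shows that this contribution is $o\bigl(e^{-n[(1-a)L-\eta]}\bigr)$, hence negligible as $L\to\infty$. The principal obstacle is precisely this interplay: the Lipschitz comparison $|\Gamma-\Gamma_\nu|\leq C_T(1+I(\cdot|P))d_T$ grows linearly in the relative entropy, while $K$ is only weakly compact and need not be contained in any sublevel of $I(\cdot|P)$. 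The sub-linearity constant $a<1$ from the proposition is exactly what closes the scheme, as the gap $(1-a)$ compensates the entropy blow-up outside $\{I\leq L\}$ and permits the joint limit $\delta\to 0,\ L\to\infty$.
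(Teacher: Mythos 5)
Your overall skeleton (finite covering of $K$ by Vaserstein balls, the identities $dQ^n=\exp\{n\Gamma(\hat{\mu}_n)\}dP^{\otimes n}$ and $dQ_\nu^{\otimes n}=\exp\{n\Gamma_\nu(\hat{\mu}_n)\}dP^{\otimes n}$, Sanov's theorem under $Q_{\nu_i}^{\otimes n}$ with rate $H_{\nu_i}=I(\cdot|Q_{\nu_i})$, then $\delta\to 0$) is the same as the paper's. But the central step of your argument --- controlling $\exp\{n(\Gamma-\Gamma_{\nu_i})(\hat{\mu}_n)\}$ by restricting to the sublevel set $\{I(\hat{\mu}_n|P)\leq L\}$ and invoking $|\Gamma(\mu)-\Gamma_\nu(\mu)|\leq C_T(1+I(\mu|P))\,d_T(\mu,\nu)$ --- cannot work. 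The empirical measure $\hat{\mu}_n=\frac{1}{n}\sum_i\delta_{X^i}$ is purely atomic while $P$ is non-atomic, so $\hat{\mu}_n\not\ll P$ and $I(\hat{\mu}_n|P)=+\infty$ almost surely under $P^{\otimes n}$, $Q_\nu^{\otimes n}$ and $Q^n$ alike. Hence your main event $\{\hat{\mu}_n\in B(\nu_i,\delta)\}\cap\{I\leq L\}$ is a.s.\ empty and carries none of the mass, while on the complement the inequality $\Gamma\leq aI(\cdot|P)+\eta$ (proved for measures of finite entropy) degenerates, and the quantity $\Exp_{P^{\otimes n}}\big[e^{a n I(\hat{\mu}_n|P)}\mathbbm{1}_{I>L}\big]$ is identically $+\infty$: there is no Sanov-type estimate that makes this term small. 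The Lipschitz-in-entropy bounds of Lemma~\ref{lemma1} (i), (iv) are statements about deterministic measures with finite entropy (used to compare $\Gamma^k$ with $\Gamma$ and to obtain semicontinuity); they cannot be evaluated along the random empirical measures.

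This is exactly the point where the paper's proof takes a different, and necessary, route: on each ball it applies H\"older's inequality with conjugate exponents $(p,q)$ under $Q_\nu^{\otimes n}$, separating $Q_\nu^{\otimes n}\big(\hat{\mu}_n\in K\cap B(\nu,\delta)\big)^{1/p}$ (treated by Sanov, as you do) from the exponential moment $\big(\int_{\hat{\mu}_n\in B(\nu,\delta)}\exp\{qn(\Gamma-\Gamma_\nu)(\hat{\mu}_n)\}dQ_\nu^{\otimes n}\big)^{1/q}$, and the latter is bounded by $e^{C_q(\delta)n}$ with $C_q(\delta)\to 0$ as $\delta\to 0$ by Lemma~\ref{lemma3.1}. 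That lemma is proved by direct Gaussian computations under a coupling $\gamma_\xi$ of $\gamma_{\hat{\mu}_n}$ and $\gamma_\nu$, using only $d_T(\hat{\mu}_n,\nu)\leq\delta$ and exponential moment bounds for the Gaussian interaction processes; the entropy of $\hat{\mu}_n$ never enters. To repair your write-up you should replace the sublevel-set decomposition by this H\"older step and prove (or invoke) the analogue of Lemma~\ref{lemma3.1}; the concluding limits $\delta\to 0$, $p\to 1$, together with taking the covering centers $\nu_i$ in $K$ so that $H(\nu_i)\geq\inf_K H$, then go through as you describe.
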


	\begin{proof}
	Let $\delta < 0$. We can find an integer $m$ and a family $(\nu_i)_{1\leq i \leq m}$ of probability measure on $\C$ such that
	$$ K \subset \bigcup_{i=1}^m B(\nu_i,\delta),$$
	where $B(\nu_i,\delta)= \big\{ \mu | d_T(\mu,\nu_i) < \delta \big\}$ and $d_T$ denotes the Vaserstein distance on $\M_1^+(\C)$.
	A very classical result (see e.g.~\cite[lemma 1.2.15]{dembo-zeitouni:09}), ensures that
	\begin{equation*}
	\limsup \frac{1}{n} \log{ Q^n(\hat{\mu}_n \in K)} \leq \max_{1 \leq i \leq m} \limsup \frac{1}{n} \log{Q^n(\hat{\mu}_n \in K\cap B(\nu_i,\delta))  } .
	\end{equation*}
	Let $\nu \in \M_1^+(\C)$. Lemma~\ref{lemma2} gives us:
	\begin{align*}
	Q^n(\hat{\mu}_n \in K \cap B(\nu,\delta)) & = \int_{\hat{\mu}_n \in K \bigcap B(\nu,\delta)} \exp\Big\{ n \Gamma(\hat{\mu}_n)\Big\} dP^{\otimes n} \\
	& = \int_{\hat{\mu}_n \in K \bigcap B(\nu,\delta)} \exp\Big\{ n \big(\Gamma(\hat{\mu}_n)- \Gamma_{\nu}(\hat{\mu}_n)\big)\Big\}  \exp\Big\{ n \Gamma_{\nu}(\hat{\mu}_n)\Big\} dP^{\otimes n}
	 \end{align*}
	But, for any probability measure $\nu \in \M_1^+(\C)$, $Q_{\nu}^n := \exp{\Big\{ n \Gamma_{\nu}(\hat{\mu}_n)\Big\} } P^{\otimes n} = \big(Q_{\nu}\big)^{\otimes n}$ is a probability measure on $\big(\C\big)^{\otimes n}$. Hence, for any conjugate exponents $(p,q)$,
	\begin{align}
	Q^n(\hat{\mu}_n \in K \cap B(\nu,\delta)) & = \int_{\hat{\mu}_n \in K \bigcap B(\nu,\delta)} \exp\Big\{ n \big(\Gamma(\hat{\mu}_n)- \Gamma_{\nu}(\hat{\mu}_n)\big)\Big\} dQ_{\nu}^n \nonumber \\
	& \leq Q_{\nu}^n\big(\hat{\mu}_n \in K \cap B(\nu,\delta)\big)^{\frac{1}{p}}  \bigg(\int_{\hat{\mu}_n \in K \bigcap B(\nu,\delta)} \exp\Big\{ q n\big(\Gamma(\hat{\mu}_n)- \Gamma_{\nu}(\hat{\mu}_n)\big)\Big\} dQ_{\nu}^n\bigg)^{\frac{1}{q}} \label{ineqpi}
	\end{align}

	We will first bound the second term of the right hand side by proving the following lemma~\ref{lemma3.1}. Once this step performed, concluding the proof amounts bounding the first term in the right hand side of (\ref{ineqpi}) using the same arguments as in \cite[Lemma 3.8]{guionnet:97}. Remark that, as the space we work on remains a polish and the $X^i, i=1 \cdots n$ are i.i.d under $Q_{\nu}$, we can still resort to Sanov Theorem.
	\end{proof}

	\begin{lemma}\label{lemma3.1}
	For any real number $a >1$, there exists a strictly positive real number $\delta_a$ such that, for any $\delta < \delta_a$, there exists a function $C_a(.)$ in $\R$ such that $\lim_{\delta \rightarrow 0} C_a(\delta) =0$ and:
	\begin{equation*}
	\int_{\hat{\mu}_n \in B(\nu,\delta)} \exp{\Big\{ a n \big(\Gamma(\hat{\mu}_n)- \Gamma_{\nu}(\hat{\mu}_n)\big)\Big\} } dQ_{\nu}^n \leq \exp\{C_a(\delta) n\}
	\end{equation*}
	\end{lemma}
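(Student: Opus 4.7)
The plan is to start from the decomposition $\Gamma = \Gamma_1 + \Gamma_2$ established in continuous time by part (ii) of the preceding Proposition, together with the parallel decomposition $\Gamma_\nu(\mu) = \Gamma_1(\nu) + \Gamma_{2,\nu}(\mu)$, which gives
\[
\Gamma(\hat{\mu}_n) - \Gamma_\nu(\hat{\mu}_n) = \bigl(\Gamma_1(\hat{\mu}_n) - \Gamma_1(\nu)\bigr) + \bigl(\Gamma_2(\hat{\mu}_n) - \Gamma_{2,\nu}(\hat{\mu}_n)\bigr).
\]
On $B(\nu,\delta)$ the first bracket is uniformly bounded by $C_T \delta$ thanks to the Lipschitz estimate of Lemma~\ref{lemma1}\,(i), which passes to continuous time by the uniform convergence shown in the preceding Proposition. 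This piece therefore contributes a factor bounded by $\exp(a n C_T \delta)$, already of the desired form $\exp(C_a(\delta)n)$ with $C_a(\delta)\to 0$.

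For the second bracket, Lemma~\ref{lemma1}\,(iv) yields the pointwise bound $|\Gamma_2(\mu) - \Gamma_{2,\nu}(\mu)| \leq C_T(1+I(\mu|P))\,d_T(\mu,\nu)$, but this estimate is vacuous at $\mu = \hat{\mu}_n$ since empirical measures have infinite relative entropy with respect to $P$. It must therefore be exploited at the level of the exponential moment rather than pointwise. I would expand both $\Gamma_2(\hat{\mu}_n)$ and $\Gamma_{2,\nu}(\hat{\mu}_n)$ into their explicit expressions: each term appears as an integral against $\mu$ of a quantity involving Brownian-path functionals $W^\alpha_\cdot(x)$, the drift functions $m^\alpha_\eta$, and Gaussian expectations against $\gamma_{\widetilde{K}_\eta^T}$ for $\eta\in\{\mu,\nu\}$. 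Under $Q_\nu^n = Q_\nu^{\otimes n}$, evaluation $\int(\cdot)\,d\hat{\mu}_n = \frac{1}{n}\sum_{i=1}^n(\cdot)(X^i)$ becomes an empirical average of i.i.d.\ quantities, and on $B(\nu,\delta)$ the discrepancies $m_{\hat{\mu}_n}-m_\nu$ and $K_{\hat{\mu}_n}-K_\nu$ are $O(\delta)$ by the Lipschitz estimates from Lemma~\ref{lemma1}.

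Combining these observations, $\exp\bigl(an(\Gamma_2(\hat{\mu}_n) - \Gamma_{2,\nu}(\hat{\mu}_n))\bigr)$ reduces to a quantity of the form $\exp\bigl(a\delta \sum_{i=1}^n Z_i(X^i,\hat{\mu}_n)\bigr)$, whose $Q_\nu^{\otimes n}$-expectation can be controlled via Gaussian concentration for $\gamma_{K_\nu}$ and $\gamma_{\widetilde{K}_\nu^T}$ together with the uniform bounds $K_\mu^\alpha \leq k_\alpha/\lambda_\alpha^2$ and $m_\mu^\alpha \leq \bar{J}_\alpha/\lambda_\alpha$ (which yield exponential moments of all orders for the squared stochastic integrals appearing in $\Gamma_2$). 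Taking $\delta_a$ small enough so that $a\delta_a$ lies within the domain of finiteness of these Laplace transforms, a cumulant expansion that is linear in $\delta$ to leading order then gives $\int e^{a n \delta \cdot \bar Z_n}\,dQ_\nu^{\otimes n} \leq e^{n C_a(\delta)}$ with $C_a(\delta)\to 0$.

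The main obstacle is that the kernel $\widetilde{K}_\mu^T$ depends nonlinearly on $\mu$ through a Gaussian ratio, so that the measures $\gamma_{\widetilde{K}_{\hat{\mu}_n}^T}$ and $\gamma_{\widetilde{K}_\nu^T}$ appearing in $\Gamma_2(\hat{\mu}_n)$ versus $\Gamma_{2,\nu}(\hat{\mu}_n)$ must be compared uniformly over $\mu \in B(\nu,\delta)$ by estimates that survive exponentiation at scale $an$. Handling this requires combining Gaussian calculus to express the difference as a total-variation perturbation of order $\delta$, Girsanov's theorem to convert Brownian-path integrals under $P$ into integrals under $Q_\nu$, and an entropic inequality of the type $a\int f\,d\mu \leq I(\mu|Q_\nu) + \log\int e^{af}\,dQ_\nu$ to transfer the residual terms to quantities controllable under the reference measure $Q_\nu$. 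This is the delicate step and mirrors the Gaussian calculus of~\cite{ben-arous-guionnet:95,guionnet:97}.
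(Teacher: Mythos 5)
Your plan diverges from the paper's proof, and the divergence is not innocuous: the step you yourself identify as ``the delicate step'' is precisely the content of the lemma, and the tools you propose for it do not work as stated. First, the entropic inequality $a\int f\,d\mu \leq I(\mu\vert Q_{\nu}) + \log\int e^{af}dQ_{\nu}$ applied at $\mu=\hat{\mu}_n$ is vacuous for exactly the reason you invoked to discard the pointwise bound of Lemma~\ref{lemma1}~(iv): the empirical measure is atomic, so $I(\hat{\mu}_n\vert Q_{\nu})=\infty$ just as $I(\hat{\mu}_n\vert P)=\infty$. Second, describing $\gamma_{\widetilde{K}^T_{\hat{\mu}_n}}$ versus $\gamma_{\widetilde{K}^T_{\nu}}$ as ``a total-variation perturbation of order $\delta$'' cannot close the argument: the functionals integrated against these Gaussian measures are squared stochastic integrals, hence unbounded with only exponential tails, and a total-variation estimate does not survive exponentiation at scale $an$; moreover the summands $Z_i(X^i,\hat{\mu}_n)$ are coupled through $\hat{\mu}_n$, so the product over $i$ does not factorize under $Q_{\nu}^{\otimes n}$ and a ``cumulant expansion linear in $\delta$'' needs a uniform-in-$\hat{\mu}_n\in B(\nu,\delta)$ mechanism that you never supply. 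A further (smaller) gap: the decomposition $\Gamma=\Gamma_1+\Gamma_2$ and the Lipschitz estimates are established in continuous time only on the finite-entropy sets $K_L$ (by uniform convergence of the discretizations), so invoking them at $\mu=\hat{\mu}_n$ requires a separate justification.

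For comparison, the paper does not decompose $\Gamma$ at all. It writes $e^{n\Gamma(\hat{\mu}_n)}$ and $e^{n\Gamma_{\nu}(\hat{\mu}_n)}$ as products over $i=1,\dots,n$ of ratios of Gaussian expectations, and the key device is a coupling: for $\xi$ a measure on $\C\times\C$ with marginals $\hat{\mu}_n$ and $\nu$, one realizes both Gaussian processes jointly as a $2M$-dimensional centered process $(\Gv,\Gv')$ with covariance $K_{\xi}$, so that on $B(\nu,\delta)$ the only small quantity needed is ${\E}_{\xi}\big[\int_0^T\Vert\Gv_t-\Gv_t'\Vert^2dt\big]=O(\delta^2)$. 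Jensen's inequality (convexity of $x\mapsto x^a$) followed by two applications of H\"older then splits the integral into three factors: a term $B_1^n$ involving $\E_{\xi}[e^{pY_i'}]/(\E_{\xi}[e^{Y_i'}])^p$ under $Q_{\nu}$, a term $B_2^n=\int e^{n\sigma\Gamma_{\nu}(\hat{\mu}_n)}dP^{\otimes n}$, and a term $B_3^n$ involving exponential moments of $Y_i-Y_i'$, i.e.\ of $\int_0^T\Vert\Gv_t-\Gv_t'\Vert^2dt$ and $\int_0^T\Vert\Gv_t+\Gv_t'\Vert^2dt$; each is bounded by $e^{Cn}$ with constants $C_1(p)\to0$, $C_2(\sigma)\to0$, $C_{\kappa}(\delta)\to0$ using the Gaussian exponential-moment estimates of Ben~Arous--Guionnet. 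If you want to salvage your route, you would in any case have to import this coupling construction to compare the $\hat{\mu}_n$- and $\nu$-objects uniformly inside the exponential, at which point you are essentially reproducing the paper's argument without the benefit of avoiding the ill-defined entropy terms.
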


	The proof of this lemma is relatively technical, and it is provided in appendix~\ref{append:ProofLemma3.1}.

	\begin{theorem}[Tightness]\label{thm:tightness}
	For any real number $\varepsilon > 0$, there exists a compact set $K_{\varepsilon}$ of $\M_1^+(\C)$ such that, for any integer $n$,
	\begin{equation*}
	Q^n(\hat{\mu}_n \notin K_{\varepsilon}) \leq \varepsilon.
	\end{equation*}
	\end{theorem}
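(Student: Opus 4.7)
The plan is to establish tightness by exhibiting a function $\phi\colon\C\to[0,+\infty]$ whose sublevel sets $\{\phi\leq R\}$ are relatively compact in $\C$, together with a uniform bound $\sup_n \Exp_{Q^n}[\phi(X^1)]\leq C$. Given these two ingredients, the level set $\widetilde K_M=\{\mu\in\M_1^+(\C):\int\phi\, d\mu\leq M\}$ is tight in $\M_1^+(\C)$ by Prokhorov's theorem, since for every $\mu\in\widetilde K_M$ one has $\mu(\{\phi>R\})\leq M/R$, so $K_\varepsilon:=\overline{\widetilde K_{C/\varepsilon}}$ is compact; exchangeability of the $X^i$ under $Q^n$ and Markov's inequality then give
\begin{equation*}
Q^n(\hat\mu_n\notin K_\varepsilon)\leq Q^n\!\left(\int\phi\, d\hat\mu_n>C/\varepsilon\right)\leq\frac{\varepsilon}{C}\,\Exp_{Q^n}[\phi(X^1)]\leq\varepsilon.
\end{equation*}

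For the choice of $\phi$, I would use the standard Arzel\`a--Ascoli-based test function
\begin{equation*}
\phi(x)=\sup_{-\tau\leq t\leq T}\|x_t\|^2+\sum_{k\geq 1}a_k\sup_{|t-s|\leq\delta_k}\|x_t-x_s\|^2,
\end{equation*}
with $(a_k)\uparrow\infty$ and $(\delta_k)\downarrow 0$ chosen at compatible rates so that any sublevel set consists of uniformly bounded and equicontinuous paths, hence compact in $\C$. The moment bound $\Exp_{Q^n}[\phi(X^1)]\leq C$ then reduces to (i) a uniform sup-norm bound $\Exp_{Q^n}[\sup_{t\leq T}\|X^1_t\|^2]\leq C$, and (ii) a Kolmogorov-type increment estimate $\Exp_{Q^n}[\|X^1_t-X^1_s\|^2]\leq C|t-s|$ uniform in $n$.

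Both (i) and (ii) reduce, via the mild form of~\eqref{eq:NetworkVector},
\begin{equation*}
x^{1_\alpha}_t=e^{-t/\ta}x^{1_\alpha}_0+\int_0^t e^{-(t-s)/\ta}D^{1_\alpha}_s\, ds+\la\int_0^t e^{-(t-s)/\ta}\, dW^{1_\alpha}_s
\end{equation*}
with $D^{1_\alpha}_s:=\sum_{j,\gamma}J_{1_\alpha j_\gamma}\Sag(x^{j_\gamma}_{s-\taag})$, to the single core estimate $\sup_{n,s,i,\alpha}\Exp_{Q^n}\bigl[(D^{i_\alpha}_s)^2\bigr]\leq C$, the stochastic integral being controlled by Burkholder--Davis--Gundy and the drift integral by Cauchy--Schwarz against $\int_0^T e^{-2(t-s)/\ta}\,ds\leq\ta/2$; analogous manipulations on the increment $x^{1_\alpha}_t-x^{1_\alpha}_s$ yield (ii) with linear rate in $|t-s|$.

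The main obstacle is this uniform drift bound. If $J$ were independent of the paths, the mutual independence of the $J_{1_\alpha j_\gamma}$ for different $(j,\gamma)$, their $1/\sqrt n$ variance scaling, and $|\Sag|\leq 1$ would yield by a direct expansion of the square $\Exp[(D^{1_\alpha}_s)^2]\leq\Ja^{\,2}+\ka$ with $\Ja,\ka$ as in Lemma~\ref{lemma1}, uniformly in $n$; the real difficulty is that under $Q^n$ the weights $J$ and the paths $X$ are genuinely correlated. I would resolve this either (a)~by Gaussian integration by parts in the $J$-variables, exploiting the smooth sensitivity of $X^j$ to perturbations of $J_{1_\alpha j_\gamma}$ through the delayed SDE together with the $1/n$ variance scaling to absorb the correction terms into $o(1)$ contributions; or (b)~by invoking Lemma~\ref{lemma2}, $dQ^n/dP^{\otimes n}=\exp\{n\Gamma(\hat\mu_n)\}$, together with the \emph{universal} bounds $K^\alpha_\mu(s,t)\leq\ka/\la^2$ and $m^\alpha_\mu(t)\leq\Ja/\la$ from Lemma~\ref{lemma1} (which hold for every $\mu$ because $|\Sag|\leq 1$), so that the law of $X^1$ under $Q^n$ can be realised as that of an Ornstein--Uhlenbeck process perturbed by a Gaussian drift field of uniformly bounded mean and covariance, and therefore possessing uniformly bounded moments.
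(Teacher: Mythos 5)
Your overall strategy (a coercive functional $\phi$ on path space with compact sublevel sets, a uniform single-particle bound $\sup_n\Exp_{Q^n}[\phi(X^1)]\leq C$, exchangeability and Markov's inequality) is a legitimate route and genuinely different from the paper's, which instead applies the relative entropy inequality~\eqref{eq:IneqRelativeEntropy} to $\Phi=\log(1+P^{\otimes n}(A)^{-1})\mathbbm{1}_A$ to get $Q^n(A)\leq\big(\log 2+I(Q^n\vert P^{\otimes n})\big)/\log\big(1+P^{\otimes n}(A)^{-1}\big)$, invokes the exponential tightness of $P^{\otimes n}$ (Sanov), and reduces everything to the single estimate $I(Q^n\vert P^{\otimes n})\leq Cn$. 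As written, however, your plan has two genuine gaps. First, controlling $\Exp_{Q^n}\big[\sup_{|t-s|\leq\delta_k}\|X^1_t-X^1_s\|^2\big]$ cannot be reduced to the pointwise estimate $\Exp_{Q^n}\|X^1_t-X^1_s\|^2\leq C|t-s|$: an $L^2$ increment bound with exponent exactly one is below the Kolmogorov--Chentsov threshold and does not bound the expected modulus of continuity (piecewise-linear interpolations of a Poisson process satisfy such a bound uniformly, yet their laws are not tight in $\C$). You need fourth-moment increments of the type $\Exp\|X^1_t-X^1_s\|^4\leq C|t-s|^2$, or a chaining argument, before $\Exp_{Q^n}[\phi(X^1)]$ can be made finite with an admissible choice of $(a_k,\delta_k)$; note also that your $\phi$ implicitly requires second moments of the initial laws $\mu_\alpha$ on $[-\tau,0]$, an assumption the paper's argument never uses.

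Second, and more importantly, the decisive uniform bound is asserted rather than proved. Route (a) is only a sketch: the Gaussian integration by parts has to be carried out through the Girsanov density defining $\Q^n(J)$, and controlling the resulting correction terms is exactly the delicate point. Route (b) is close to what the paper actually does, but the claim that under $Q^n$ the first coordinate is an Ornstein--Uhlenbeck process perturbed by a Gaussian drift of uniformly bounded mean and covariance, ``therefore possessing uniformly bounded moments,'' is not correct as stated: after averaging over the Gaussian field, the drift of $W^{1_\alpha}$ under $Q^n$ is the conditional quantity $H^{\alpha}_t={\E}_{\hat{\mu}_n}\big[G^{\alpha}_t\Lambda^{\Na}_t\int_0^t G^{\alpha}_s(dW^{1_\alpha}_s-m^{\alpha}_{\hat{\mu}_n}(s)ds)\big]+m^{\alpha}_{\hat{\mu}_n}(t)$, which contains a stochastic integral of the path itself, so the uniform bounds $K^{\alpha}_{\mu}\leq\ka/\la^2$ and $m^{\alpha}_{\mu}\leq\Ja/\la$ only yield $\Exp_{Q^n}\int_0^T (H^{\alpha}_t)^2dt\leq C$ after the Cauchy--Schwarz/It\^o-isometry/Gronwall computation that is the heart of the paper's proof of $I(Q^n\vert P^{\otimes n})\leq Cn$. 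In other words, once made rigorous, your route (b) reconstructs the paper's key estimate in different clothing; in its current form the central step is missing.
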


	\begin{proof}
	The proof of this theorem consists in using the relative entropy inequality ~\eqref{eq:IneqRelativeEntropy} and to use the exponential tightness of the sequence of laws $P^{\otimes n}$. Indeed, defining $A$ an arbitrary set in $\C^n$ and applying~\eqref{eq:IneqRelativeEntropy} to the function $\Phi=\log(1+P^{\otimes n}(A)^{-1})\mathbbm{1}_A$ yields:
	\[Q^n(A)\leq \frac{\log (2) + I(Q\vert P^{\otimes n})}{\log(1+P^{\otimes n}(A)^{-1})}\]
	and the exponential tightness of $P^{\otimes n}$ (see e.g.~\cite[lemma 3.2.7]{deuschel-stroock:89}) ensures that for any $\varepsilon>0$ there exists a compact subset $K_{\varepsilon}$ of $\M_1^+(\C)$ such that
	\[P^n (\hat{\mu}_n\notin K_{\varepsilon})\leq \exp\left\{-\frac{n}{\varepsilon}\right\}.\]

	The theorem is hence proved as soon as we show that there exists a finite constant $C$, such that for any integer $n$, $I(Q^n|P^{\otimes n}) \leq C n$. Using the expression of the relative entropy and the interchangeability of the neurons of the same population, we find:
	\begin{equation}\label{eq:ExpressionOfRelativeEntropy}
	I(Q^n|P^{\otimes n})=n \int \log{{\E}_{\hat{\mu}_n}\bigg[ \exp\bigg\{ \int_0^T \big(\Gv_t + \m_{\hat{\mu}_n}(t)\big)' \cdot d\W^1_t - \frac 1 2 \int_0^T \Big\|\Gv_t + \m_{\hat{\mu}_n}(t) \Big\|^2dt \bigg\}\bigg]}dQ^n
	\end{equation}
	For every $\alpha \in \{1\cdots M\}$ let,
	\begin{equation*}
	dQ^{\Na}= {\E}_{\hat{\mu}_n}\bigg[ \exp\bigg\{ \int_0^T \big(G^{\alpha}_t + m^{\alpha}_{\hat{\mu}_n}(t)\big) dW^{1_{\alpha}}_t - \frac 1 2 \int_0^T \big(G^{\alpha}_t + m^{\alpha}_{\hat{\mu}_n}(t) \big)^2dt \bigg\}\bigg] d\pa^{\otimes \Na}
	\end{equation*}
	which is a probability measure on $\C([-\tau,T],\R)^{\Na}$. One can see that $Q^n = \otimes_{\alpha=1}^M Q^{\Na}$. After some gaussian computations (see \cite[Lemma 5.15]{ben-arous-guionnet:95}), we find that:
	\begin{equation}\label{eq:NewExpressionOfDensity}
	{\E}_{\hat{\mu}_n}\bigg[ \exp\bigg\{ \int_0^T \big(G^{\alpha}_t + m^{\alpha}_{\hat{\mu}_n}(t)\big) dW^{1_{\alpha}}_t - \frac 1 2 \int_0^T \big(G^{\alpha}_t + m^{\alpha}_{\hat{\mu}_n}(t) \big)^2dt \bigg\}\bigg] = \exp\bigg\{\int_0^T H^{\alpha}_t(Q^{\Na}) dW^{1_{\alpha}}_t-\frac{1}{2}\int_0^T \big(H^{\alpha}_t(Q^{\Na})\big)^2dt\bigg\}
	\end{equation}
	where
	\begin{align*}
	H^{\alpha}_t (Q^{\Na}) & = {\E}_{\hat{\mu}_n} \bigg[ G^{\alpha}_t \Lambda^{\Na}_t  \int_0^t G^{\alpha}_s \big(dW^{1_{\alpha}}_s - m^{\alpha}_{\hat{\mu}_n}(s)ds\big) \bigg] + m^{\alpha}_{\hat{\mu}_n}(t)
	\end{align*}
	\begin{equation*}
	\Lambda^{\Na}_t= \frac{\exp\bigg\{ -\frac{1}{2} \int_0^t {G^{\alpha}_s}^2 ds \bigg\}}{{\E}_{\hat{\mu}_n} \bigg[ \exp\bigg\{ -\frac{1}{2} \int_0^t {G^{\alpha}_s}^2 ds \bigg\} \bigg]}.
	\end{equation*}

	Consequently, there exists a $Q^{\Na}$ brownian motion $B^{1_{\alpha}}$ such that:
	\begin{equation*}
	W^{1_{\alpha}}_t= B^{1_{\alpha}}_t + \int_0^t H^{\alpha}_s (Q^{\Na}) ds
	\end{equation*}

	Using ~\eqref{eq:NewExpressionOfDensity} and this brownian motion in ~\eqref{eq:ExpressionOfRelativeEntropy}, we have:
	\begin{align}
	& I(Q^n|P^{\otimes n}) = \frac{n}{2} \sum_{\alpha=1}^n \int \int_0^T \bigg( {\E}_{\hat{\mu}_n} \bigg[ \Lambda_t^{\Na} G^{\alpha}_t \int_0^t  G^{\alpha}_s \big(dW^{1_{\alpha}}_s-m^{\alpha}_{\hat{\mu}_n}(s)ds\big) \bigg] + m^{\alpha}_{\hat{\mu}_n}(t) \bigg)^2 dt dQ^n \nonumber \\
	& \leq n \sum_{\alpha=1}^n \int \int_0^T  \bigg( {\E}_{\hat{\mu}_n} \bigg[ \Lambda_t^{\Na} G^{\alpha}_t \times \int_0^t  G^{\alpha}_s \big(dW^{1_{\alpha}}_s-m^{\alpha}_{\hat{\mu}_n}(s)ds\big) \bigg] \bigg)^2 + {m^{\alpha}_{\hat{\mu}_n}}^2(t) dt dQ^n \nonumber \\
	& \leq n \sum_{\alpha=1}^n \times \bigg\{ \int_0^T \int \underbrace{\int  \bigg( {\E}_{\hat{\mu}_n} \bigg[ \Lambda_t^{\Na} G^{\alpha}_t \times \int_0^t  G^{\alpha}_s \big(dW^{1_{\alpha}}_s-m^{\alpha}_{\hat{\mu}_n}(s)ds\big) \bigg] \bigg)^2 dQ^{\Na}}_{f^{\alpha}(t)} d(\otimes_{\gamma \neq \alpha} Q^{\Ng}) dt + \frac{\Ja^2 T}{\la^2} \bigg\} \label{ineqI}
	\end{align}
	We now will bound $f^{\alpha}(t)$:
	\begin{align*}
	f^{\alpha}(t) & \leq 3 \bigg\{ \int  {\E}_{\hat{\mu}_n} \bigg[ \Lambda_t^{\Na} G^{\alpha}_t  \int_0^t  G^{\alpha}_s dB^{1_{\alpha}}_s \bigg]^2 dQ^{\Na} + \int  {\E}_{\hat{\mu}_n} \bigg[ \Lambda_t^{\Na} G^{\alpha}_t \int_0^t  G^{\alpha}_s m^{\alpha}_{\hat{\mu}_n}(s)ds \; \bigg]^2 dQ^{\Na} \\
	& + \int  \bigg( \int_0^t {\E}_{\hat{\mu}_n} \bigg[ \Lambda_t^{\Na} G^{\alpha}_t G^{\alpha}_s \bigg] {\E}_{\hat{\mu}_n} \bigg[ \Lambda_s^{\Na} G^{\alpha}_s \int_0^s G^{\alpha}_u \big(dW^{1_{\alpha}}_u-m^{\alpha}_{\hat{\mu}_n}(u)du\big) \bigg] ds \bigg)^2 dQ^{\Na} \bigg\}
	\end{align*}
	But by Cauchy-Schwarz inequality,
	\begin{align*}
	\int  {\E}_{\hat{\mu}_n} \bigg[ \Lambda_t^{\Na} G^{\alpha}_t \times & \int_0^t  G^{\alpha}_s dB^{1_{\alpha}}_s \; \bigg]^2 dQ^{\Na}  \leq \int \bigg\{  {\E}_{\hat{\mu}_n} \bigg[ \Big(\Lambda_t^{\Na} G^{\alpha}_t\Big)^2 \bigg] {\E}_{\hat{\mu}_n} \bigg[ \Big( \int_0^t  G^{\alpha}_s dB^{1_{\alpha}}_s\Big)^2 \; \bigg] \bigg\} dQ^{\Na}
	\end{align*}
	As ${\E}_{\hat{\mu}_n} \bigg[ \Big(\Lambda_t^{\Na} G^{\alpha}_t\Big)^2 \bigg] \leq \frac{\ka}{\la^2}$ (see \cite{ben-arous-guionnet:95} Appendix A), we have by Fubini Theorem and Ito's isometry:
	\begin{align*}
	\int {\E}_{\hat{\mu}_n} \bigg[ \Lambda_t^{\Na} G^{\alpha}_t \times & \int_0^t  G^{\alpha}_s dB^{1_{\alpha}}_s \; \bigg]^2 dQ^{\Na} \leq  \frac{\ka}{\la^2} \int \bigg\{ {\E}_{\hat{\mu}_n} \bigg[ \bigg( \int_0^t  {G^{\alpha}_s}^2 ds \bigg) \bigg] \bigg\} dQ^{\Na} \leq \frac{\ka^2 T}{\la^4}
	\end{align*}
	By similar arguments,
	\begin{align*}
	\int {\E}_{\hat{\mu}_n} \bigg[ \Lambda_t^{\Na} G^{\alpha}_t \times \int_0^t  G^{\alpha}_s m^{\alpha}_{\hat{\mu}_n}(s)ds \; \bigg]^2 dQ^{\Na} & \leq \int \bigg\{  {\E}_{\hat{\mu}_n} \bigg[ \Big(\Lambda_t^{\Na} G^{\alpha}_t\Big)^2 \bigg] {\E}_{\hat{\mu}_n} \bigg[ \Big( \int_0^t  G^{\alpha}_s m^{\alpha}_{\hat{\mu}_n}(s) ds\Big)^2 \; \bigg] \bigg\} dQ^{\Na}\\
	& \leq \frac{\ka}{\la^2} \int \bigg\{ {\E}_{\hat{\mu}_n} \bigg[ \frac{\Ja^2 T}{\la^2} \bigg( \int_0^t  {G^{\alpha}_s}^2 ds \bigg)^2 \; \bigg] \bigg\} dQ^{\Na} \leq \frac{\Ja^2 \ka^2 T^2}{\la^6}
	\end{align*}

	And,
	\begin{align*}
	\int  \bigg( \int_0^t &  {\E}_{\hat{\mu}_n} \bigg[ \Lambda_t^{\Na} G^{\alpha}_t G^{\alpha}_s \bigg] {\E}_{\hat{\mu}_n} \bigg[ \Lambda_s^{\Na} G^{\alpha}_s \int_0^{s} G^{\alpha}_u \big(dW^{1_{\alpha}}_u-m^{\alpha}_{\hat{\mu}_n}(u)du\big) \bigg] ds \bigg)^2 dQ^{\Na} \leq \Big(\frac{\ka}{\la^2}\Big)^2 T  \int_0^t f^{\alpha}(s) ds.
	\end{align*}

	Eventually, we proved the following inequality:
	\begin{equation*}
	f^{\alpha}(t) \leq 3\Big(\frac{\ka^2 T}{\la^4} + \frac{\Ja^2 \ka^2 T^2}{\la^6} \Big) + 3\frac{\ka^2 T}{\la^4 } \int_0^t f^{\alpha}(s) ds
	\end{equation*}
	So that, by Gronwall lemma,
	\begin{equation*}
	\sup_{t \leq T} f^{\alpha}(t) \leq  3\frac{\ka^2 T}{\la^4}\big(1+\frac{\Ja^2 T}{\la^2}\big)\exp{\Big\{3\frac{\ka^2 T}{\la^4 } \Big\} }.
	\end{equation*}
	Thus, (\ref{ineqI}) implies:
	\begin{equation*}
	I(Q^n|P^{\otimes n}) \leq n \sum_{\alpha=1}^n \Big( 3\frac{\ka^2 T^2}{\la^4}\big(1+\frac{\Ja^2 T}{\la^2}\big)\exp{\Big\{3\frac{\ka^2 T}{\la^4 } \Big\} } + \frac{\Ja^2 T}{\la^2} \Big)
	\end{equation*}
	\end{proof}

	\subsection{Identification of the mean-field equations}\label{sec:LimitIdentification}
	We have seen that the empirical laws $\hat{\mu}_n$ satisfy a weak large deviations upper bound with good rate function $H$. In order to identify the limit of the system, we study the minima of the functions $H$ through a variational study. We will show in section~\ref{sec:Charact}that any minimum of all $H$ are measures $Q \ll P$ satisfying the implicit equation:
	\begin{equation}\label{eq:densityMinimum}
		\der{Q}{P} = \int \exp\left\{ \sum_{\alpha}\int_0^T G_t^{\alpha}+m^{\alpha}_Q(t) dW^{\alpha}_t - \frac 1 2 \int_0^T (G_t^{\alpha}+m^{\alpha}_Q(t))^2 dt\right\} d\gamma_Q
	\end{equation}
	We will then prove in section~\ref{sec:ExistUniqueLim} that there exists a unique probability measure $Q$ satisfying~\eqref{eq:densityMinimum}. This law will be further analyzed in the next section.

	\subsubsection{Variational characterization of the minima of the good rate function}\label{sec:Charact}

	The large deviation principle ensures that the sequence of empirical measures converge, and that the possible limits minimize the good rate functions. We hence need to identify the minima and show that these are uniquely defined by equation~\eqref{eq:densityMinimum}. To this purpose, we start by showing that any probability density achieving the minimum of all the $H$ is equivalent to $P$. To this end, we show the following technical result:
	\begin{lemma}\label{lem:QequivP}
		Let $Q$ be a probability measure on $\C$ which minimizes $H$. Then $Q\ll P$. Moreover, noting $B=\{\omega ; \der{Q}{P}=0\}$ and $\delta=P(B)$, we have, noting $Q_s=\frac{Q+s\mathbbm{1}_{B}P}{1+s\delta}$:
		\begin{itemize}
			\item $I(Q_s\vert P) = I(Q\vert P) + s\delta \log(s) + O(s)$
			\item $\Gamma(Q_s)=\Gamma(Q)+O(s)$
		\end{itemize}
	\end{lemma}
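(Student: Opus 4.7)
The absolute continuity $Q\ll P$ is essentially a bookkeeping consequence of the definition $H(\mu)=I(\mu|P)-\Gamma(\mu)$ (set to $+\infty$ when $I(\mu|P)=\infty$). The constant measure $P$ itself satisfies $I(P|P)=0$ and $\Gamma(P)<\infty$ (since $\m_P$ and $K_P$ are bounded by $\Ja/\la$ and $\ka/\la^2$ respectively), so $\inf H<\infty$ and any minimizer $Q$ must satisfy $I(Q|P)<\infty$, which is equivalent to $Q\ll P$.

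For the entropy expansion, I would compute $dQ_s/dP$ by cases: on $B^c$, $dQ_s/dP=(1+s\delta)^{-1}dQ/dP$, while on $B$, $dQ/dP=0$ forces $dQ_s/dP=s/(1+s\delta)$. Using $Q_s(B)=s\delta/(1+s\delta)$ and $Q_s(B^c)=1/(1+s\delta)$, the identity
\[
I(Q_s|P)=\frac{s\delta}{1+s\delta}\log\frac{s}{1+s\delta}+\frac{1}{1+s\delta}\int_{B^c}\log\!\Big(\frac{1}{1+s\delta}\frac{dQ}{dP}\Big)\,dQ
\]
follows, and a Taylor expansion of $(1+s\delta)^{-1}$ and $\log(1+s\delta)$ at $s=0$ yields $I(Q_s|P)=I(Q|P)+s\delta\log s+O(s)$, where the $O(s)$ absorbs the finite quantities $-s\delta$ and $-s\delta I(Q|P)$.

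For $\Gamma(Q_s)=\Gamma(Q)+O(s)$ I would exploit the decomposition $\Gamma=\Gamma_1+\Gamma_2$ from Proposition~\ref{pro:DécompositionGamma}. The maps $\mu\mapsto\m_\mu(t)$ and $\mu\mapsto K_\mu(s,t)$ are affine in $\mu$, and because $|\Sag|\le 1$ uniformly, $\|\m_{Q_s}-\m_Q\|_\infty$ and $\|K_{Q_s}-K_Q\|_\infty$ are both $O(s)$ (indeed dominated by $s\delta/(1+s\delta)$ times the supremum of the relevant $\int \Sag\,dP$). Since $\Gamma_1$ depends on $\mu$ only through $\m_\mu$ and $K_\mu$, the Lipschitz estimate already obtained at the discrete level (Lemma~\ref{lemma1}(i)) passes to the limit and gives $\Gamma_1(Q_s)-\Gamma_1(Q)=O(s)$. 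For $\Gamma_2$ one must also accommodate the outer integral: writing $dQ_s=(1+s\delta)^{-1}(dQ+s\,\mathbbm{1}_B dP)$ splits the integral into two pieces. The $dQ$-piece, with parameters $(\m_{Q_s},\widetilde K^T_{Q_s})$ replaced by $(\m_Q,\widetilde K^T_Q)$, differs from $\Gamma_2(Q)$ by $O(s)$ by the same parameter-continuity argument; the $\mathbbm{1}_B\,dP$-piece is bounded independently of $s$ by the uniform bounds $|\m_\mu|\le\Ja/\la$, $K_\mu\le\ka/\la^2$ (which put the stochastic integrals $\int_0^T\m_\mu'\cdot d\W_t$ into $L^2(P)$ with a norm independent of $\mu$), and is multiplied by a prefactor $O(s)$.

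The main obstacle is the rigorous control of the $\Gamma_2$ contribution when $Q_s$ replaces $Q$ in the outer integration: one must verify both that the integrand $\int(\int_0^T\Gv_t'\cdot(d\W_t-\m_\mu(t)dt))^2\,d\gamma_{\widetilde K_\mu^T}+\int_0^T\m_\mu'\cdot d\W_t$ is in $L^1(P)$ with norm uniform in $\mu$, and that its dependence on $(\m_\mu,\widetilde K^T_\mu)$ is continuous; the first follows from Itô isometry together with the uniform bound on $\m_\mu$ and $K_\mu$, and the second from standard Gaussian calculus of the kind used to obtain the decomposition of $\Gamma$. Combining these, the decomposition $H(Q_s)=H(Q)+s\delta\log s+O(s)$ will be available for the variational argument of section~\ref{sec:Charact}.
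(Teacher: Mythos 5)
Your treatment of the first bullet and of the absolute continuity $Q\ll P$ is correct and is essentially the paper's own computation: one writes $\der{Q_s}{P}=\frac{1}{1+s\delta}\big(\der{Q}{P}+s\mathbbm{1}_B\big)$, splits over $B$ and $B^c$, and expands in $s$; nothing to add there.

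The second bullet is where your proposal has a genuine gap. You claim $\Gamma(Q_s)-\Gamma(Q)=O(s)$ by combining the $O(s)$ sup-norm perturbation of $(\m_\mu,K_\mu)$ with the Lipschitz estimates of Lemma~\ref{lemma1}(i),(iv). But those estimates are Lipschitz with respect to the Vaserstein distance $d_T$, i.e.\ they are coupling-based: every bound in their proof passes through quantities of the form $\big(\int\int_0^T|S_{\alpha\gamma}(x^{\gamma}_t)-S_{\alpha\gamma}(y^{\gamma}_t)|^2\,dt\,d\xi(x,y)\big)^{1/2}$ for a coupling $\xi$ of the two measures. Since $\mathbbm{1}_BP$ is singular with respect to $Q$ (indeed $Q(B)=0$), any coupling of $Q_s$ and $Q$ must transport a mass of order $s$ over an order-one distance, so $d_T(Q_s,Q)$ and all these coupling functionals are of order $\sqrt{s}$, not $s$. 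The same square-root loss occurs if you try to argue through the covariance alone: an $O(s)$ perturbation of $K_\mu$ only yields an $O(\sqrt{s})$ coupling of the associated Gaussian processes in general (the map $K\mapsto$ Gaussian law is only $1/2$-H\"older in this sense), so both your $\Gamma_1$ and $\Gamma_2$ estimates come out as $O(\sqrt{s})$. This is fatal for the purpose of the lemma: in the variational argument of section~\ref{sec:Charact} one needs $H(Q_s)-H(Q)=s\delta\log s+o(s|\log s|)$ to be negative for small $s$, and an error $O(\sqrt{s})$ dominates $s\delta\log s$.

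The missing idea, which is exactly how the paper gets $O(s)$, is a second-order Gaussian expansion exploiting a cancellation: write the centered Gaussian with covariance $K_{Q_s}$ as $G^s=\frac{G+\sqrt{s}\,V}{\sqrt{1+s\delta}}$ with $G\sim\gamma_{K_Q}$ and $V\sim\gamma_{K_{\mathbbm{1}_BP}}$ independent, and $\m_{Q_s}=\frac{\m_Q+s\,\m_{\mathbbm{1}_BP}}{1+s\delta}$, then expand the exponential inside $\Gamma$ in powers of $\sqrt{s}$. The term of order $\sqrt{s}$ is \emph{linear} in the centered process $V$, hence vanishes after integrating against $\gamma_{\mathbbm{1}_BP}$; the remainder is $O(s)$, controlled using the uniform bounds $K_\mu\leq\ka/\la^2$, $|m^{\alpha}_\mu|\leq\Ja/\la$ and the bounded mean quadratic variation of $W^{\alpha}$ under $Q_s$ (and the replacement of the outer integral $dQ_s$ by $dQ$ costs only $O(s)$, as you correctly note). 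Without this cancellation (or an equivalent differentiability-in-the-covariance argument, e.g.\ Gaussian interpolation between $K_Q$ and $K_{Q_s}$), your route cannot improve $O(\sqrt{s})$ to the required $O(s)$, for $\Gamma_1$ or for $\Gamma_2$.
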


	\begin{proof}
		If $Q$ minimizes $H$, then necessarily $I(Q\vert P)$ is finite, meaning that $Q\ll P$. Moreover, it is easy to see that:
		\begin{align*}
			I(Q_s\vert P)&=\int \log(\der{Q_s}{P}) dQ_s\\
			&=\frac{1}{1+s\delta} \bigg\{\int (\log(\der{Q}{P} + s\mathbbm{1}_{B}) - \log(1+s\delta)) (\der{Q}{P} +s \mathbbm{1}_{B}) dP\bigg\}\\
			&=\frac{1}{1+s\delta} \bigg\{ \int_{B^c} \log(\der{Q}{P})\der{Q}{P}dP +  s\log(s)\int_{B} dP - \log(1+s\delta) \int (\der{Q}{P} +s \mathbbm{1}_{B}) dP \bigg\}\\
			&= \frac{1}{1+s\delta} \Big( I(Q\vert P) + s\delta \log(s) - (1+s\delta)\log(1+s\delta)\Big)
		\end{align*}
		which proves the first point.

		The second point is proved using standard Gaussian calculus noting that if $G$ and $V$ are independent centered Gaussian processes with covariance $K_Q$ and $K_{\mathbbm{1}_B P}$, then
		 $G^s=\frac{G+\sqrt{s}V}{\sqrt{1+\delta s}}$ has the covariance $K_{Q_s}$.
	We hence have $\m_{Q_s}(t) = \frac{\m_{Q}(t)+s\,\m_{\mathbbm{1}_B P}(t)}{(1+s\delta)}$, and we can write:
		\begin{align*}
			\Gamma(Q_s) & = \int \log\Bigg (\int \exp \bigg \{\int_0^T \bigg( \frac{G_t+\sqrt{s}V_t}{\sqrt{1+s\delta}} + \frac{\m_Q(t)+s\,\m_{\mathbbm{1}_B P}(t)}{1+s\delta}\bigg)' \cdot d\W_t \\
			& -\frac 1 2 \int_0^T \left\Vert\frac{G_t+\sqrt{s}V_t}{\sqrt{1+s\delta}} + \frac{\m_Q(t)+s\,\m_{\mathbbm{1}_B P}(t)}{1+s\delta}\right\Vert^2 \, dt\bigg\}d\gamma_Q \otimes\gamma_{\mathbbm{1}_B P}\Bigg) dQ_s
		\end{align*}
		The exponential term is given by:
		\begin{align*}
			\exp \bigg \{\sum_{\alpha}\int_0^T \Big( G_t^{\alpha}+m_Q^{\alpha}(t) \Big)\, dW^{\alpha}_t -\frac 1 2 \int_0^T \left(G^{\alpha}_t + m^{\alpha}_Q(t)\right)^2 \, dt\bigg\} \\
			\times \Bigg ( 1 + \sqrt{s}  \Big\{\sum_{\alpha} \int_0^T V_t^{\alpha} dW^{\alpha}_t -\int_0^T V_t^{\alpha}\,(G^{\alpha}_t+m^{\alpha}_Q(t))dt\Big\} + sR(s)\Bigg)
		\end{align*}
		Using the fact that $G^{\alpha}$ and $V^{\alpha}$ are Gaussian processes with bounded covariances and using the fact that the mean quadratic variation of $W^{\alpha}$ under $Q_s^{\alpha}$ is also bounded (see~\cite{ben-arous-guionnet:95}) we can obtain that:
		\[\int \int  R(s)\frac{\exp \bigg \{\sum_{\alpha=1}^M\int_0^T \Big( G^{\alpha}_t+m^{\alpha}_Q(t)\, \Big) dW^{\alpha}_t -\frac 1 2 \int_0^T \left(G^{\alpha}_t + m^{\alpha}_Q(t)\right)^2 \, dt\bigg\}}{\int \exp \bigg \{\sum_{\alpha=1}^M\int_0^T  \Big(G^{\alpha}_t+m^{\alpha}_Q(t)\Big)\, dW^{\alpha}_t -\frac 1 2 \int_0^T \left(G^{\alpha}_t + m^{\alpha}_Q(t)\right)^2 \, dt\bigg\}d\gamma_Q\otimes\gamma_{\mathbbm{1}_B P}} d\gamma_Q\otimes\gamma_{\mathbbm{1}_B P} dQ_s = O(1)\]
		and we finally obtain:
		\begin{align*}
			\Gamma(Q_s)  & = \int \log\Bigg (\int \exp \bigg \{\sum_{\alpha=1}^M\int_0^T \Big(G^{\alpha}_t+m^{\alpha}_Q(t)\Big) \, dW^{\alpha}_t -\frac 1 2 \int_0^T \left(G^{\alpha}_t + m^{\alpha}_Q(t) \right)^2 \, dt\bigg\} \\
			&\times \left(1+\sqrt{s} \left( \sum_{\alpha=1}^M \int_0^T V_t^{\alpha} dW^{\alpha}_t -\int_0^T V_t^{\alpha}\,(G^{\alpha}_t+m^{\alpha}_Q(t))dt\right)\right)  d\gamma_Q \otimes\gamma_{\mathbbm{1}_B P} \Bigg )dQ_s +O(s)
		\end{align*}
		Integrating with respect to $\gamma_{\mathbbm{1}_B P}$ and using the independence of $G$ and $V$ and the fact that $V$ is centered, we obtain the desired result.
	\end{proof}

	This lemma ensures that any minimum of $H$ is equivalent to $P$. Indeed, if $\delta^{\alpha}>0$, then the result of lemma~\ref{lem:QequivP} implies that $H(Q_s)-H(Q)\sim s\delta \log(s)$ which is strictly negative for small $s$, contradicting the fact that we assumed that $Q$ minimized $H$. We therefore necessarily have $\delta=P(B)=0$, that is $Q\simeq P$.

	Let us now characterize the minima of $H$. To this purpose, we use a variational formulation to show that any minimum of $H$ satisfies equation~\eqref{eq:densityMinimum}, and start by proving the following technical result:
	\begin{lemma}\label{lem:Density}
		Let $\Phi$ be a positive and bounded measurable function on $\C$ such that $\int \Phi dQ=1$, and denote $\psi=\Phi-1$ and $Q_s(\Phi)=\frac{1+s\Phi}{1+s} Q$. We have:
		\begin{itemize}
			\item $I(Q_s(\Phi)\vert P)=I(Q\vert P) + s\int \Psi \log \der{Q}{P}dQ+O(s^2)$
			\item $\Gamma(Q_s(\Phi)) = \Gamma(Q) + s \bigg\{ \int \left(\log \int \exp \Big\{\sum_{\alpha}\int_0^T( G_t^{\alpha}+m^{\alpha}_Q(t))\, dW^{\alpha}_t - \frac 1 2 \int_0^T (G_t^{\alpha}+m^{\alpha}_Q(t))^2 dt\Big\}d\gamma_Q+Y_T\right) d\Psi Q + \int Y_T(y) dQ(y) + C_Q(\Phi) \bigg\} + O(s^{\frac 3 2})$ where $Y_T$ is an adapted process with finite variation and $C_Q$ is a bounded function.
		\end{itemize}	
	\end{lemma}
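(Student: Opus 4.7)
The plan is to treat the two bullet points separately. The first, concerning the relative entropy, is a direct Taylor-expansion computation; the second, concerning $\Gamma(Q_s(\Phi))$, requires perturbing both the mean $\m_{Q_s}$ and the covariance $K_{Q_s}$ simultaneously, and is the core of the lemma.

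For the entropy bound, I would write $\der{Q_s(\Phi)}{P}=\frac{1+s\Phi}{1+s}\der{Q}{P}$, so that $\log\der{Q_s(\Phi)}{P}=\log\der{Q}{P}+\log(1+s\Phi)-\log(1+s)$. Since $\Phi$ is bounded, Taylor expansion gives $\log(1+s\Phi)=s\Phi-\tfrac{s^2}{2}\Phi^2+O(s^3)$ and similarly for $\log(1+s)$, uniformly on $\C$. Then
\[
I(Q_s(\Phi)|P)=\int\bigl(\log\tfrac{dQ}{dP}+s\Psi+O(s^2)\bigr)\tfrac{1+s\Phi}{1+s}\,dQ,
\]
and the normalization $\int\Phi\,dQ=1$ (equivalently $\int\Psi\,dQ=0$) eliminates the cross-terms of order $s$ that do not contain $\log\der{Q}{P}$, yielding the claimed identity with remainder $O(s^2)$.

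For the $\Gamma$ expansion, the key observation is that $\Phi Q$ is a positive measure, so the covariance decomposes as $K_{Q_s(\Phi)}=\frac{K_Q+sK_{\Phi Q}}{1+s}$, and a Gaussian $\Gv^s\sim\gamma_{K_{Q_s}}$ may be realized as $\Gv^s=\frac{\Gv+\sqrt{s}\,\mathbf{V}}{\sqrt{1+s}}$ with $\Gv\sim\gamma_Q$ and $\mathbf{V}\sim\gamma_{\Phi Q}$ independent, in direct analogy with the construction in Lemma \ref{lem:QequivP}. The mean likewise satisfies $\m_{Q_s(\Phi)}=\m_Q+s\tilde{\m}_\Psi+O(s^2)$ with $\tilde{\m}_\Psi^\alpha(t)=\int m_{\delta_x}^\alpha(t)\Psi(x)\,dQ(x)$. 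I would then substitute into the exponent, writing
\[
G_t^{s,\alpha}+m_{Q_s}^\alpha(t)=G_t^\alpha+m_Q^\alpha(t)+\sqrt{s}\,V_t^\alpha+s\bigl(\tilde m_\Psi^\alpha(t)-\tfrac12 G_t^\alpha\bigr)+O(s^{3/2}),
\]
expand the inner integrand as $e^{E^0}e^{\Delta E}$ with $\Delta E$ of order $\sqrt{s}$, and use $e^{\Delta E}=1+\Delta E+\tfrac12(\Delta E)^2+\cdots$. Integration against $d\gamma_Q\otimes d\gamma_{\Phi Q}$ kills the $\mathbf{V}$-linear pieces (as $\mathbf{V}$ is centered) while $(\Delta E)^2$ contributes a genuine $O(s)$ deterministic correction via Gaussian moments, accounting for $C_Q(\Phi)$.

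Taking the outer $\log$ and expanding against $dQ_s=\frac{1+s\Phi}{1+s}dQ$ produces two sources of first-order terms: a contribution $s\int(\log F)\Psi\,dQ$, where $F$ denotes the inner Gaussian expectation evaluated at $(Q,\m_Q)$, arising from the Radon–Nikodym shift of the outer integral; and a contribution $\int Y_T\,d\Psi Q+\int Y_T\,dQ+C_Q(\Phi)$ collecting the stochastic integrals $\int V_t\,dW_t^\alpha$ and $\int V_t m_Q^\alpha(t)dt$ that appear from the Girsanov-type exponent perturbation, which together define the adapted finite-variation process $Y_T$. The main obstacle will be the bookkeeping of these $s$-order terms and the verification that the remainder is genuinely $O(s^{3/2})$: this requires uniform Gaussian moment bounds for $\Gv$ and $\mathbf{V}$ (already available since $K_Q$ and $K_{\Phi Q}$ are bounded by $\ka$ times the sup of $\Phi$), together with a careful estimate showing that the cubic-in-$\sqrt{s}$ remainder in the Taylor expansion of $e^{\Delta E}$ retains finite expectation after normalization by $\int e^{E^0}d\gamma_Q$.
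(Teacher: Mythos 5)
Your proposal is correct and follows essentially the same route as the paper: the first point is the same direct expansion of $\log\frac{1+s\Phi}{1+s}$ against $\frac{1+s\Phi}{1+s}dQ$ using $\int\Psi\,dQ=0$, and for the second point the paper uses exactly your Gaussian interpolation $\frac{G+\sqrt{s}V}{\sqrt{1+s}}$ with $V$ centered of covariance $K_{\Phi Q}$ independent of $G$, the $\sqrt{s}$-expansion of the exponent, cancellation of the order-$\sqrt{s}$ terms by the centering of $V$, and the outer expansion against $dQ_s(\Phi)$ yielding the $s\int(\log F)\Psi\,dQ$ contribution. The only minor bookkeeping difference is the assignment of the order-$s$ pieces: in the paper $Y_T$ arises from the square of the $V$-linear term (the order-$s$ part of $\frac 1 2(\Delta E)^2$, with $V$ integrated out through $K_{\Phi Q}$, which is why $Y_T$ is a squared stochastic integral of the sigmoids), while $C_Q(\Phi)$ collects the remaining order-$s$ terms (the $-\frac{1}{2}G^{\alpha}_t+m^{\alpha}_{\Psi Q}(t)$ drift pieces and the $-\frac 1 2 (V^{\alpha}_t)^2$ term), precisely the reconciliation your sketch defers to the final bookkeeping.
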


	\begin{proof}
		The first point is simply proved as follows:
		\begin{align*}
			I(Q_s(\Phi)\vert P) &= \int \log\left(\frac{1+s\Phi}{1+s}\right)\frac{1+s\Phi}{1+s}dQ + \int \log\left(\der{Q}{P}\right)\frac{1+s\Phi}{1+s}dQ.
		\end{align*}
		Noting that $\frac{1+s\Phi}{1+s} = 1+s\psi + O(s^2)$ and $\int\psi dQ=0$, we readily obtain the desired result.

		The second point is slightly more delicate but based on the same argument as outlined in the proof of lemma~\ref{lem:QequivP}. We introduce two independent centered Gaussian processes $G$ and $V$ with covariances $K_Q$ and $K_{\Phi Q}$ respectively and build the Gaussian process $\frac{G+\sqrt{s}V}{\sqrt{1+s}}$ that has the covariance given by $K_{Q_s(\Phi)}$. We then express $\Gamma(Q_s(\Phi))$ as:
		\begin{multline*}
			\Gamma(Q_s(\Phi)) = \int \Bigg(\log \int \exp \Big\{\sum_{\alpha}\int_0^T \frac{G^{\alpha}_t+\sqrt{s}V_t^{\alpha}}{\sqrt{1+s}} + \frac{m^{\alpha}_Q(t) +s\,m^{\alpha}_{\Phi Q}(t)}{1+s} dW^{\alpha}_t \\- \frac 1 2 \int_0^T \left(\frac{G^{\alpha}_t+\sqrt{s}V^{\alpha}_t}{\sqrt{1+s}}+\frac{m^{\alpha}_Q(t)+s\,m^{\alpha}_{\Phi Q}(t)}{1+s} \right)^2 dt\Big\}d\gamma_Q\otimes \gamma_{\Phi Q}\Bigg)dQ_s(\Phi)
		\end{multline*}
		and a series expansion ensures that:
		 \begin{align*}
		 	& \exp  \Big\{ \sum_{\alpha} \int_0^T \left(\frac{G^{\alpha}_t+\sqrt{s}V^{\alpha}_t}{\sqrt{1+s}} + \frac{m^{\alpha}_Q(t)+s\,m^{\alpha}_{\Phi Q}(t)}{1+s} \right)dW^{\alpha}_t - \frac 1 2 \int_0^T \left(\frac{G^{\alpha}_t+\sqrt{s}V^{\alpha}_t}{\sqrt{1+s}}+\frac{m^{\alpha}_Q(t)+s\,m^{\alpha}_{\Phi Q}(t)}{1+s} \right)^2 dt\Big\}\\
			& =\exp  \Big\{ \sum_{\alpha}\int_0^T \left(G^{\alpha}_t + m^{\alpha}_Q(t)\right) dW^{\alpha}_t - \frac 1 2 \int_0^T \left(G^{\alpha}_t + m^{\alpha}_Q(t)\right)^2 dt\Big\}\\
			&\quad \times \Bigg(1+ \sqrt{s} \bigg\{\sum_{\alpha} \int_0^T V^{\alpha}_t dW^{\alpha}_t - \int_0^T V^{\alpha}_t\left(G^{\alpha}_t + m^{\alpha}_Q(t)\right)dt\bigg\} \\
			& \qquad +  s\Bigg\{ \Bigg[ \sum_{\alpha}\frac{1}{2} \left(\int_0^T V^{\alpha}_t dW^{\alpha}_t - \int_0^T V^{\alpha}_t\left(G^{\alpha}_t + m^{\alpha}_Q(t)\right)dt\right)^2  + \int_0^T \bigg(-\frac{G^{\alpha}_t}{2}- m^{\alpha}_Q(t) +m^{\alpha}_{\Phi Q}(t)\bigg)dW^{\alpha}_t \\
			&\qquad \qquad- \frac 1 2 \int_0^T \left((V^{\alpha}_t)^2 + 2( G^{\alpha}_t + m^{\alpha}_Q(t)) ( -\frac{ G^{\alpha}_t}{2}+m^{\alpha}_{\Phi Q}(t)-m^{\alpha}_Q(t)) \right) dt\Bigg] \\
 &\qquad \qquad + \sum_{\alpha \neq \gamma} \bigg( \int_0^T V^{\alpha}_t dW^{\alpha}_t - \int_0^T V^{\alpha}_t\left(G^{\alpha}_t + m^{\alpha}_Q(t)\right)dt\bigg) \bigg(\int_0^T V^{\gamma}_t dW^{\gamma}_t - \int_0^T V^{\gamma}_t\left(G^{\gamma}_t + m^{\gamma}_Q(t)\right)dt\bigg)\Bigg\} + O(s^{\frac 3 2})\Bigg)
		 \end{align*}
		Integrating with respect to $\gamma_{\Phi Q}$ and injecting this expansion in the expression of $\Gamma$ yields the expression:
		\begin{multline*}
			\Gamma(Q_s(\Phi)) = \Gamma(Q) + s \Bigg \{\int \Bigg(\log \int \exp \Big\{ \sum_{\alpha} \int_0^T (G^{\alpha}_t+m^{\alpha}_Q) dW^{\alpha}_t \\- \frac 1 {2} \int_0^T \left(G^{\alpha}_t + m^{\alpha}_Q\right)^2 dt\Big\}d\gamma_Q\Bigg) \Psi dQ + \int X_T(x,\Phi)dQ \Bigg\}+O(s^{\frac 3 2})
		\end{multline*}
		with
		\begin{multline*}
			X_T(x,\Phi)=\sum_{\alpha} \int d\gamma^{x}\otimes d\gamma_{\Phi Q} \Bigg\{\frac{1}{2} \left(\int_0^T V^{\alpha}_t dW^{\alpha}_t - \int_0^T V^{\alpha}_t\left(G^{\alpha}_t + m^{\alpha}_Q(t)\right)dt\right)^2 \\
			+ \int_0^T \bigg(-\frac{G^{\alpha}_t}{2}- m^{\alpha}_Q(t) +m^{\alpha}_{\Phi Q}(t)\bigg)dW^{\alpha}_t
	- \frac 1 2 \int_0^T \left((V^{\alpha}_t)^2 + 2( G^{\alpha}_t + m^{\alpha}_Q(t)) ( -\frac{ G^{\alpha}_t}{2}+m^{\alpha}_{\Phi Q}(t)-m^{\alpha}_Q(t)) \right) dt\Bigg\}
		\end{multline*}
		and
		\[d\gamma^{x}(\omega) = \frac{\exp\left\{\sum_{\alpha}\int_0^T \left(G^{\alpha}_t(\omega) + m^{\alpha}_Q(t)\right) dW^{\alpha}_t(x) - \frac 1 2 \int_0^T \left(G^{\alpha}_t(\omega) + m^{\alpha}_Q(t)\right)^2 dt\right\}}{\int \exp\left\{\sum_{\alpha}\int_0^T \left(G^{\alpha}_t(\tilde{\omega}) + m^{\alpha}_Q(t)\right) dW^{\alpha}_t(x) - \frac 1 2 \int_0^T \left(G^{\alpha}_t(\tilde{\omega}) + m^{\alpha}_Q(t)\right)^2 dt\right\}d\gamma_{Q}(\tilde{\omega})} \, d\gamma_{Q}(\omega)\]

	It is easy to see that $m^{\alpha}_{\Psi Q}= m^{\alpha}_{\Phi Q} - m^{\alpha}_Q$, so that:
		\begin{multline*}
			X_T(x,\Phi)=\sum_{\alpha}\int \Bigg\{\frac{1}{2} \left(\int_0^T V^{\alpha}_t dW^{\alpha}_t - \int_0^T V^{\alpha}_t\left(G^{\alpha}_t + m^{\alpha}_Q(t)\right)dt\right)^2 + \int_0^T \bigg(-\frac{G^{\alpha}_t}{2}+m^{\alpha}_{\Psi Q}(t)\bigg)dW^{\alpha}_t \\
	- \frac 1 2 \int_0^T ((V^{\alpha}_t)^2 - {G^{\alpha}_t}^2) dt + \int_0^T  G^{\alpha}_t \left( \frac{m^{\alpha}_Q(t)}{2} - m^{\alpha}_{\Psi Q}(t) \right) dt - \int_0^T m^{\alpha}_{\Psi Q}(t)m^{\alpha}_Q(t) dt\Bigg\} d\gamma^{x}\otimes d\gamma_{\Phi Q}
		\end{multline*}
		or,
		\begin{equation*}
			X_T(x,\Phi)=\frac{1}{2} \int \int Y_T(x,y)  d\Phi Q(y) d\gamma^{x} + C_Q(x,\Phi)
		\end{equation*}
	where
	\begin{equation*}
	Y_T(x,y)= \sum_{\alpha,\gamma} \frac{\sag^2}{\la^2} \left(\int_0^T \Sag(y^{\gamma}_{t-\taag}) dW^{\alpha}_t - \int_0^T \Sag(y^{\gamma}_{t-\taag}) \left(G^{\alpha}_t + m^{\alpha}_Q(t)\right)dt\right)^2.
	\end{equation*}
	and
	\begin{multline*}
	C_Q(x,\Phi)= \sum_{\alpha}\int \int_0^T \bigg(-\frac{G^{\alpha}_t}{2}+m^{\alpha}_{\Psi Q}(t)\bigg)dW^{\alpha}_t + \int_0^T \frac 1 2 {G^{\alpha}_t}^2 + G^{\alpha}_t \left( \frac{m^{\alpha}_Q(t)}{2} - m^{\alpha}_{\Psi Q}(t) \right) dt  d\gamma^{x} \\
	- \frac 1 2 \int_0^T K^{\alpha}_{\Phi Q}(t,t) dt  - \int_0^T m^{\alpha}_{\Psi Q}(t)m^{\alpha}_Q(t) dt
	\end{multline*}

	Let $Y_T(y)=\int\int Y_T(x,y) d\gamma^x dQ(x)$.
	As $Y_T(x,y) \geq 0$, Fubini Theorem ensures that:
	\begin{multline*}
	\int X_T(x,\Phi) dQ(x) = \int Y_T(y) d\Psi Q(y) + \int Y_T(y) dQ(y) +\Exp_Q \bigg[ C_Q(x,\Phi) \bigg] \\
	\end{multline*}
	As $\forall \alpha \in \{ 1 \cdots M \}  \forall \mu, K^{\alpha}_{\mu} \leq \frac{\ka}{\la^2}, m^{\alpha}_{\mu} \leq \frac{\Ja}{\la}$, we conclude by stating that $C_Q(x,\Phi)$ is an $Q$-integrable process, which ends the proof.
	\end{proof}

	We are now able to prove that any minimum satisfies equation~\eqref{eq:densityMinimum}.
	A necessary condition for $Q$ to minimize $H$ is
	\begin{equation}\label{eq:necessaryCondition}
	\lim_{s \to 0} \frac 1 s \left(H(Q_s(\Phi))- H(Q)\right) \geq 0.
	\end{equation}
	But Lemma~\ref{lem:Density} implies that, for every $\Phi$ such that $\int \Phi dQ=1$ ,
	\begin{multline}\label{eq:diffH}
	H(Q_s(\Phi))- H(Q)= s\int\bigg\{  \log{\der{Q}{P}} - \log\bigg(\int \exp\Big\{\sum_{\alpha}\int_0^T G^{\alpha}_t + m^{\alpha}_Q(t)dW^{\alpha}_t\\
	- \frac 1 2 \int_0^T \Big(G^{\alpha}_t + m^{\alpha}_Q(t)\Big)^2 dt \Big\} d\gamma_Q \bigg) - Y_T \bigg\} d\Psi Q - s \bigg\{ \int Y_T dQ +C_Q(\Phi) \bigg\} + O(s^{\frac 3 2})
	\end{multline}
	Let
	\[
	Z_T=\log{\der{Q}{P}} - \log\bigg(\int \exp\Big\{\sum_{\alpha} \int_0^T G^{\alpha}_t + m^{\alpha}_Q(t)dW^{\alpha}_t - \frac 1 2 \int_0^T \Big(G^{\alpha}_t + m^{\alpha}_Q(t)\Big)^2 dt \Big\} d\gamma_Q \bigg) - Y_T.
	\]
	We can rewrite ~\eqref{eq:diffH} as:
	\begin{equation*}
	H(Q_s(\Phi))- H(Q)= s\int Z_T \Psi dQ - s \bigg\{ \int Y_T dQ +C_Q(\Phi) \bigg\} + O(s^{\frac 3 2}).
	\end{equation*}
	Lets show that for any bounded measurable function $\Psi$ with $\int \Psi dQ=0$, we have $\int \Psi Z_T dQ=0$ and, as a consequence, we can find a constant $c_Q$ such that $Z_T=c_Q$ almost surely under $Q$, and thus $P$-almost surely.
	Indeed, if not, we can find $\Psi$ such that $\int Z_T \Psi dQ\neq 0$ and $ \int \Psi dQ=0$. Now let $\Psi_c=c\Psi$. Choosing $c=-d\;\textrm{sign}(\int Z_T \Psi dQ)$ with $d>0$ large enough, it is clear that $\Psi_c$ satisfies the condition of Lemma~\ref{lem:Density}, and moreover:
	\begin{align*}
	\lim_{s \to 0} \frac 1 s \left(H(Q_s(\Phi_c))- H(Q)\right) & = c \int Z_T \Psi dQ -  \int Y_T dQ -C_Q(\Phi_c)  \\
	& \leq -d \vert\int  Z_T \Psi dQ \vert -   \int Y_T dQ +{c_Q}
	\end{align*}
	which is strictly negative for $d$ big enough. Hence, we find a contradiction with condition ~\eqref{eq:necessaryCondition}, so that $P$ almost surely we have the equality $Z_T=c_Q$ .
	But $\left(\der{Q}{P}\mid_{\mathcal{F}_t} \right)_{t\leq T}$ must be a $(\C,(\mathcal{F}_t)_{t\leq T}, \mathcal{F}_T, P)$ local martingale. Since $\left(\int \exp\Big\{\sum_{\alpha}\int_0^T G^{\alpha}_t + m^{\alpha}_Q(t)dW^{\alpha}_t - \frac 1 2 \int_0^T \Big(G^{\alpha}_t + m^{\alpha}_Q(t)\Big)^2 dt \Big\} d\gamma_Q\right)_{t\leq T}$ is a local martingale and $(Y_t)_{t\leq T}$ a process with finite variation, we have by uniqueness of semimartingale decomposition, we conclude that:
	\begin{equation*}
	\der{Q}{P} = \int \exp\left\{ \sum_{\alpha}\int_0^T G_t^{\alpha}+m^{\alpha}_Q(t) dW^{\alpha}_t - \frac 1 2 \int_0^T (G_t^{\alpha}+m^{\alpha}_Q(t))^2 dt\right\} d\gamma_Q.
	\end{equation*}

	\subsection{Existence, Uniqueness and precise characterization of the limit}\label{sec:ExistUniqueLim}
	As a consequence of the form of the density obtained from equation~\eqref{eq:densityMinimum} and the fact that the law of the uncoupled process $P$ is Gaussian, it is easy to show that when considering Gaussian initial conditions, any possible minimum of $H$ is a Gaussian process. Indeed, the characterization of the minima readily implies (as a simple application of Girsanov theorem) that the possible limits of the network equations are the law of the solutions of the implicit equation:
	\begin{equation}\label{eq:MFEwithU}
		d\bar{X}^{\alpha}_t = \left(-\frac{1}{\theta_{\alpha}}\bar{X}^{\alpha}_t + U^{\alpha,\bar{X}}_t\right)\,dt+\lambda_{\alpha}dW^{\alpha}_t
	\end{equation}
	where the processes $(W^{\alpha}_t)$ are independent Brownian motions and the processes $U^{\alpha,\bar{X}}_t$ are Gaussian processes with mean
	\[m^{\alpha}(t) = \sum_{\gamma=1}^M \bar{J}_{\alpha\gamma} \Exp[S_{\alpha\gamma}(\bar{X}^{\gamma}_{t-\tau_{\alpha\gamma}})]\]
	and covariance $C^{\alpha\gamma}(t,s)=0$ if $\alpha\neq \gamma$ and
	\[C^{\alpha\alpha}(t,s)=\sum_{\gamma=1}^M \sigma_{\alpha\gamma}^2 \Delta^{\alpha\gamma, \bar{X}}(t-\tau_{\alpha\gamma},s-\tau_{\alpha\gamma})\]
	where
	\[\Delta_{\alpha\gamma, \bar{X}}(t,s)=\Exp[S_{\alpha\gamma}(\bar{X}^{\gamma}_{t})S_{\alpha\gamma}(\bar{X}^{\gamma}_{s})]\]
	The relation given by~\eqref{eq:densityMinimum} provides a self-consistent equation on the moments of the Gaussian, which is given in the following:
	\begin{theorem}\label{pro:GaussianSolutions}
		Considering that the initial conditions are Gaussian, then possible minimum $Q=(Q^{\alpha},\alpha=1\cdots M)$ of the good rate function is Gaussian. Denoting by $\mu^{\alpha}(t)$ the mean of $Q^{\alpha}$ and by $C^{\alpha\beta}(t,s)=\Exp[X^{\alpha}_tX^{\beta}_s]$ their covariance, we have:
		\begin{equation}\label{eq:Means}
			\dot{\mu}^{\alpha}(t)=-\frac{1}{\theta_{\alpha}} \mu^{\alpha}(t) + \sum_{\gamma=1}^M \bar{J}_{\alpha\gamma}f_{\alpha\gamma}(\mu^{\gamma}(t-\tau_{\alpha\gamma}), C^{\alpha\alpha}(t-\tau_{\alpha\gamma}, t-\tau_{\alpha\gamma}))
		\end{equation}
		where $f_{\alpha\gamma}(\mu,v)=\int_{\R} S_{\alpha\gamma}(x) \frac{e^{-(x-\mu)^2/2v}}{\sqrt{2\pi v}}dx$. The covariance is equal to zero when $\beta\neq \alpha$ and:
		\begin{equation}\label{eq:Covariance}
			C^{\alpha}(t,s)=e^{-(t+s)/\theta_{\alpha}}\Big[C^{\alpha}(0,0)+ \frac{\ta \la^2}{2} (\exp{2(t\wedge s)/\ta} -1) + \sum_{\gamma=1}^M \sigma_{\alpha\gamma}^2\int_{0}^t\int_{0}^s e^{(u+v)/\theta_{\alpha}} \Delta_{\mu,C}^{\alpha\gamma}(u-\tau_{\alpha\gamma},v-\tau_{\alpha\gamma}) dudv\Big]
		\end{equation}
		where $\Delta_{\mu,C}^{\alpha\gamma}(u,v)=\mathbb{E}\Big[S_{\alpha\gamma}(X_{u}^{\gamma})S_{\alpha\gamma}(X_{v}^{\gamma})\Big]$ is a nonlinear function of $\mu^{\gamma}(u)$, $\mu^{\gamma}(v)$, $C^{\gamma\gamma}(u,v)$, $C^{\gamma\gamma}(u,u)$ and $C^{\gamma\gamma}(v,v)$.

		Moreover, there exists a unique solutions to these self consistent equations~\eqref{eq:Means} and~\eqref{eq:Covariance}.
	\end{theorem}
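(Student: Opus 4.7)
The plan is to start from the implicit SDE representation \eqref{eq:MFEwithU}, which the preceding variational analysis (equation \eqref{eq:densityMinimum} combined with Girsanov) identifies as the law of any minimum $Q$. First, I would establish Gaussianity: assuming a Gaussian initial condition and taking for granted that the driving processes $U^{\alpha,\bar{X}}$ are Gaussian with mean and covariance explicitly determined by $Q$ itself, the variation-of-constants formula
\[
\bar{X}^{\alpha}_t = e^{-t/\theta_{\alpha}} \bar{X}^{\alpha}_0 + \int_0^t e^{-(t-u)/\theta_{\alpha}} U^{\alpha,\bar{X}}_u\,du + \lambda_{\alpha} \int_0^t e^{-(t-u)/\theta_{\alpha}}\,dW^{\alpha}_u
\]
exhibits $\bar{X}^{\alpha}$ as an affine functional of Gaussian inputs, hence itself Gaussian. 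The self-referential aspect (that $U^{\alpha,\bar{X}}$ is Gaussian precisely because $Q$ is) is closed a posteriori: once the moment equations \eqref{eq:Means}--\eqref{eq:Covariance} are shown to admit a unique solution, the associated Gaussian law is the only admissible candidate for $Q$.

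The moment equations themselves follow by taking expectations. For \eqref{eq:Means}, I take the expectation of \eqref{eq:MFEwithU}: under $Q$ the variable $\bar{X}^{\gamma}_{t-\tau_{\alpha\gamma}}$ is Gaussian with mean $\mu^{\gamma}(t-\tau_{\alpha\gamma})$ and variance $C^{\gamma\gamma}(t-\tau_{\alpha\gamma}, t-\tau_{\alpha\gamma})$, so that $\Exp[S_{\alpha\gamma}(\bar{X}^{\gamma}_{t-\tau_{\alpha\gamma}})] = f_{\alpha\gamma}(\mu^{\gamma}(t-\tau_{\alpha\gamma}), C^{\gamma\gamma}(t-\tau_{\alpha\gamma}, t-\tau_{\alpha\gamma}))$. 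For \eqref{eq:Covariance}, I would subtract the mean from the variation-of-constants expression, form the product of the centred expressions at times $t$ and $s$, and take expectations. The vanishing of $C^{\alpha\beta}$ for $\alpha \neq \beta$ uses the independence of $W^{\alpha}$ and $W^{\beta}$, the independence of the Gaussian processes $U^{\alpha, \bar{X}}$ and $U^{\beta, \bar{X}}$ built into Theorem \ref{thm:Limit}, and the freedom to choose independent initial data across populations. For $\alpha = \beta$, It\^o's isometry produces the Brownian contribution $\tfrac{\theta_{\alpha}\lambda_{\alpha}^2}{2}(e^{2(t\wedge s)/\theta_{\alpha}}-1)$ after the factor $e^{-(t+s)/\theta_{\alpha}}$ is pulled out, and the double integral weighted by $\sum_\gamma \sigma_{\alpha\gamma}^2 \Delta^{\alpha\gamma}_{\mu,C}(u-\tau_{\alpha\gamma}, v-\tau_{\alpha\gamma})$ arises from the covariance of the time-integrated process $U^{\alpha,\bar{X}}$.

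For existence and uniqueness of the deterministic system \eqref{eq:Means}--\eqref{eq:Covariance}, I would exploit the delayed structure: setting $\tau_0 = \min_{\alpha,\gamma} \tau_{\alpha\gamma} > 0$, the right-hand side at time $t$ involves $(\mu, C)$ only through values at times at most $t - \tau_0$. One then constructs the solution inductively on successive slices $[k\tau_0, (k+1)\tau_0]$ by Picard iteration in the Banach space of continuous $\mu : [k\tau_0, (k+1)\tau_0] \to \R^M$ paired with continuous symmetric kernels $C$ on the corresponding square, under the uniform norm. Boundedness of $S_{\alpha\gamma}$ furnishes uniform a priori bounds and finitely many such steps cover $[0,T]$. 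The main technical difficulty I anticipate is the Lipschitz continuity of the map $(\mu, C) \mapsto \Delta^{\alpha\gamma}_{\mu,C}$, i.e.\ the sensitivity of the expectation of a bounded Lipschitz function of a bivariate Gaussian to perturbations of its mean vector and covariance matrix. Provided the diagonal variance $C^{\gamma\gamma}(u,u)$ remains bounded below by a positive constant---which is guaranteed by the strictly positive Brownian term in \eqref{eq:Covariance}---standard Gaussian calculus yields the required estimates, and a Gronwall argument on the sup norm over each delay slice then closes both existence and uniqueness. The scenario in which some $\tau_{\alpha\gamma}$ vanish would require a genuinely simultaneous fixed point on $[0,T]$, but the delayed assumption $\tau_0 > 0$ bypasses this difficulty entirely.
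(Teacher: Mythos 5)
Your route is the one the paper itself points to: the text does not prove this theorem in-house but defers to \cite{faugeras-touboul-etal:09b}, ``starting from equations~\eqref{eq:MFEwithU}'', and your first two steps reconstruct exactly that derivation --- Gaussianity via variation of constants for the linear SDE driven by the exogenous Gaussian field $U^{\alpha,\bar X}$ together with a Gaussian initial condition, then the moment equations by taking expectations, with It\^o's isometry producing the $\frac{\ta\la^2}{2}(e^{2(t\wedge s)/\ta}-1)$ term and the covariance of the time-integrated field producing the double integral; the cross-population covariance vanishes by the independence encoded in the $\delta_{\alpha\gamma}$ of theorem~\ref{thm:Limit}, the independent Brownian motions and the chaotic initial condition~\eqref{eq:ICNet}. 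Writing $C^{\gamma\gamma}$ rather than $C^{\alpha\alpha}$ in the mean equation is in fact the intended expression.

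The genuine gap is in the existence/uniqueness step. You construct the solution slice by slice under the assumption $\tau_0=\min_{\alpha,\gamma}\tau_{\alpha\gamma}>0$ and explicitly set aside the case of vanishing delays; but the theorem is stated, and later used, without that restriction --- sections~\ref{sec:SCS1Pop} and~\ref{sec:OnePopNonCentered} apply the moment equations to non-delayed networks, so a proof excluding $\tau_{\alpha\gamma}=0$ does not cover the statement. Moreover, your Lipschitz claim for $(\mu,C)\mapsto\Delta^{\alpha\gamma}_{\mu,C}$ is too quick: for a merely Lipschitz sigmoid, $v\mapsto\Exp[S(\mu+\sqrt{v}Z)]$ is Lipschitz in $\sqrt{v}$ but only H\"older-$\frac12$ in $v$ near $v=0$, and the Brownian contribution only gives $C^{\gamma\gamma}(u,u)\geq\frac{\ta\la^2}{2}(1-e^{-2u/\ta})$, which vanishes as $u\to0$ when the initial variance is zero, so there is no uniform positive lower bound on the variance near $t=0$. (In the strictly delayed case this is harmless, since on each slice the right-hand sides of~\eqref{eq:Means}--\eqref{eq:Covariance} involve only already-determined values, so the solution is obtained by direct forward integration and your per-slice Picard iteration is not even needed.) The way the cited construction closes the general case is to exploit the Volterra structure that survives when delays vanish --- in~\eqref{eq:Covariance} the unknown enters only through time integrals of its earlier values, and $C(t,t)$ in~\eqref{eq:Means} is itself such an integral --- and to run the fixed point not on the parameters $(\mu,C)$ but on the Gaussian laws themselves, coupling two candidate solutions and estimating $\Exp[\vert X^1_u-X^2_u\vert]$ so that boundedness and Lipschitz continuity of $S$ give a contraction in the Vaserstein distance on $[0,T]$ (this is the argument the paper alludes to via \cite[Section 5.2]{ben-arous-guionnet:95}); this yields existence and uniqueness with no minimal delay and no nondegeneracy assumption on the variance.
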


	The above theorem hence characterizes univocally the limits of the network equations considered. The proof of this proposition was done in~\cite{faugeras-touboul-etal:09b} starting from equations~\eqref{eq:MFEwithU} which were introduced using a heuristic argument.

	Note also that if the initial condition is not Gaussian, the solutions are not Gaussian. However, as time goes by, solutions get exponentially fast attracted to the Gaussian solutions described in theorem~\ref{pro:GaussianSolutions}. That description hence provides a handy procedure to analyze the solutions of the mean-field equations and their dynamics as a function of the parameters. In particular, we observe that the levels of heterogeneity, $(\sigma_{\alpha\gamma})$, appear as parameters of the equations. The moment equations provided above hence allow analyzing the qualitative effects of heterogeneity on the behavior of the network.

	If one is interested in the possible solutions of the mean-field equations for non-Gaussian initial conditions, or when the intrinsic dynamics of the system is not linear (i.e. $P$ is not a Gaussian measure), existence and uniqueness of solutions to the implicit equations~\eqref{eq:densityMinimum} also hold, but the demonstration of this property is more involved. The basic idea of the proof is based on a contraction principle, as usually done for proving existence and uniqueness of solutions. It is possible to show that one has a contraction in the Vaserstein distance. The steps of the proof are given in a different context in~\cite[Section 5.2]{ben-arous-guionnet:95}, and is not given here since we are not dealing with these more general cases.

	\subsection{Convergence of the process}

	We are now in a position to prove theorem~\ref{thm:Convergence}.

	\begin{proof}[Theorem~\ref{thm:Convergence}]
		Indeed, for $\delta$ a strictly positive real number and $B(Q,\delta)$ the open ball of radius $\delta$ centered in $Q$ for the Vaserstein distance. We prove that $Q^n(\hat{\mu}_n\notin B(Q,\delta))$ tends to zero as $n$ goes to infinity. Indeed, for $K_{\varepsilon}$ a compact defined in theorem~\ref{thm:tightness}, we have for any $\varepsilon>0$:
		\[Q^n(\hat{\mu}_n \notin B(Q,\delta)) \leq \varepsilon + Q^n(\hat{\mu}_n \in B(Q,\delta)^c \cap K_{\varepsilon}).\]
		The set $B(Q,\delta)^c \cap K_{\varepsilon}$ is a compact, and theorem~\ref{lemma3} now ensures that
		\[\limsup_{n\to\infty} \frac 1 n \log Q^n(\hat{\mu}_n\in B(Q,\delta)^c \cap K_{\varepsilon} )\leq -\inf_{B(Q,\delta)^c \cap K_{\varepsilon}} H\]
		and eventually, theorem~\ref{thm:Limit} ensures that the righthand side of the inequality is strictly negative, which implies that
		\[\lim_{n\to\infty}Q^n(\hat{\mu}_n \notin B(Q,\delta)) \leq \varepsilon,\]
		that is:
		\[\lim_{n\to\infty}Q^n(\hat{\mu}_n \notin B(Q,\delta)) =0.\]
	\end{proof}

	Based on this result, we can further conclude on the following:
	\begin{theorem}\label{thm:PropagationOfChaos}
		The system enjoys the propagation of chaos property. In other terms, $Q^n=\Exp_J[\Q^n(J)]$ is $Q$-chaotic, i.e. for any bounded continuous functions $(f_1,\cdots,f_m)$ and any neuron indexes $(k_1,\cdots, k_m)$, we have:
		\[\lim_{N\to\infty}\int \prod_{j=1}^m f_j(x^{k_j})dQ^N(x) = \prod_{j=1}^m \int f_j(x)dQ^{p(j)}(x)\]
	\end{theorem}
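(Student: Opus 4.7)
The plan is to deduce propagation of chaos from the convergence of the empirical measures already established in Theorem~\ref{thm:Convergence}, using the classical equivalence (due to Sznitman and others) between chaoticity of an exchangeable sequence and convergence in law of its empirical measure to a deterministic limit.

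First I would verify exchangeability of the $M$-dimensional variables $X^1,\ldots,X^n$ under $Q^n=\Exp_J[\Q^n(J)]$. Under $\Q^n(J)$ the neurons are coupled through the random matrix $J$, so they are not exchangeable; however, after averaging over $J$, exchangeability is restored because the synaptic weights are independent Gaussians whose law depends only on the population labels $p(i),p(j)$, and our grouping $X^i=(x^{i_\alpha})_{\alpha=1\cdots M}$ treats each of the $n$ neurons per population symmetrically. Concretely, for any permutation $\pi$ of $\{1,\ldots,n\}$, the permuted family of weights $(J_{\pi(i)_\alpha\,\pi(j)_\gamma})$ has the same law as $(J_{i_\alpha j_\gamma})$, and the initial laws $\mu_\alpha^{\otimes n}$ are permutation invariant, so $(X^{\pi(1)},\ldots,X^{\pi(n)})$ has the same law as $(X^1,\ldots,X^n)$ under $Q^n$.

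Next I would invoke the standard criterion: for an exchangeable sequence on a Polish space, the empirical measure $\hat\mu_n$ converges in probability to the deterministic measure $Q$ if and only if the sequence is $Q$-chaotic (see, for instance, Sznitman's Saint-Flour notes). Theorem~\ref{thm:Convergence} gives precisely that $\hat\mu_n \to \delta_Q$ under $Q^n$, i.e.\ convergence in probability of $\hat\mu_n$ to $Q$ in $\M_1^+(\C)$ equipped with the weak topology (which is compatible with $d_T$). Combined with exchangeability, this yields the desired factorization of finite marginals.

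To make the argument self-contained, I would spell out the implication in one direction: fix bounded continuous $f_1,\ldots,f_m$ and indices $k_1,\ldots,k_m$. By exchangeability, we may assume $k_j = j$, and
\[
\int \prod_{j=1}^m f_j(x^{j})\,dQ^n(x) = \Exp_{Q^n}\!\Big[\prod_{j=1}^m f_j(X^{j})\Big].
\]
Using exchangeability again, this expectation differs from $\Exp_{Q^n}\bigl[\prod_{j=1}^m \langle \hat\mu_n, f_j\rangle \bigr]$ by a term of order $1/n$ (the correction from sampling without replacement versus averaging products over $\hat\mu_n$), which vanishes since the $f_j$ are bounded. The convergence $\hat\mu_n \to Q$ in probability, together with boundedness of the $f_j$ and dominated convergence, gives
\[
\lim_{n\to\infty}\Exp_{Q^n}\!\Big[\prod_{j=1}^m \langle \hat\mu_n, f_j\rangle\Big] = \prod_{j=1}^m \langle Q, f_j\rangle = \prod_{j=1}^m \int f_j(x)\,dQ^{p(j)}(x),
\]
where the last equality uses that $Q$ factorizes over populations, a property inherited from the independence structure of the limit equation~\eqref{eq:MFEwithU} and the block-diagonal form of the covariance $C^{\alpha\beta}$.

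The main obstacle I anticipate is the exchangeability step combined with handling the population structure: strictly speaking, exchangeability holds within populations (we may permute $n$ neurons of population $\alpha$), not across populations, but because of our regrouping into $M$-tuples $X^i$ the ``per-index'' exchangeability is exactly what we need. Once this bookkeeping is clear, the remainder is routine application of the empirical-measure/chaos equivalence.
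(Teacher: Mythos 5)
Your proposal is correct and follows essentially the same route as the paper, which proves the theorem by combining the convergence of the empirical measures (Theorem~\ref{thm:Convergence}) with Sznitman's classical equivalence between chaoticity and convergence of empirical measures to a deterministic limit (\cite[Lemma 3.1]{sznitman:84}). The exchangeability of the $M$-tuples $X^i$ under $Q^n$ and the $O(1/n)$ bookkeeping you spell out are precisely the content of that cited lemma, which the paper invokes without detail.
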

	This is a direct consequence theorem~\ref{thm:Convergence}, thanks to a result due to Alain-Sol Sznitman, see~\cite[Lemma 3.1]{sznitman:84}.

	All these results can be readily confirmed by numerical simulations of the network equations. Considering for instance a two-populations network with parameters given in section~\ref{sec:2popsOscill}, we simulated a network of $12\,000$ neurons ($6\,000$ in each population) and considered the distribution of the values of the membrane potentials as a statistical sample. The empirical distribution, superimposed with the theoretical Gaussian distribution, is plotted in figure Fig.~\ref{fig:Statistics} and shows a very clear fit, which we confirmed using the Kolmogorov-Smirnov test. For each population, the Kolmogorov-Smirnov test comparing the sample obtained by numerical simulations with the predicted Gaussian distribution ensures that the sample has indeed the Gaussian distribution, with a p-value equal to $1$. Moreover, we used a chi-square test of independence which validates the independence between the two populations and this independence test was validated with a p-value of $0.87$.
	\begin{figure}[htbp]
		\centering
			\includegraphics[width=.5\textwidth]{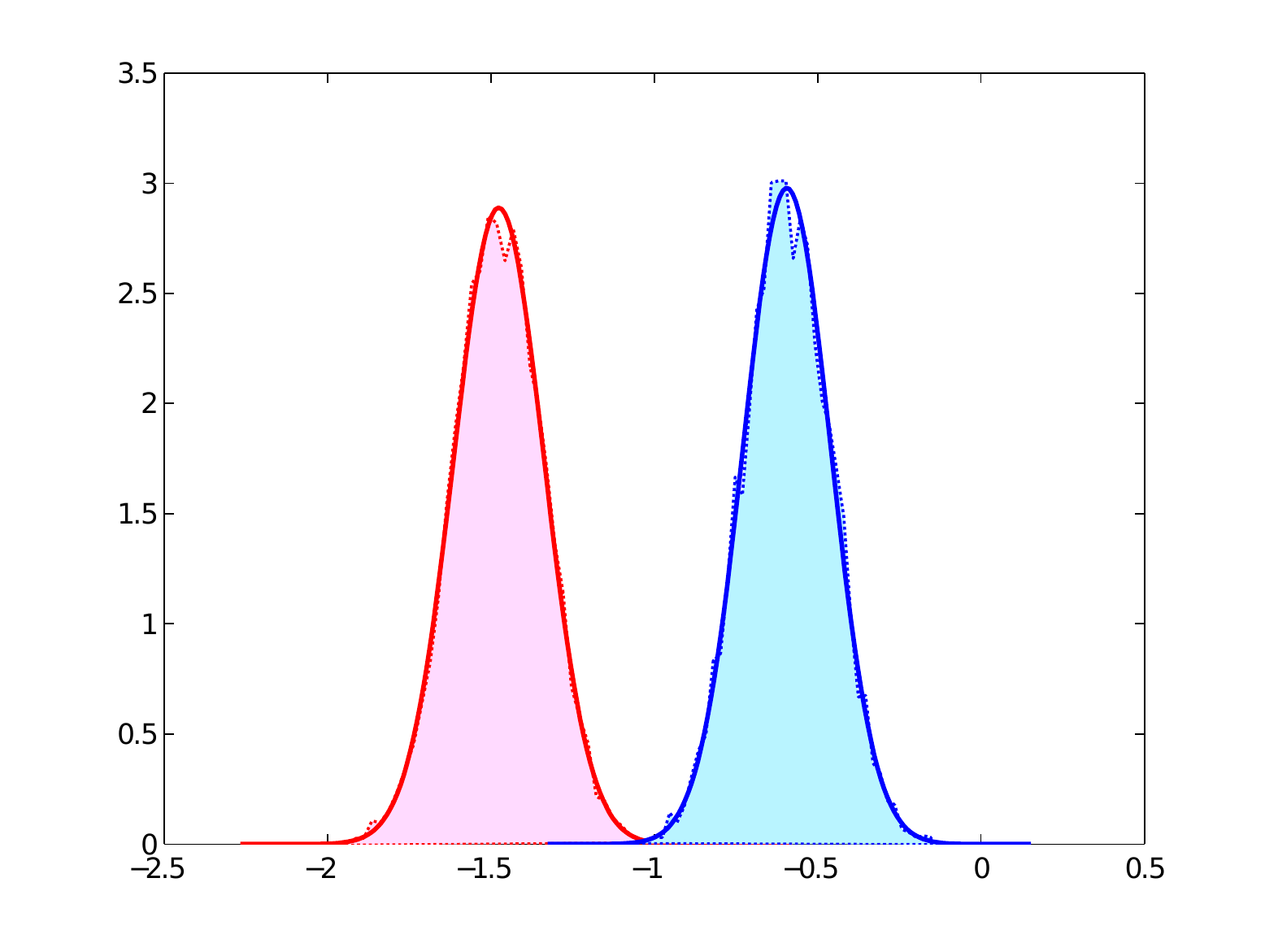}
		\caption{Empirical distribution (colored histogram with dotted lines) and theoretical Gaussian distributions for a 2 populations neuronal network (parameters given in section~\ref{sec:2popsOscill} with common heterogeneity parameter $\sigma=1$ and noise $\lambda=0.5$). }
		\label{fig:Statistics}
	\end{figure}

\section{Dynamics and phase transitions}\label{sec:Solutions}
In this section, we analyze the obtained limit equations to analyze the qualitative macroscopic behavior of networks, with a particular focus on the effect of the disorder parameters. The first step of this analysis consists in confronting our results with the seminal study of Sompolinsky, Crisanti and Sommers (SCS)~\cite{sompolinsky-crisanti-etal:88} dealing with one population model with centered coefficients, centered sigmoidal functions $S$ and no delays. We will then discuss the persistence of the phase transition they identified when the assumptions on the parameters (non-centered synaptic weights or sigmoids, delays, multiple populations\ldots) are relaxed. A particularly important phenomenon in neuroscience essentially absent of their initial study, synchronized oscillations, will be an important focus of the present section.

Let us eventually note that the results of SCS hold for zero-noise limits ($\lambda_{\alpha}=0$). We will also sometimes consider this limit equations, even if rigorously, the proofs of section~\ref{sec:BenArous} hold for non-trivial noise. These non-noisy regimes correspond to limits of the mean-field equations where $\lambda_{\alpha}\to 0$, and correspond to a sort of viscosity solution of the system: all the properties of convergence, existence and uniqueness of solution hold for arbitrarily small $\lambda_{\alpha}$ and provide solutions that have a limit when $\lambda_{\alpha}\to 0$.

\subsection{The generalized Somplinsky-Crisanti-Sommers Equations}

In their article, Sompolinsky, Crisanti and Sommers (SCS) introduce a set of equations governing the dynamics of covariance of possible stationary solutions to the mean-field equations. These equations are used to analyze the dynamics of the limit process and in particular to show a striking transition between stationary and chaotic solutions. We derive here a generalized equation of the type of the SCS equations in our framework with multiple populations and delays, and use these equations to explore the boundaries of the SCS phase transition when considering different models.

\begin{proposition}\label{pro:StationaryEquations}
	Possible stationary solutions are Gaussian with mean $\bar{\mu}^{\alpha}$ and covariance $\bar{C}^{\alpha}(\tau)=C^{\alpha\alpha}(t+\tau,t)$ for any $t\geq 0$. These two variables satisfy the system of equations:
	\begin{equation}\label{eq:StationMoments}
		\begin{cases}
			\displaystyle{0=-\frac{1}{\theta_{\alpha}} \bar{\mu}^{\alpha} + \sum_{\gamma=1}^M \bar{J}_{\alpha\gamma}f_{\alpha\gamma}(\bar{\mu}^{\gamma}, \bar{C}^{\alpha}(0)})\\
			\displaystyle{\ddot{\bar{C}}^{\alpha}(\zeta)=\frac{\bar{C}^{\alpha}(\zeta)}{\ta^2} + \sum_{\gamma=1}^M \bar{\Delta}^{\alpha\gamma}_{\bar{\mu},\bar{C}}(\zeta)}
		\end{cases}
	\end{equation}
\end{proposition}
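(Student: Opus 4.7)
The plan is to impose stationarity on the Gaussian self-consistent description of Theorem~\ref{pro:GaussianSolutions} and read off the two equations. If a stationary law exists it is Gaussian: $P$ is Gaussian and, by the construction underlying~\eqref{eq:densityMinimum}, the mean-field forcing $U^{\alpha,\bar X}$ is Gaussian and independent of $W^{\alpha}$, so the resulting linear SDE~\eqref{eq:MFEwithU} preserves Gaussianity, and self-consistency fixes the stationary law inside the Gaussian class. Stationarity then reads $\mu^{\alpha}(t)\equiv\bar{\mu}^{\alpha}$ constant, and $C^{\alpha\alpha}(t,s)$ dependent only on the lag, so that $\bar{C}^{\alpha}(\zeta):=C^{\alpha\alpha}(t+\zeta,t)$ is well defined.

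The first equation of~\eqref{eq:StationMoments} is then immediate from~\eqref{eq:Means}: setting $\dot{\mu}^{\alpha}=0$ and replacing each delayed mean $\mu^{\gamma}(t-\taag)$ by $\bar{\mu}^{\gamma}$ and the equal-time variance by $\bar{C}^{\alpha}(0)$ gives the claimed algebraic fixed-point identity. For the ODE on $\bar{C}^{\alpha}$, rather than differentiating the cumbersome closed form~\eqref{eq:Covariance} twice, I would go back to the stochastic representation~\eqref{eq:MFEwithU}. Setting $V^{\alpha}_t=\bar{X}^{\alpha}_t-\bar{\mu}^{\alpha}$ and $\tilde{U}^{\alpha}_t=U^{\alpha,\bar{X}}_t-\Exp[U^{\alpha,\bar{X}}_t]$, $V^{\alpha}$ solves the linear SDE $dV^{\alpha}_t=-V^{\alpha}_t/\ta\,dt+\tilde{U}^{\alpha}_t\,dt+\la\,dW^{\alpha}_t$, in which $\tilde{U}^{\alpha}$ is a centred Gaussian process independent of $W^{\alpha}$ whose stationary autocovariance is $\sum_{\gamma}\bar{\Delta}^{\alpha\gamma}_{\bar{\mu},\bar{C}}(\cdot)$ in the notation of the proposition (delay shifts being absorbed under translation invariance).

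The second-order equation is then obtained by the classical autocovariance computation for a linearly forced Ornstein--Uhlenbeck-type process. With $\phi(s,t)=\Exp[V^{\alpha}_sV^{\alpha}_t]$, differentiating in $s$ for $s>t$ and using $\mathcal{F}_t$-independence of $dW^{\alpha}_s$ gives $\partial_s\phi=-\phi/\ta+\Exp[\tilde{U}^{\alpha}_sV^{\alpha}_t]$; Itô applied in $t$ yields the symmetric identity for $\partial_t\phi$, in which the Brownian correlation $\Exp[V^{\alpha}_s\dot W^{\alpha}_t]=\la e^{-(s-t)/\ta}$ is read from the variation-of-constants formula for $V^{\alpha}$. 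Combining the two first-order identities eliminates the cross terms $\Exp[\tilde{U}_sV_t]$ and $\Exp[V_s\tilde{U}_t]$ and the explicit Brownian remainders; specialising $\phi(s,t)=\bar{C}^{\alpha}(s-t)$ and using the prescribed autocovariance of $\tilde{U}^{\alpha}$ then yields the claimed second-order ODE on $\zeta>0$, extended to $\zeta<0$ by the evenness $\bar{C}^{\alpha}(-\zeta)=\bar{C}^{\alpha}(\zeta)$.

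The main delicacy I anticipate is bookkeeping at $\zeta=0$: the Brownian input produces a jump of size $2\la^{2}/\ta$ in $\dot{\bar{C}}^{\alpha}$ (equivalently, a Dirac source on the right-hand side), so the ODE holds classically only on $\zeta\neq 0$ and must be supplemented by the boundary datum fixing $\bar{C}^{\alpha}(0)$, obtained either from the stationary limit of~\eqref{eq:Covariance} or from the first-order identity evaluated at $\zeta=0^{+}$. Everything else reduces to routine Gaussian calculus once the mean-field SDE~\eqref{eq:MFEwithU} and the independence structure between $\tilde{U}^{\alpha}$ and $W^{\alpha}$ are in hand.
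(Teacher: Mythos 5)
Your strategy is essentially the paper's: the mean equation is read off \eqref{eq:Means} under stationarity, and the covariance ODE is derived from the stochastic representation \eqref{eq:MFEwithU} via two-time correlation identities -- the paper's symmetric-shift computation of $\Exp[\bar{X}^{\alpha}(t+\zeta+\xi/2)\bar{X}^{\alpha}(t-\xi/2)]$ and its auxiliary function $\delta^{\alpha}(\zeta)=\Exp[\bar{X}^{\alpha}(t+\zeta)U^{\alpha}(t)]$ are exactly your $\partial_s$/$\partial_t$ identities and the cross-correlation $\Exp[V_s\tilde{U}_t]$. Two remarks. First, ``combining the two first-order identities'' does not by itself eliminate the cross terms: subtraction only gives $2\dot{\bar{C}}^{\alpha}(\zeta)=\Exp[\tilde{U}_{t+\zeta}V_t]-\delta^{\alpha}(\zeta)$ (up to the Brownian remainder), and the prescribed autocovariance of $\tilde{U}^{\alpha}$ can only enter through one further differentiation, namely $\dot{\delta}^{\alpha}(\zeta)=-\delta^{\alpha}(\zeta)/\ta+\sum_{\gamma}\sigma_{\alpha\gamma}^2\Delta^{\alpha\gamma}_{\bar{\mu},\bar{C}}(\zeta)$, followed by substitution of $\delta^{\alpha}=\bar{C}^{\alpha}/\ta-\dot{\bar{C}}^{\alpha}(-\la^2e^{-\zeta/\ta}$ in the noisy case$)$; this is the step the paper makes explicit and your sketch compresses away. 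Second, the paper proves the proposition in the zero-noise limit $\la=0$ (as announced at the start of Section 4), which is why the stated ODE carries no noise term and why $\dot{\bar{C}}^{\alpha}(0)=0$ is available later; your noisy variant is consistent with this -- the $\la^2e^{-\zeta/\ta}$ contributions do cancel for $\zeta>0$ once the elimination above is carried out -- but your boundary bookkeeping is off: the white noise produces a derivative jump $\dot{\bar{C}}^{\alpha}(0^+)-\dot{\bar{C}}^{\alpha}(0^-)=-\la^2$ (equivalently a Dirac source $-\la^2\,\delta_0(\zeta)$), not $2\la^2/\ta$, as one checks on the pure Ornstein--Uhlenbeck component of the stationary solution, whose autocovariance is $\tfrac{\la^2\ta}{2}e^{-|\zeta|/\ta}$.
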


\begin{remark}
	Note that the above equations do not constitute a dynamical system, but rather correspond to implicit equations. In particular, an important difficulty is the choice of the initial condition $C^{\alpha}(0)$ which corresponds to the variance of the stationary solution, which is obviously unknown. This quantity parametrizes both the equation on the first moment and the form of the term $\Delta^{\alpha\gamma}$ on the second moment equations.
\end{remark}

\begin{proof}
	The equation on the mean $\bar{\mu}^{\alpha}$ is a simple rewriting of equation~\eqref{eq:MeansSummary} under stationarity condition. The equation on the stationary covariance requires more care. For arbitrary time $t$, denoting $X^{\alpha}_t$ the solution of the mean-field equation with for all $\alpha$, $\lambda_{\alpha}=0$, we have, using equation~\eqref{eq:MFEwithUSummary}:
	\begin{align}
		\nonumber \dot{C}^{\alpha}(\zeta)&=\der{}{\zeta} \Exp{[(\bar{X}^{\alpha}(t+\zeta)-\mu^{\alpha}(t+\zeta))(\bar{X}^{\alpha}(t)-\mu^{\alpha}(t))]}\\
		\label{eq:SompoFirstStep} &=-\frac{C^{\alpha}(\zeta)}{\theta_{\alpha}} + \Exp{[\bar{X}^{\alpha}(t) {U}^{\alpha, \bar{X}}(t+\zeta)]}.
	\end{align}
	The second term is not easy to characterize. The method used by Sompolinsky and collaborators to deal with this term is to derive a second time with respect to $\zeta$. However, the differential of $U^{\alpha}$ is unknown. Fortunately, we can express this term as a function of $\delta^{\alpha}(\zeta)=\Exp[\bar{X}^{\alpha}(t+\zeta) {U}^{\alpha}(t)]$. This function is way easier to handle since using the differential equation~\eqref{eq:MFEwithU} and differentiating this expression with respect to $\xi$, one obtains:
	\[\dot{\delta}^{\alpha}(\zeta)=-\frac{\delta^{\alpha}}{\theta_{\alpha}} + \sum_{\beta=1}^M\sigma_{\alpha\beta}^2 \Delta^{\alpha\beta}_{\bar{\mu},\bar{C}} (\zeta)\]
where we denoted with a slight abuse of notations $\Delta^{\alpha\beta}_{\bar{\mu},\bar{C}} (\zeta)$ the common value of $\Delta^{\alpha\beta}_{\bar{\mu},\bar{C}}(t+\zeta,t)$ for any $t>0$ using the assumed stationarity of the solution.

	In order to relate the second term of the righthand side of~\eqref{eq:SompoFirstStep} with $\delta^{\alpha}$, we compute $\dot{C}(\zeta+\xi)$ expressing it the differential with respect to $\xi$ of $\Exp[{\bar{X}^{\alpha}(t+\zeta+\xi/2)\bar{X}^{\alpha}(t-\xi/2)}]$. In this computation, most of the terms cancel out and we obtain the simple expression at $\xi=0$:
	\[2 \dot{C}^{\alpha}(\zeta)= \Exp{[\bar{X}^{\alpha}(t) U^{\alpha,\bar{X}}(t+\zeta)]} - \delta^{\alpha}(\zeta).\]
	Plugging this expression into~\eqref{eq:SompoFirstStep} we obtain:
	\[\dot{C}^{\alpha}(\zeta)=\frac{C^{\alpha}(\zeta)}{\ta} - \delta^{\alpha} (\zeta).\]
	Differentiating this expression with respect to $\zeta$ and reinjecting the latter equation in the obtained expression, we get:
	\begin{align}
		\nonumber \ddot{C}^{\alpha}(\zeta)&=\frac{\dot{C}^{\alpha}(\zeta)}{\ta} - \left(-\frac{\delta^{\alpha}}{\ta} + \sum_{\beta=1}^M \sigma_{\alpha\beta}^2 \Delta^{\alpha\beta}(\zeta)\right)\\
		\label{eq:Sompo} &= \frac{C^{\alpha}(\zeta)}{\ta^2} - \sum_{\beta=1}^M \sigma_{\alpha\beta}^2 \Delta^{\alpha\beta}(\zeta).
	\end{align}
\end{proof}
	This equation is very similar to the original SCS equation. As they remarked, this equation does not characterize the process. Indeed, we know that $\dot{C}^{\alpha}(0)=0$ using the fact that the covariance is even, but the initial condition $C^{\alpha}(0)$ is not fixed: it is the asymptotic stationary variance of the process, when it exists, and this initial condition is a parameter of both the stationary mean equation and stationary covariance equation. Let us emphasize also the that it does not involve the average of the synaptic coefficients $\bar{J}_{\alpha\beta}$, and that the function $\Delta^{\alpha\beta}(\zeta)$ is a nonlinear function of $C^{\beta}(\zeta)$ and $C^{\beta}(0)$ which can be written as:
	\begin{align*}
		\Delta^{\alpha\beta}(\zeta) &= \Exp{[S_{\alpha\beta}(V^{\beta}(\zeta))S_{\alpha\beta}(V^{\beta}(0))]}\\
		&=\int_{\R^2} S_{\alpha\beta}\left(\sqrt{\frac{\bar{C}^{\beta}(0)^2-\bar{C}^{\beta}(\zeta)^2}{\bar{C}^{\beta}(0)}} x + \frac{\bar{C}^{\beta}(\zeta)}{\sqrt{\bar{C}^{\beta}(0)}} y + \bar{\mu}^{\beta}\right) \quad S_{\alpha\beta}\left(\sqrt{\bar{C}^{\beta}(\zeta)}y + \bar{\mu}^{\beta}\right) Dx\,Dy
	\end{align*}
	where $Dx$ and $Dy$ are the probability measure of standard Gaussian random variables ($Dx=e^{-x^2/2}/\sqrt{2\pi} dx$ and similarly for $y$). The linear term has a positive sign, hallmark of diverging systems. And indeed, the solution can be formally written as:
	\begin{equation*}
		\bar{C}^{\alpha}(\zeta)=\bar{C}^{\alpha}(0)\cosh(\frac{\zeta}{\tau_{\alpha}}) -\sum_{\beta=1}^M\frac{\sigma_{\alpha\beta}^2 \tau_{\alpha}}{2} \Big(\int_0^\zeta e^{\frac{\xi-\zeta}{\tau_{\alpha}}}\Delta_{\beta}(\xi)d\xi
		-\int_0^\zeta e^{-\frac{\xi-\zeta}{\tau_{\alpha}}}\Delta_{\beta}(\xi)d\xi\Big).
	\end{equation*}
	In that equation, the term in hyperbolic cosine diverges very fast, and the nonlinear term can overcome the divergence. When one does not consider the precise initial condition corresponding to stationary solutions, the solutions of the second order ODE in $\bar{C}^{\alpha}$ diverge very fast, as remarked by Sompolinsky and colleagues in~\cite{sompolinsky-crisanti-etal:88}, and therefore one needs to analyze the stability of the possible solutions, which is relatively complex to perform. Numerical simulations of the system are very intricate as well, because of the time-consuming calculation of of the nonlinear function $\Delta$ and because of the possible divergence of the solutions.
	
	However, Sompolinsky and collaborators show very elegantly an important phase transition taking place in this system, analyzing the shape of the potential together with a stability analysis of the solutions. We revisit their results in our more general framework, first in one population systems, and then in higher dimensional systems, and particularly focus on the effects of delays, non-zero mean connectivity and non-centered sigmoids.


\subsection{One population networks}\label{sec:SCS1Pop}
	The heterogeneity level appears as a parameter in~\eqref{eq:Sompo}. In their one-population setting, equation~\eqref{eq:Sompo} can be written the equation of the position of a particle submitted to a force deriving from a potential $\Phi_1$ (the label $1$ denotes the number of populations) which is equal to $-\frac 1 2 C ^2 + \sigma^2\psi$ where $\psi$ is a primitive of $\Delta$ considered as a function of $C$. The shape of the potential showing a transition from convex to double-well as $\sigma$ is increased allowed the authors to conclude on a phase transition between a stationary solution where~\eqref{eq:Sompo} has a unique equilibrium equal to zero and a chaotic regime where the covariance is non-zero. The value of the noise at this transition corresponds to $\sigma\,S'(0)=1/\tau$.

\subsubsection{Non-delayed networks with non-centered synapses}\label{sec:OnePopNonCentered}
	Let us start by a one-population network with no delays and $\bar{J}\neq 0$. In that case, it is easy to see that fixed point with mean $\mu=0$ is stable if and only if $J\,\derpart{f}{\mu}(0,C(0))<1$. When letting $\sigma$ fixed and increasing $J$, we can see that at $J=\derpart{f}{\mu}(0,C(0))^{-1}$, the system undergoes a pitchfork bifurcation, and two new equilibria $\mu^+>0$ and $\mu^-<0$ appear, which are stable. For these equilibria, the null covariance is no more a solution to the equations, and we observe a stationary behavior of neurons with a non-zero standard deviation, i.e. a dispersion of the individual trajectories, that remain stationary. At these points, the system undergoes also a phase transition from stationary to chaotic activity when the heterogeneity coefficient crosses the critical value $\sigma=1/\derpart{f}{\mu}(\mu^{\pm},C(0))$. This equation is an implicit equation since the righthand side depends on $\sigma$ through the stationary variance of the process. Let us denote by $\Gamma(\sigma)$ the stationary standard deviation $C(0)$. It is clear that $\Gamma(\sigma)$ is an increasing function of $\sigma$, and to fix ideas, let us consider that the sigmoid used is an erf function $S(x)=\erf(g\,x)=\int_{0}^{gx} e^{-x^2/2}$. Then it is easy to show using a change of variables (see~\cite{touboul-hermann-faugeras:11}) that
	\[f(x,\sigma)=\erf(\frac{g\,x}{\sqrt{1+g^2 \Gamma(\sigma)}})\]
	and therefore the pitchfork bifurcation arises along the parameter curve $J=\sqrt{\frac{1}{g^2}+\Gamma(\sigma)}$, and the phase transition from stationary to chaotic behavior along the curve $-1+\sigma S'(\mu^{\pm})=0$. Since the differential of $S$ takes its maximum at $0$ and decreases to zero at $\pm \infty$, the value of $\sigma$ corresponding to the secondary phase transition to chaos is an increasing function of $\sigma$. Moreover, in that case, the chaotic activity will be no more centered around zero but around the new fixed point $\mu^{\pm}$. Eventually, it is interesting to note that for $J$ smaller than the value corresponding to the pitchfork bifurcation, the stationary covariance $\Gamma(\sigma)$ precisely equal to zero. A hand-drawn bifurcation diagram reflecting this behavior, together with simulations of the trajectories, is plotted in figure Fig.~\ref{fig:OnePopulationJ}.
	\begin{figure}[!h]
		\centering
			\includegraphics[width=.7\textwidth]{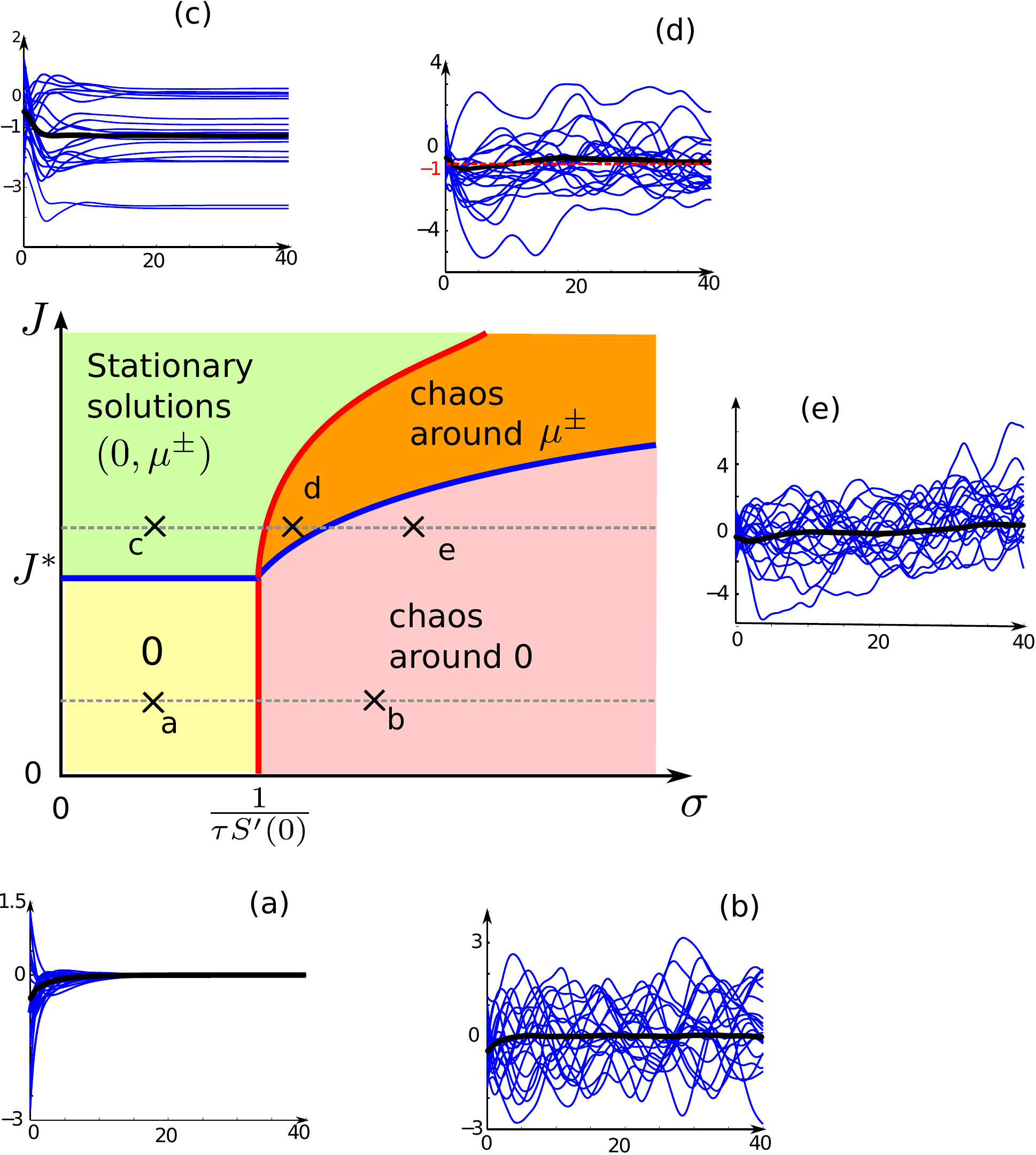}
		\caption{Behavior of a one-population system with non-centered synaptic coefficients. Center: Bifurcation diagram (hand-drawn) segmented into four regions: two regions of stationary behavior (yellow: centered at zero and green: centered on $\mu^{\pm}$) and two chaotic regions (pink: centered around zero and orange: centered around $\mu^{\pm}$). The boundaries of these regions are: a pitchfork bifurcation (blue curve) separating the stationary or chaotic regions centered on $0$ to the ones centered on $\mu^{\pm}$, and a generalized SCS phase transition (red curve) separating the stationary and chaotic regimes. The subfigures (a)-(e) show the time course of $30$ arbitrarily chosen neurons in the network corresponding to the points a-e of the diagram: $\tau=1$, $S'(0)=1$, (a): $J=0.5$, $\sigma=0.5$, (b): $J=0.5$, $\sigma=1.5$, (c): $J=1.5$, $\sigma=1.5$, (d): $J=1.5$, $\sigma=1.7$, (e): $J=1.5$, $\sigma=2$.}
		\label{fig:OnePopulationJ}
	\end{figure}
	%

\subsubsection{Delay-induced oscillations}
We now consider a one-population network with delays. Without loss of generality, we consider that the time constant is equal to $1$. The solutions of the mean-field equations with no heterogeneity are Gaussian processes whose moments reduce to a dynamical system:
\[
\begin{cases}
	\dot{\mu}&=-\mu + J f(\mu(t-\tau),v(t-\tau))\\
	\dot{v}&=-2\,v + \lambda^2
\end{cases}
\]
and hence the variance converges towards $\lambda^2/2$. To fix ideas, we consider $S(x)=\erf(gx)$, so that $f(x,\sigma)=\erf(\frac{g\,x}{\sqrt{1+g^2 \Gamma(\sigma)}})$. Since for any $v$, $f(0,v)=0$, the null mean is a stationary solution of the equation. Its stability depends on the roots of the characteristic equation (or dispersion relationship):
\[\xi = -1+J\left .\derpart{f}{\mu}\right \vert_{0,\frac{\lambda^2}{2}} e^{-\xi \tau} = -1+J\frac{g}{\sqrt{1+g^2\frac{\lambda^2}{2}}} e^{-\xi \tau}.\]
If all characteristic roots have negative real part, the fixed point $\mu=0$ is stable. As a function of the parameters of the system, characteristic roots can cross the imaginary axis and yield a destabilization of the fixed point. Turing-Hopf instabilities arise when there exists purely imaginary characteristic roots $\xi=\mathbf{i}\omega$. In that case, we obtain the following equivalent system:
\[
\begin{cases}
	-1+J\frac{g}{\sqrt{1+g^2\frac{\lambda^2}{2}}} \cos(\omega \tau)= 0\\
	\omega = -J\frac{g}{\sqrt{1+g^2\frac{\lambda^2}{2}}} \sin(\omega\tau)
\end{cases}
\]
which has real solutions only for $J\frac{g}{\sqrt{1+g^2\frac{\lambda^2}{2}}} >1$. It is then easy to show that Turing-Hopf bifurcations arise when the parameters satisfy the relationship:
\[
	\tau = \frac{\arccos\left(\frac {\sqrt{1+g^2\frac{\lambda^2}{2}}}{Jg}\right)}{\sqrt{\frac{J^2g^2}{1+g^2\frac{\lambda^2}{2}} - 1}}
\]
and these correspond to characteristic roots $	\omega =\sqrt{\frac{J^2g^2}{1+g^2\frac{\lambda^2}{2}} - 1}$. These result into oscillations of the solutions at a pulsation equal by $\omega$.

Let us now return to the case of random coefficients with variance $\sigma$ and no additive noise $\lambda=0$. The mean of the Gaussian solution satisfy the same equation as the one studied above with $\lambda=\Gamma(\sigma)$, and as noted in the previous section, the stationary covariance is an increasing function of $\sigma$. For $J>\derpart{f}{\mu}(0,0)$, the fixed point $0$ is unstable, and the covariance is non-zero. This implies that for sufficiently large values of the delay, the network displays oscillations. Thanks to the propagation of chaos property, all neurons have the same distribution, which is a Gaussian with oscillatory mean, and hence the network displays phase-locked oscillations. Eventually, as noise is increased beyond a critical value, a SCS phase transition occurs and the system no more displays phase locked oscillations but asynchronous chaotic activity. This is illustrated in figure Fig.~\ref{fig:DelayOscillations}

\begin{figure}
	\centering
		\subfigure[Turing-Hopf bifurcation curve]{\includegraphics[width=.3\textwidth]{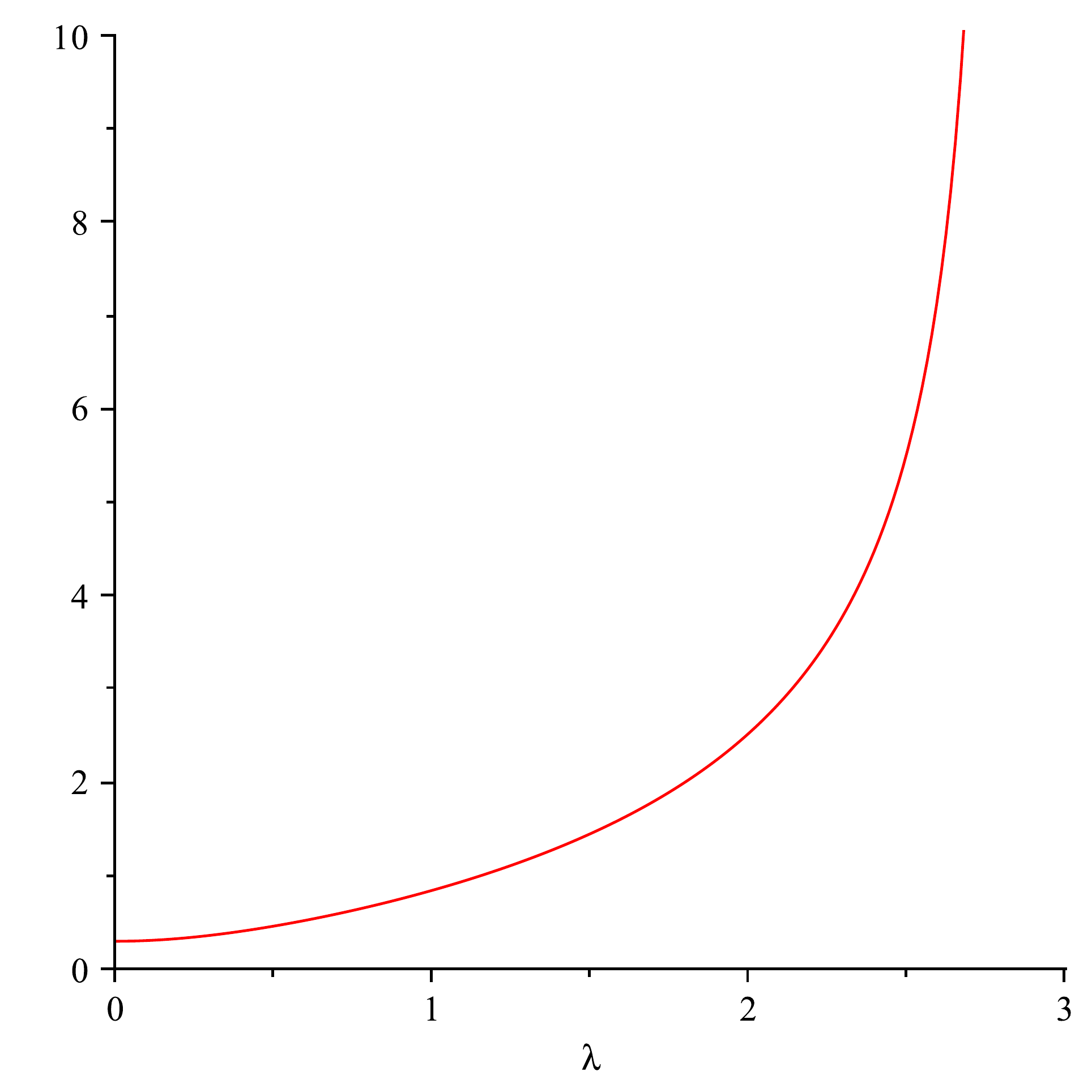}}\qquad
		\subfigure[$\tau=0.1,\; \sigma=0.5$]{\includegraphics[width=.4\textwidth]{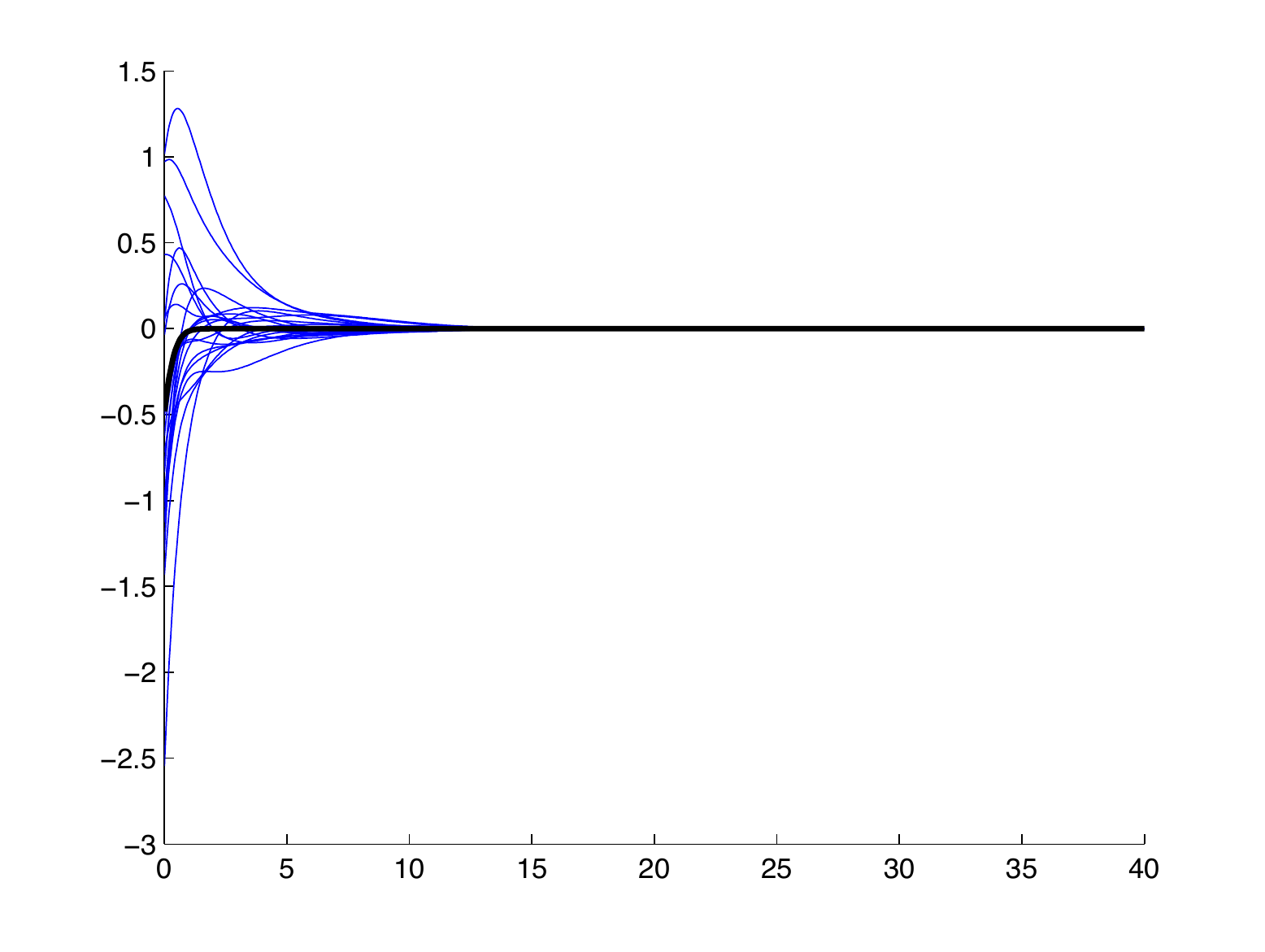}}
		\subfigure[$\tau=0.5,\; \sigma=0.5$]{\includegraphics[width=.4\textwidth]{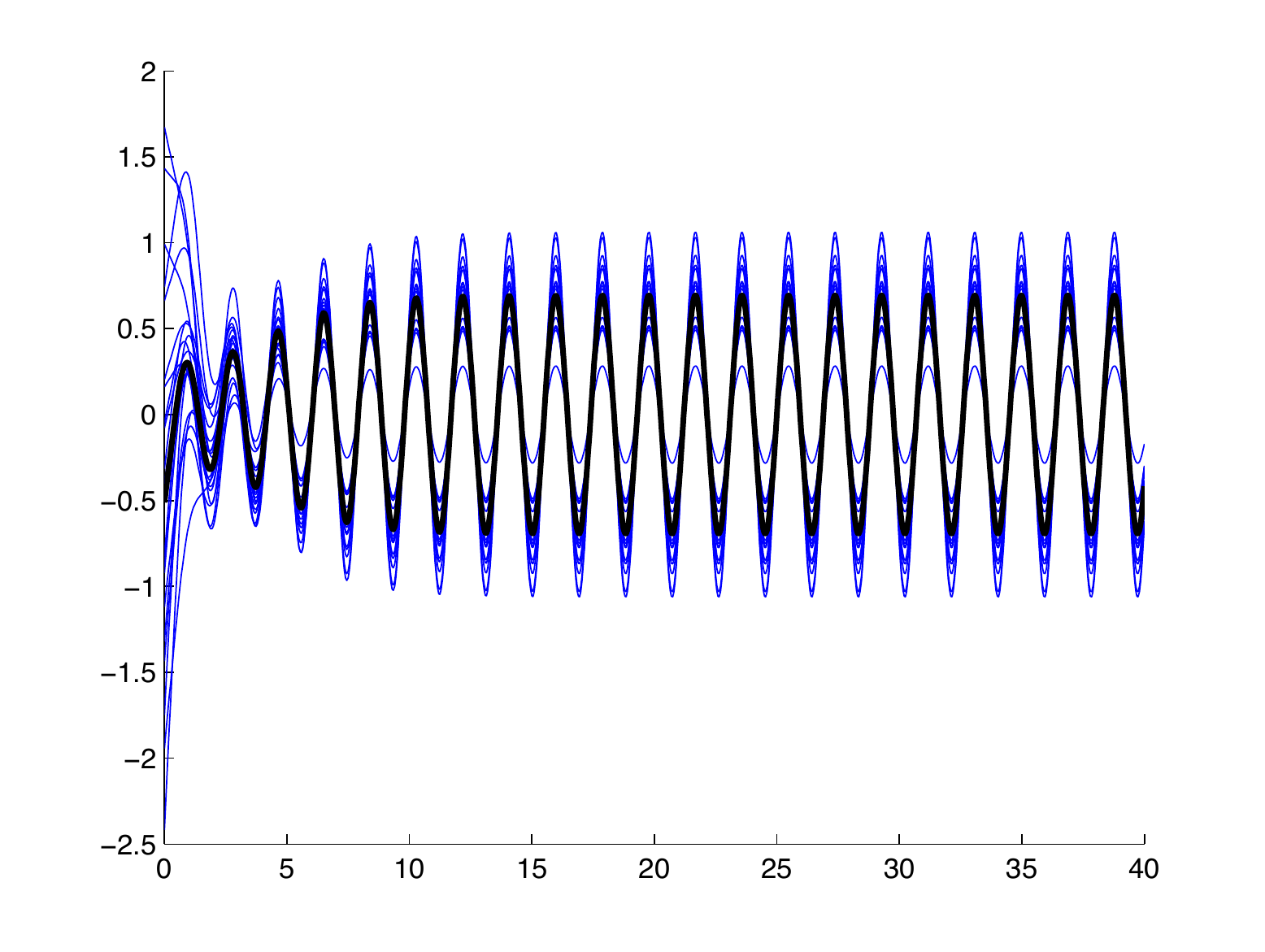}}
		\subfigure[$\tau=0.5,\; \sigma=1$]{\includegraphics[width=.4\textwidth]{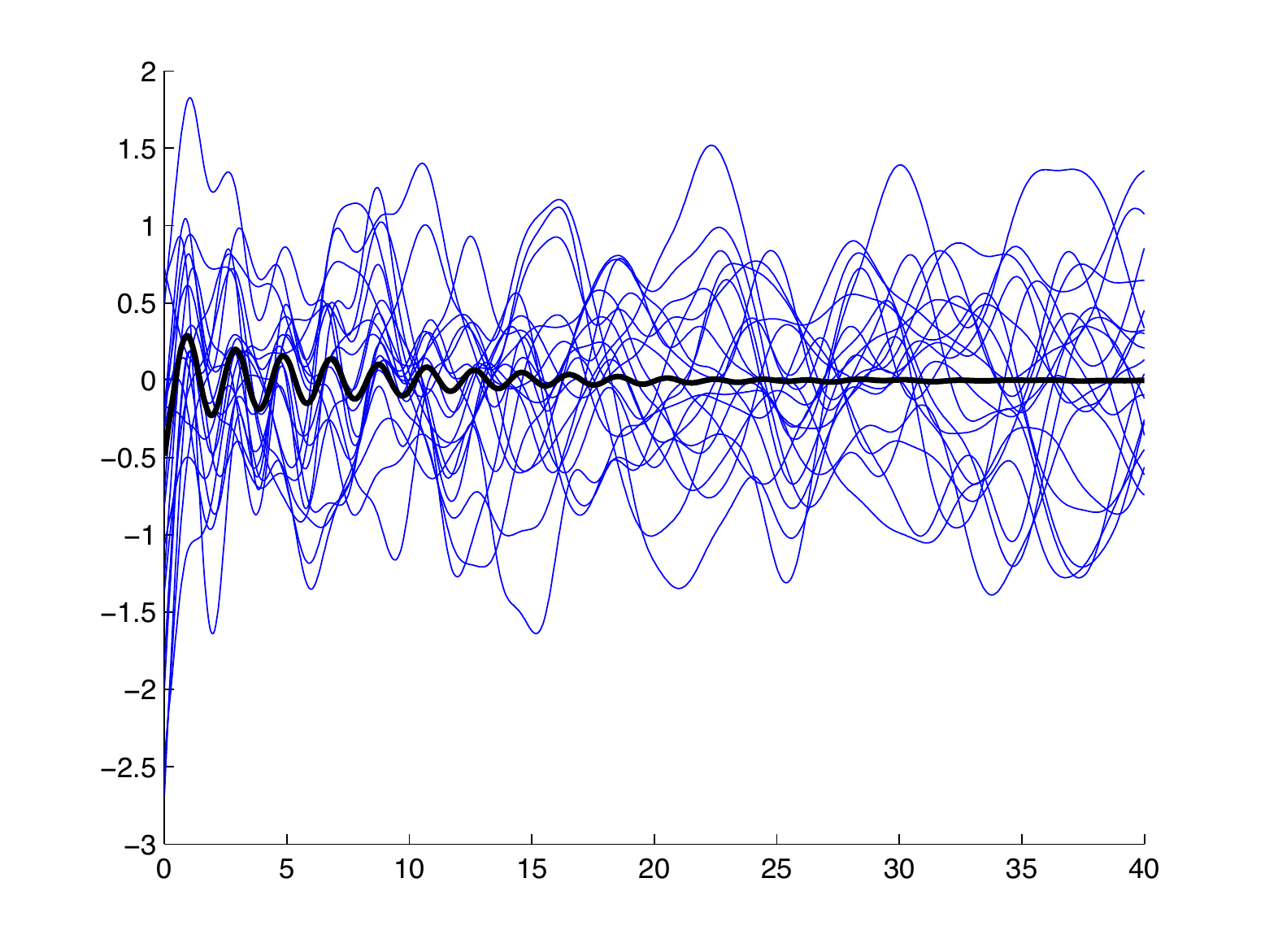}}
	\caption{One population delayed system, $\tau=1$, $S(x)=\erf(gx)$, $g=3$ and $J=-2$. (a) represents the curve of Turing-Hopf bifurcations in the plane $(\tau,\lambda)$ obtained analytically. (b): $\tau$ small: no oscillations. As the delays are increased, a Turing Hopf bifurcation occurs and oscillations appear (b), which disappear when the heterogeneity $\sigma$ is increased beyond a critical value in favor a chaotic activity (c). }
	\label{fig:DelayOscillations}
\end{figure}

\subsection{Multi-population networks}
In this section, we analyze the dynamics of randomly coupled neuronal networks in the case of the deterministic coupling of several original SCS networks, before turning the analysis of the dynamics of a more biologically plausible neuronal network composed of an excitatory and an inhibitory population. As demonstrated by Sompolinsky and coworkers in~\cite{sompolinsky-crisanti-etal:88}, the study of the stationary states using equations~\eqref{eq:StationMoments} is very useful to analyze the dynamics of their networks. Unfortunately, this method does not persists in higher dimensions, since the equation does not necessarily derive from a potential. Indeed, in order for the equation to derive from a potential $\Phi_M:\R^M\mapsto \R$ in dimension $M$ greater than $1$, we need that for any $\alpha\in\{1,\ldots,M\}$:
\[\derpart{\Phi_M}{C^{\alpha}}=-C^{\alpha}+\sum_{\beta=1}^M \sigma_{\alpha\beta}^2 \Delta^{\beta}\]
The only case where this is possible is the case where $\sigma_{\alpha\beta}^2=0$ for any $\alpha\neq \beta$. Indeed, shall the above relationship be true, the equality $\derpartsxy{\Phi_M}{C^{\alpha}}{C^{\beta}}=\derpartsxy{\Phi_M}{C^{\beta}}{C^{\alpha}}$ directly yields $\sigma_{\alpha\beta}^2\derpart{\Delta^{\beta}}{C^{\beta}}=\sigma_{\beta\alpha}^2\derpart{\Delta^{\alpha}}{C^{\alpha}}$. The lefthand side is a function of $C^{\alpha}$ only and the righthand side a function of $C^{\beta}$ only, they for $\alpha\neq\beta$ these functions are necessarily constant. For regular functions $S$, this necessitates to have $\sigma_{\alpha\beta}=\sigma_{\beta\alpha}=0$. This is precisely the case of deterministic lateral connections between randomly coupled networks, which will now study.

\subsubsection{Deterministic lateral coupling of SCS networks}\label{sec:SCSPPops}
In this section we analyze the coupling of different SCS networks, called lateral coupling, with deterministic coefficients. The only randomness in the models is included in the random synaptic coefficients between neurons belonging to the same population. In that particular case, equation~\eqref{eq:Sompo} derives from the potential $\Phi_M(C^1\cdots C^M)=\sum_{\alpha=1}^M \Phi_1(C^{\alpha})$, and in that case the analysis driven by Sompolinsky and collaborators can be adapted to the multi-dimensional case. Since the potential is now the sum of the individual potentials at each population, we observe a strange phenomenon of localization of chaos in the populations that display a large heterogeneity (namely, in our notations, when the SCS condition $\sigma_{\alpha\alpha}^2 S'_{\alpha\alpha}(0) \tau_{\alpha}>1$ is satisfied). Only the populations that individually would be in a chaotic state are in a chaotic state, and the other populations converge to zero with a Dirac delta covariance at zero, and the input received by such populations from chaotic populations do not perturb this state. Let us for instance illustrate this phenomenon on a two-populations network with parameters:
	\[\bar{J}=\left(\begin{array}{cc}
		0 & J_{21}\\
		J_{12}& 0
	\end{array}\right) \qquad \text{and} \qquad \sigma = 	\left(\begin{array}{cc}
			\sigma_1 & 0\\
			0 & \sigma_2
		\end{array}\right)\]
		Each population receives input from the neurons of the other population, with a constant synaptic weight equal to $J_{\alpha\beta}$, and the intra-population synaptic weights are noisy.
	
	
	\begin{figure}[h]
		\centering
			\subfigure[Shape of the potential]{\includegraphics[width=.5\textwidth]{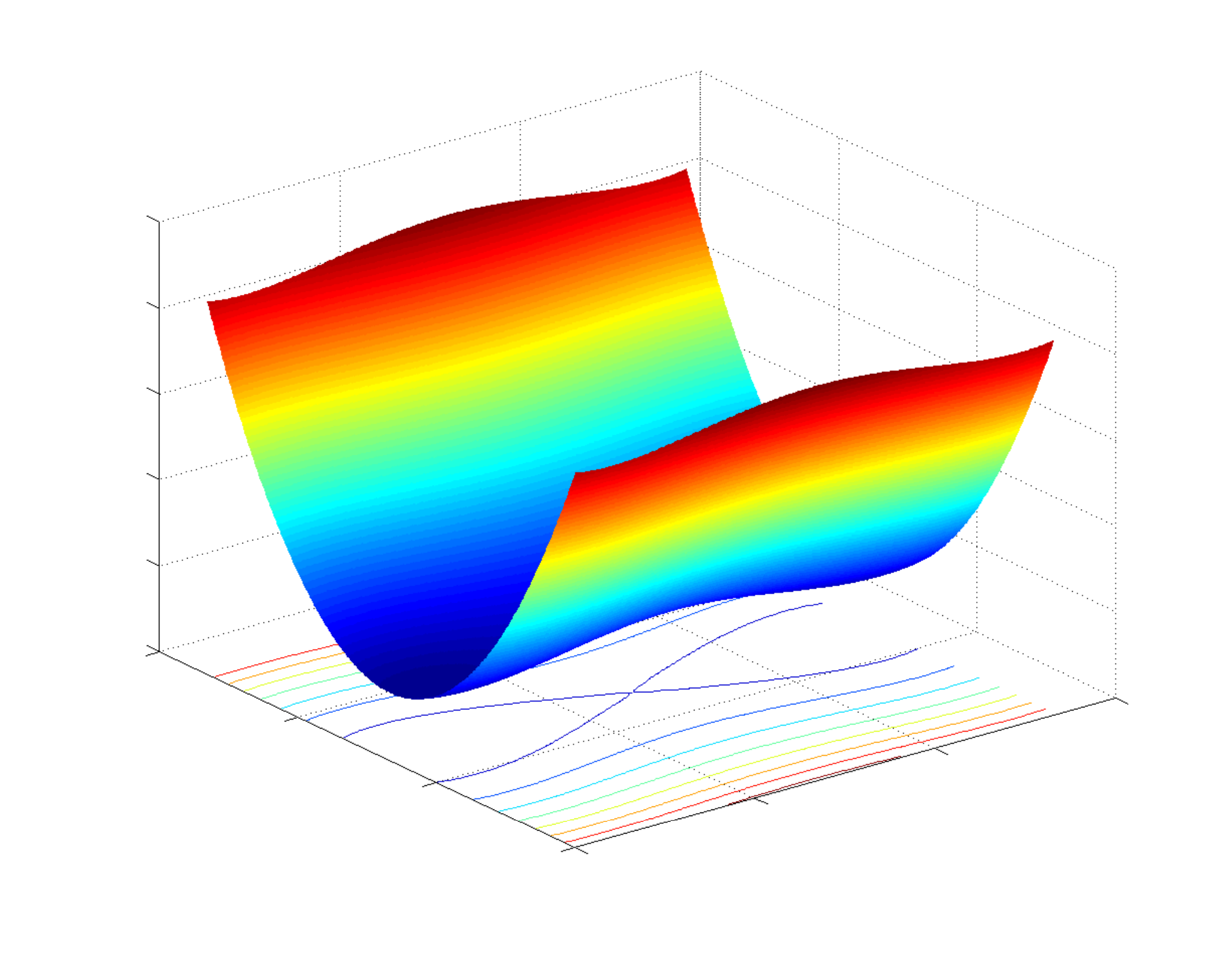}}
			\subfigure[Trajectories]{\includegraphics[width=.4\textwidth]{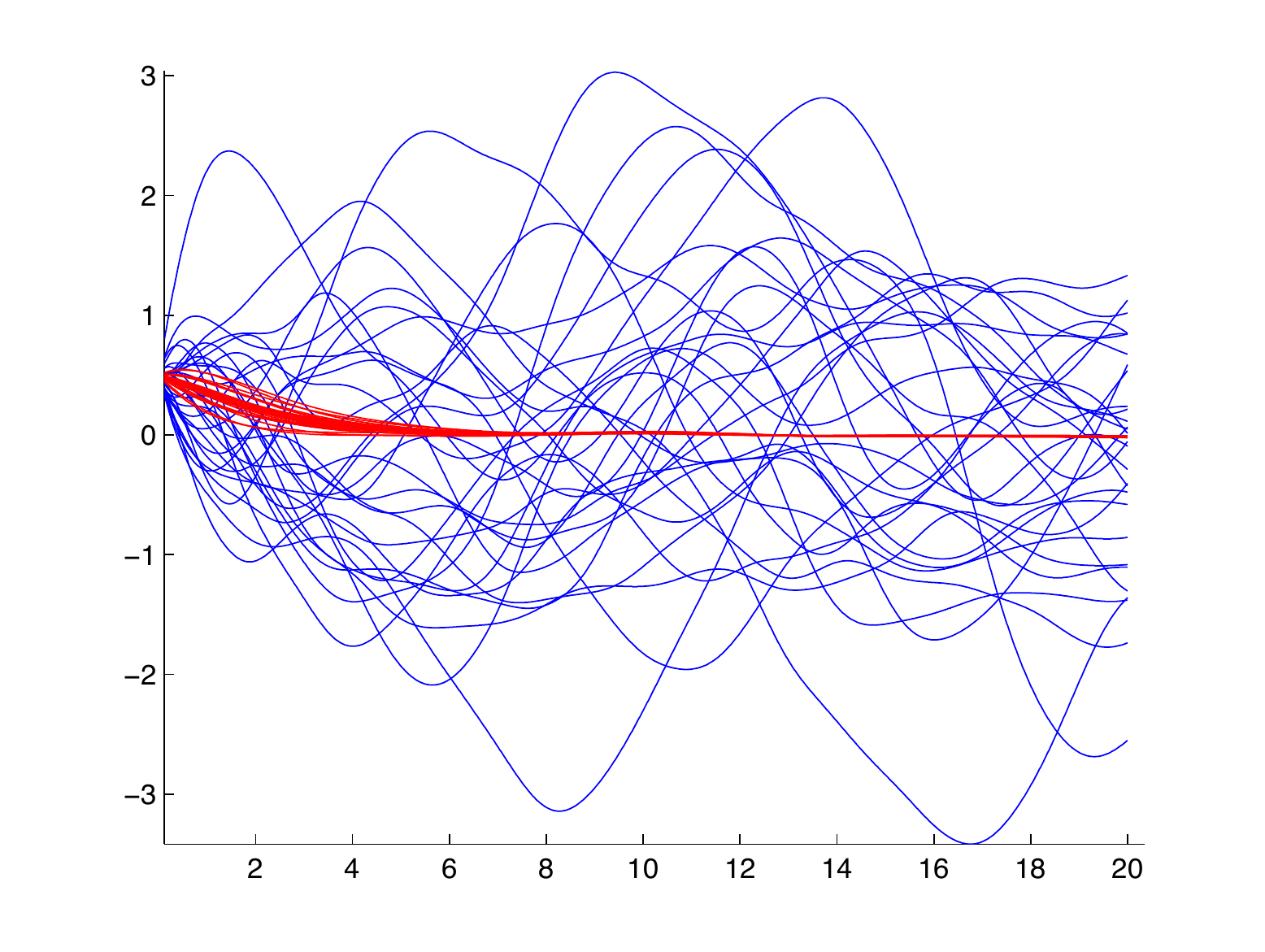}}
		\caption{Dynamics for a 2-populations network with Hamiltonian dynamics (no variance on the cross-population synaptic weights): $\tau_1=\tau_2=1$, $\sigma_1=3>1$, $\sigma_2=0.5<1$, $J_{12}=J21=3$. The potential shows a double-well shape, corresponding to a chaotic state on population 1 and a stationary state on population 2. Simulation of a $4\,000$ neurons network illustrate this phenomenon (right): blue (resp. red): $30$ arbitrarily chosen trajectories population from 1 (resp. 2). }
		\label{fig:label}
	\end{figure}
	Further analysis of this networks as a function of the coupling reveals a similar phenomenon as the one described in the one-population network of section~\ref{sec:OnePopNonCentered}. Indeed, as the strength of the lateral coupling $J_{12}$ and $J_{21}$ are increased, additional stationary solutions with non-zero covariance appear. Let us for instance denote by $\mu^{*}$ the mean of one of these stationary solutions. Following SCS analysis, we are ensured that the behavior of the trajectories of neurons in population $\alpha$ around $\mu^*_{\alpha}$ is stationary as long as $\sigma_{\alpha}<\frac{1}{\tau_{\alpha}S'(\mu^{*}_{\alpha})}$ and chaotic otherwise, and this independently of the behavior of the other population. This phenomenon is illustrated in figure Fig.~\ref{fig:LocChaosNoZero}.

\begin{figure}[h]
	\centering
		\subfigure[Stationary States]{\includegraphics[width=.4\textwidth]{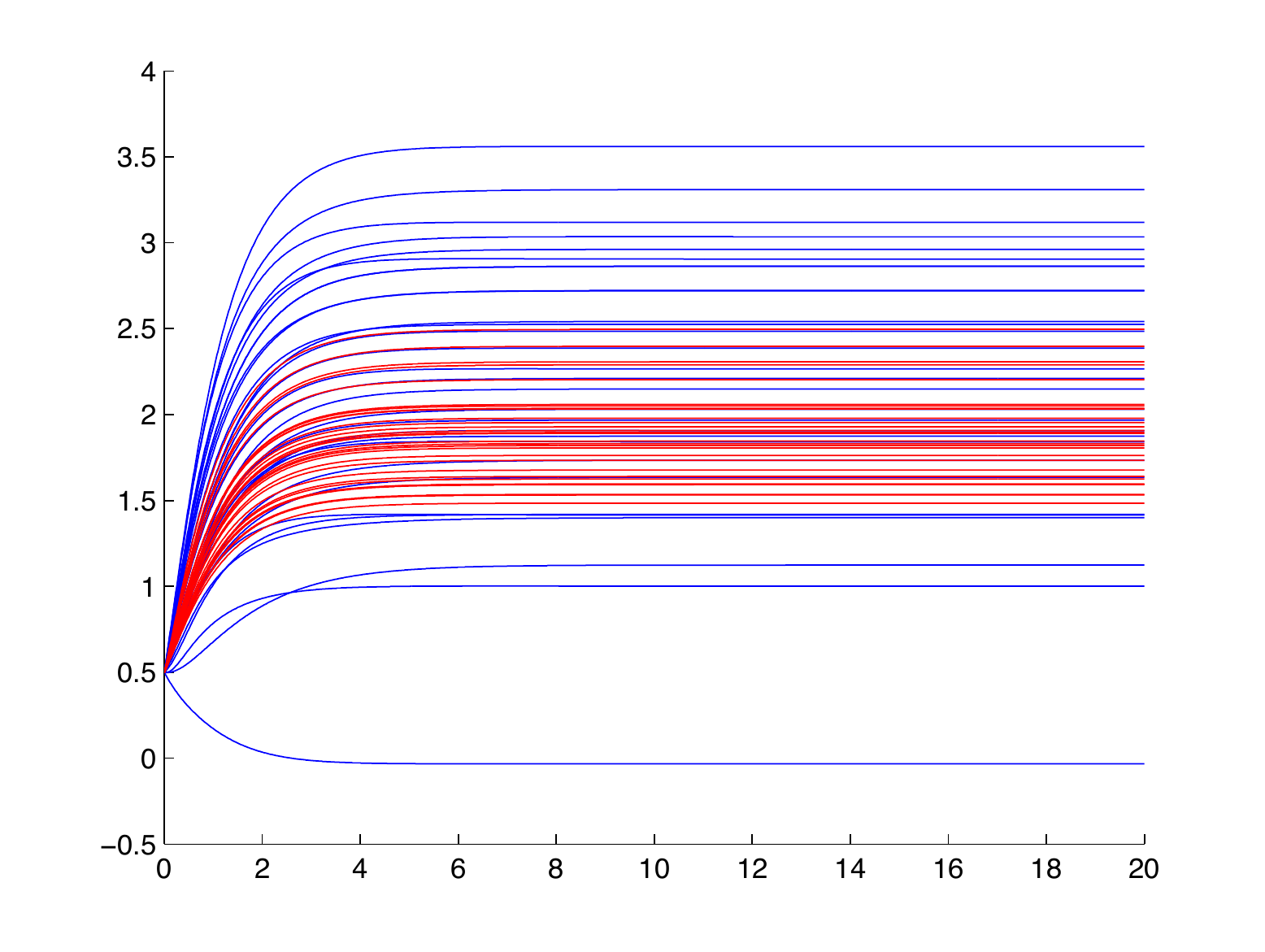}}\qquad
		\subfigure[Localized Chaos]{\includegraphics[width=.4\textwidth]{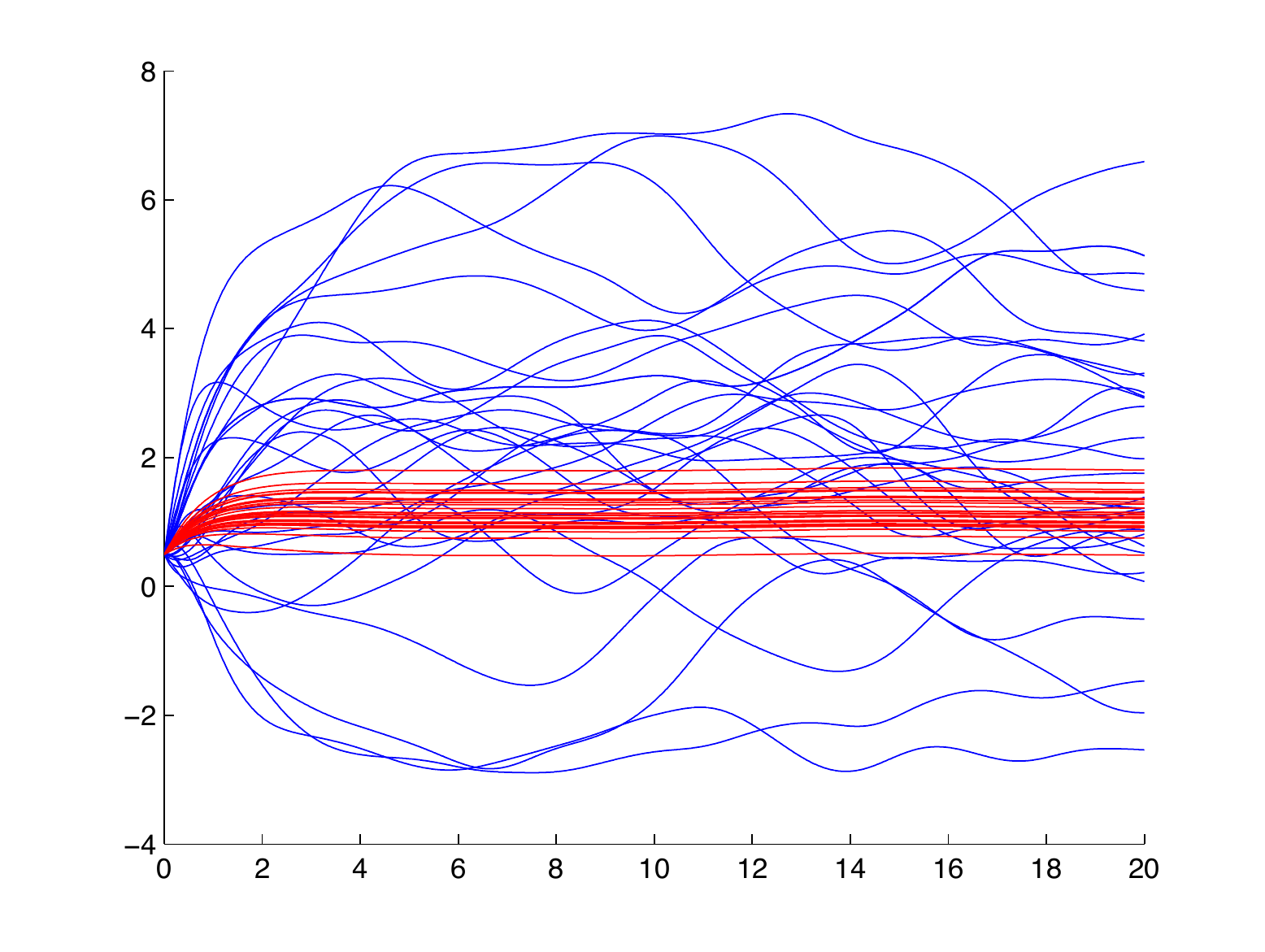}}
	\caption{Stationary and partially chaotic solutions of a two populations network with deterministic lateral around non-zero fixed points. $J_{12}=J_{21}=4$, $\sigma_2=0.5$, (left): $\sigma_1=2$: both populations display a stationary behavior, (right): $\sigma_1=5$: only population 1 is chaotic and the irregularity is not transmitted to population 2. }
	\label{fig:LocChaosNoZero}
\end{figure}

\subsection{Heterogeneity-induced oscillations in two-populations networks}\label{sec:2popsOscill}
	We eventually discuss the effect of heterogeneities in a more biologically plausible neuronal network including one excitatory and one inhibitory population, with strictly positive sigmoidal transforms (since these functions model the input to firing-rate transformation), that tend to zero at $-\infty$ and to $1$ at $\infty$. This system was analyzed in~\cite{touboul-hermann:12}. Considering $M=2$ populations, all sigmoids equal to $\erf(gx)=\int_{-\infty}^{gx} e^{-y^2/2}/\sqrt{2\pi}\,dy$ (yielding $f_{\alpha\beta}(x,v)=\erf(gx/\sqrt{1+g^2v})$), all time constants $\theta_{\alpha}=1$, and the connectivity matrix, inspired from the seminal article of Wilson and Cowan~\cite{wilson-cowan:72}:
\[\bar{J}=\left ( \begin{array}{cc}
		15 & -12\\
		16 & -5
	\end{array}
	\right).\]
we showed that the system presents phase transitions as a function of the heterogeneity parameter, between stationary distributions to periodic oscillations (see figure Fig.~\ref{fig:NoiseInducedOscillations}): considering all $\sigma_{\alpha\beta}$ equal and denoting $\sigma$ the common value, we observe that for small heterogeneity parameter $\sigma$, the network converges towards a stationary solution with non-zero mean. For intermediate values of the heterogeneity, phase-locked perfectly periodic behaviors appear at the network level, that disappear, as heterogeneity is further increased, through a SCS phase transition yielding chaotic activity.
	\begin{figure}
		\centering
			\subfigure[Bifurcation Diagram]{\includegraphics[width=.4\textwidth]{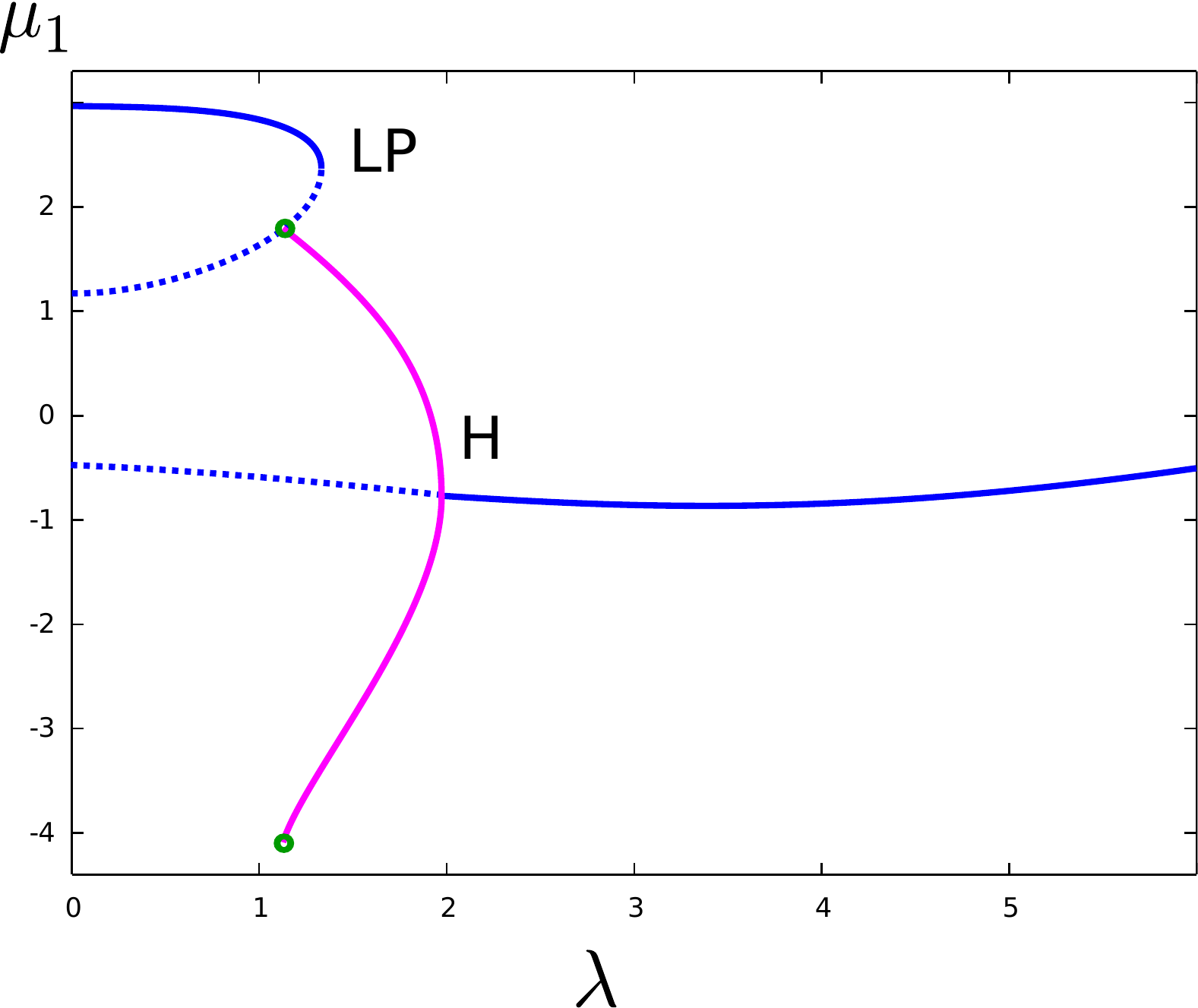}}\\
			\subfigure[$\sigma=0.9$]{\includegraphics[width=.3\textwidth]{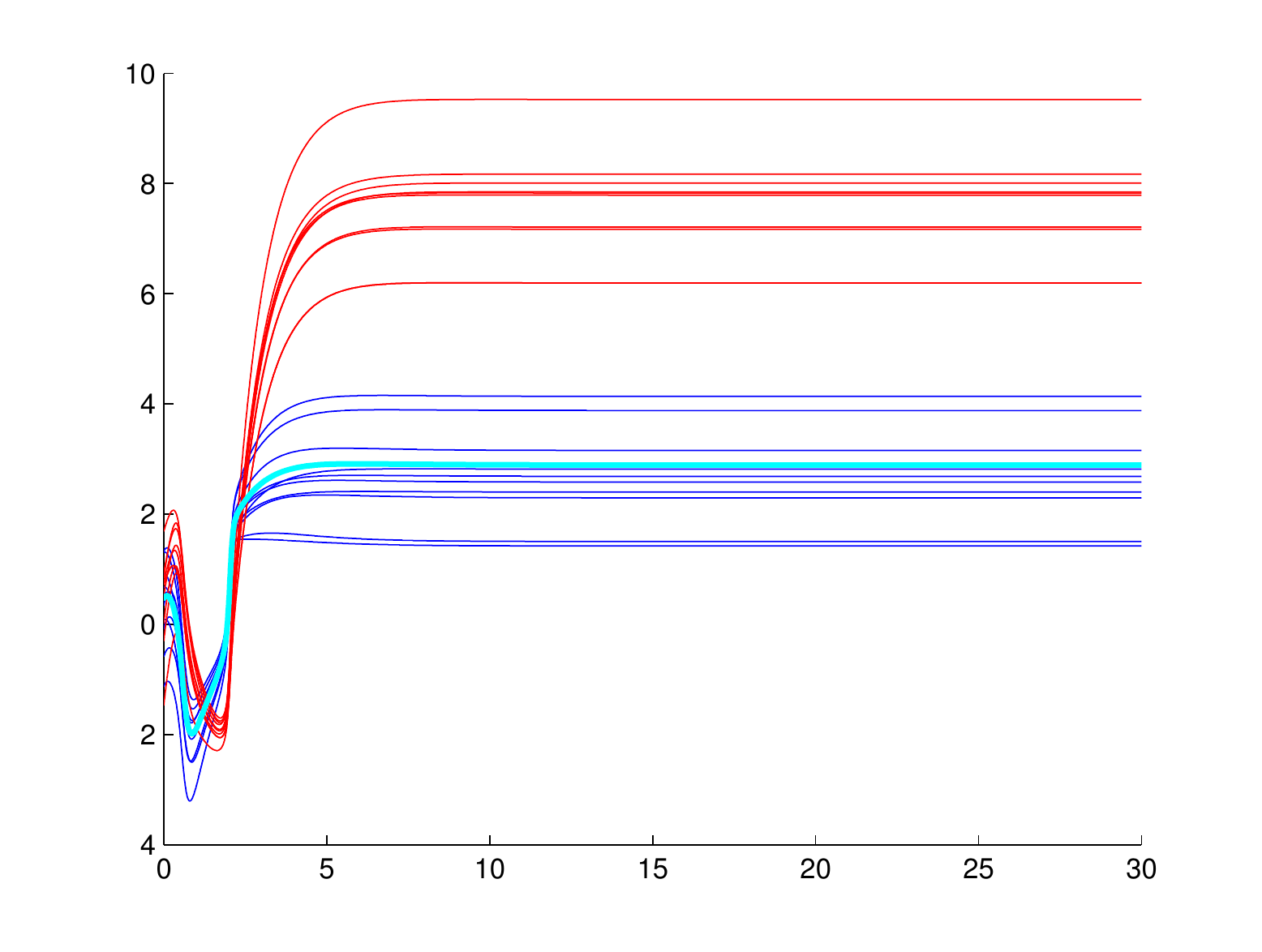}}
			\subfigure[$\sigma=1.6$]{\includegraphics[width=.3\textwidth]{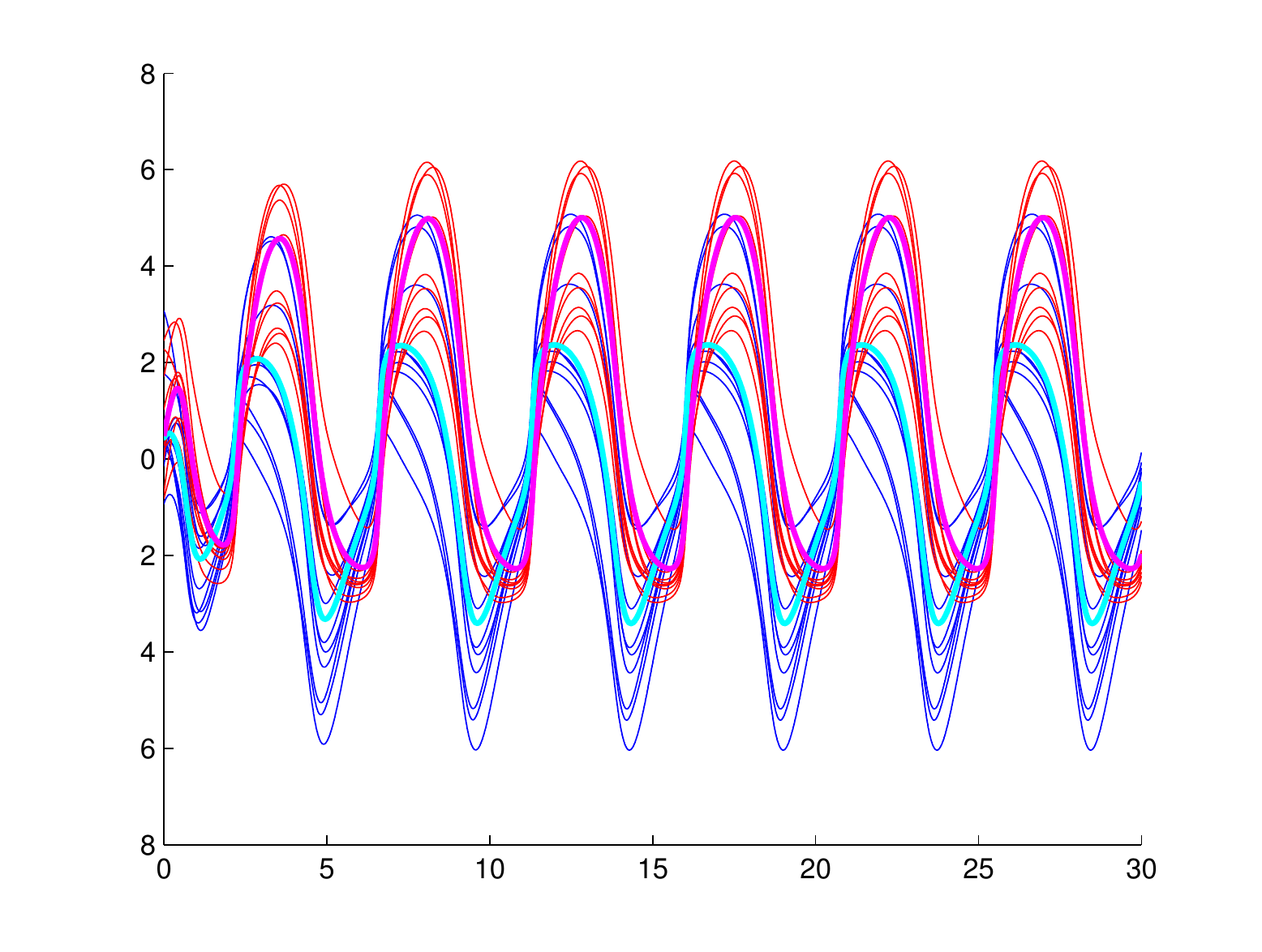}}
			\subfigure[$\sigma=3.5$]{\includegraphics[width=.3\textwidth]{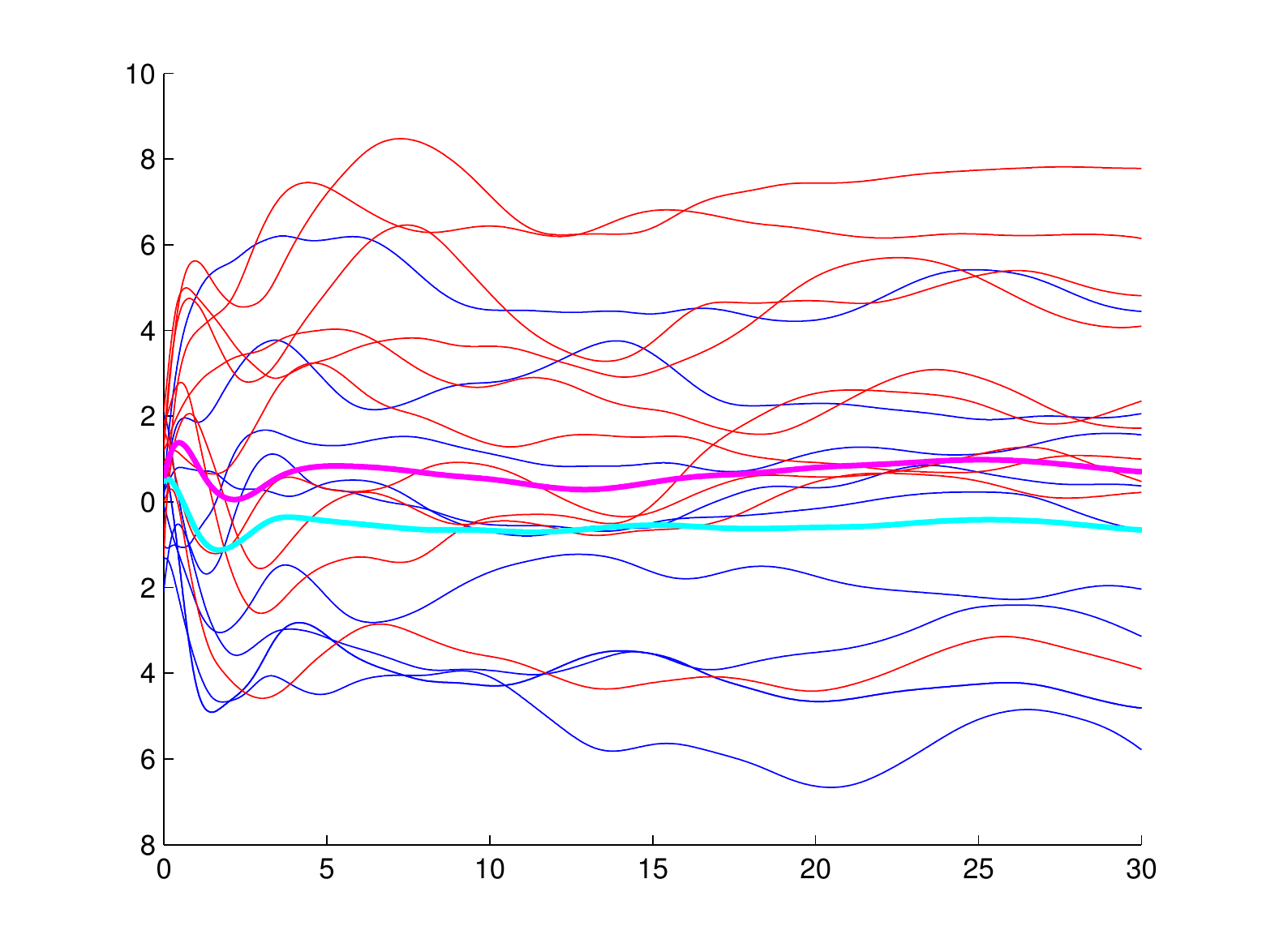}}
		\caption{Heterogeneity-induced oscillations in a two populations excitatory-inhibitory network. (a) Bifurcation diagram of the system of ODEs~\eqref{eq:Means} as a function of a presumably constant common value $\Gamma(\sigma)$. LP: saddle-node bifurcation, H: Hopf bifurcation, Sh: Saddle-homoclinic bifurcation, blue: fixed points (solid: stable, dashed: unstable), magenta: limit cycles. (b)-(d): simulations of the heterogeneous network with $2\,000$ neurons per population, for increasing values of the heterogeneity parameter $\sigma$ displays a transition from stationary to periodic phase-locked behaviors, and then to chaotic behavior. blue (resp. red): 30 arbitrary trajectories of population 1 (resp. 2), cyan (resp. magenta): average on all neurons of population 1 (resp. 2).}
		\label{fig:NoiseInducedOscillations}
	\end{figure}
	
	This phenomenon relates the level of heterogeneity to the presence of synchronized oscillations in networks, an essential phenomenon, as we discuss in the conclusion.
	
\section{Discussion}

In this manuscript, we analyzed randomly coupled neuronal networks and derived their limit as the number of neurons tends to infinity. To this purpose, we showed that the system satisfied a large deviation principle and exhibited the related good rate function. This approach generalized the work of Gerard Ben Arous and Alice Guionnet~\cite{ben-arous-guionnet:95,guionnet:97} developed for spin glasses in three main directions: (i) the synaptic weights are not centered, introducing additional, deterministic terms in the coupling, (ii) interactions are delayed, which projects the problem into infinite dimensions, and (iii) the system is composed of several populations, which was handled showing that empirical measures on each populations simultaneously satisfy a large deviation principle. The proof is made on a particular model very popular in physics and neurosciences, the Wilson and Cowan system, which is close of the famous Hopfield model, but as mentioned, can be easily generalized to nonlinear neuron models. Indeed, most of the proofs deal a quantity which is related to the density of the coupled network with respect to the uncoupled dynamics, and this quantity is independent of the dynamics of individual cells. Our approach can be also readily extended to networks with distributed delays. Eventually, let us note that this result provides large-deviations estimates on the convergence of deterministically coupled networks as studied in~\cite{touboul-hermann-faugeras:11}.

From the neuroscience viewpoint, this approach has the interest to justify an equation which has been widely used in the analysis of large-scale networks, and introduced in the seminal paper of Sompolinsky and collaborators~\cite{sompolinsky-crisanti-etal:88}. Moreover, our setting substantially extends their result by taking into account important features present in cortical networks: interconnection delays, multiple populations with non-zero average synaptic connection. All these refinements allowed going deeper into the understanding of the dynamics of neuronal networks. In particular, we showed that delays can induce oscillations in one-populations networks modulated by the level of heterogeneity, and that non-zero average connectivity yields non-trivial dynamics that were not present in the original SCS model. Moreover, we showed that networks with multiple populations can show relative counter-intuitive phenomena such as the localization of chaos: a few populations can have a chaotic behavior which is not transmitted to the other populations, whatever the connection strength. Another phenomenon we illustrated was the apparition of heterogeneity-induced oscillations, phenomenon first presented in a very recent article~\cite{touboul-hermann:12}. As discussed in that article, the latter phenomenon can be related to experimental studies that showed that the cortex of developing rats subject to absence seizures (abnormal synchronization of some cortical areas) was characterized by the same mean synaptic connectivity, but increased dispersion of the synaptic weights~\cite{aradi-soltesz:02}. We further showed here that such oscillations were facilitated by the presence of delays. Let us eventually underline that the particular form of our system is not essential in the apparition of such phenomena, and in~\cite{touboul-hermann:12}, it is shown that the transition to synchronized activity as a function of heterogeneity levels persists for realistic neuronal networks made of excitable cells, the Fitzhugh-Nagumo model.

An important observations is that in all the examples treated, the SCS phase transition to chaos is present as the heterogeneity is increased. This phenomenon seems relatively universal in this kind of randomly coupled neuronal networks. It was very recently related to the topological complexity of the underlying vector field in~\cite{wainrib-touboul:2012} in the original SCS framework, and we conjecture that the same phenomenon occurs in our more complex settings.

The analysis of the present manuscript underlines the fact that the structure of connectivity maps is essential to the function of the networks, and illustrated the fact that averaging effects do not cancel the structure into populations and allow serving functions such as oscillations. However, actual brain connectivity maps are not recurrent, and can display different topologies, with different computational capabilities. The extension of our methods to such networks is an active part of our future research. Moreover, our analysis did not take into account the plasticity mechanisms, resulting in the slow evolution of the synaptic weights as a function of the activity of neurons, which tends to correlate the synaptic weights to the voltage variables. Techniques to rigorously address the dynamics of neuronal networks with non-recurrent connectivity, with specific topologies, or with correlated synaptic weights, are deep questions that are still largely left unexplored, and we can expect that a wide range of novel phenomena will arise from the analysis of such networks.

\appendix

\section{Proofs}\label{append:Proofs}
\subsection{Proof of Lemma~\ref{lemma1}}\label{append:ProofLemma1}

This appendix is the proof of lemma~\ref{lemma1} concerned with the properties of the functions used in our large deviation principle in a discrete time framework.

\begin{proof}
	Before proceeding to the proof of the lemma, let us start by showing an inequality which will be very useful in several steps of the demonstration.
	We recall the following well known formula (see~\cite[Lemma 3.2.13]{deuschel-stroock:89} for instance):
	\[I(\mu\vert P) = \sup \left(\int_{\C} \Phi d\mu - \log \int_{\C} \exp \Phi dP \;;\; \Phi \in \C_b(\C)\right)\]
	so that, for any bounded measurable function $\Phi$ on $\C$, we have:
	\begin{equation}\label{eq:IneqRelativeEntropy}
		\int_{\C} \Phi d\mu \leq I(\mu\vert P) +  \log \int_{\C} \exp \Phi dP
	\end{equation}
    The result hold for positive measurable functions by the monotone convergence theorem.
	
	\noindent {\bf Proof of Lemma~\ref{lemma1}.(i).} \\
    Lets prove the lipschitzianity  result for $\Gamma^{\alpha, k}_1$.
	We have:
	\begin{align*}
	\Big|\log\Big( 1 + & \frac{\int \exp\big(  - \frac{1}{2} \sum_{l=0}^k {G^{\alpha}_{t_l}}^2(t_{l+1} - t_l)\big) d(\gamma_{K_{\mu}}-\gamma_{K_{\nu}})}{\int \exp\big(  - \frac{1}{2} \sum_{l=0}^k {G^{\alpha}_{t_l}}^2(t_{l+1} - t_l)\big) d\gamma_{K_{\nu}}}\Big)\Big| = \Big|\Gamma_1^{\alpha,k}(\mu)-\Gamma_1^{\alpha,k}(\nu)+\frac{1}{2}\int ((m^{\alpha}_{\mu})^2-(m^{\alpha}_{\nu})^2)(\tk)dt\Big| \\
& \leq \exp\Big\{\frac{\ka T}{2\la^2}\Big\}\Big|\int \exp\big(  - \frac{1}{2} \int_0^T {G^{\alpha}_{\tk}}^2dt\big) d(\gamma_{K_{\mu}}-\gamma_{K_{\nu}})\Big|
\end{align*}
Let $\xi$ be a probability measure on $\C \times \C$ with marginals $\mu$ and $\nu$, and let $\gamma_{\xi}$ be the law of a bidimensional centered gaussian process $(G^{\alpha}, \widetilde{G^{\alpha}})$ with covariance $K^{\alpha}_{\xi}$:
\begin{equation}
K^{\alpha}_{\xi}(s,t) = \sum_{\gamma=1}^M \frac{\sag^2}{\la^2}
\left(
\begin{array}{ccc}
\int \Sag(x^{\gamma}_{s-\taag})\Sag(x^{\gamma}_{t-\taag}) \, d\xi(x,y) & \int \Sag(x^{\gamma}_{s-\taag})\Sag(y^{\gamma}_{t-\taag}) \, d\xi(x,y) \\
\int \Sag(y^{\gamma}_{s-\taag})\Sag(x^{\gamma}_{t-\taag}) \, d\xi(x,y) & \int \Sag(y^{\gamma}_{s-\taag})\Sag(y^{\gamma}_{t-\taag}) \, d\xi(x,y) \\
\end{array}
\right) \label{kxi},
\end{equation}
Then,
\begin{multline*}
\Big|\int \exp\big(  - \frac{1}{2} \int_0^T {G^{\alpha}_{\tk}}^2dt\big) d(\gamma_{K_{\mu}}-\gamma_{K_{\nu}})\Big| = \Big|\int \bigg\{ \exp\big(  - \frac{1}{2} \int_0^T {G^{\alpha}_{\tk}}^2dt\big) - \exp\big(  - \frac{1}{2} \int_0^T \widetilde{G^{\alpha}_{\tk}}^2dt\big)\bigg\} d\gamma_{\xi}\Big|\\
\qquad \leq \frac 1 2 \int \int_0^T \big| {G^{\alpha}_{\tk}}^2-\widetilde{G^{\alpha}_{\tk}}^2 \big| dt  d\gamma_{\xi}\\
\leq \frac 1 2 \prod_{\varepsilon = \pm 1} \bigg(\int\int_0^T (G^{\alpha}_{\tk}+\varepsilon \widetilde{G^{\alpha}_{\tk}})^2 dt d\gamma_{\xi} \bigg)^{\frac 1 2}
\end{multline*}
by Cauchy-Schwarz inequality. Then, using the covariance of $(G^{\alpha},\widetilde{G^{\alpha}})$ under $\gamma_{\xi}$, we find:
\begin{align}
\Big|\Gamma_1^{\alpha,k}(\mu)-\Gamma_1^{\alpha,k}(\nu)+\frac{1}{2}\int ((m^{\alpha}_{\mu})^2-(m^{\alpha}_{\nu})^2)(\tk)dt\Big|\\
	& \leq \frac{1}{2} \exp\Big\{\frac{\ka T}{2\la^2}\Big\} \Big(\frac{4\ka T}{\la^2}\Big)^{\frac{1}{2}}  \bigg\{\frac{1}{\la^2}\sum_{\gamma=1}^M \sag^2 \int \int_0^T (\Sag(x^{\gamma}_{t-\taag})-\Sag(y^{\gamma}_{t-\taag}))^2 dt \, d\xi(x,y) \bigg\}^{\frac{1}{2}} \nonumber \\
	& \leq \frac{\ka}{\la^2} \sqrt{T} \exp\Big\{\frac{\ka T}{2\la^2}\Big\} \max_{\gamma=1\cdots M} \bigg\{ \int \int_0^T \big|\Sag(x^{\gamma}_{t-\taag})-\Sag(y^{\gamma}_{t-\taag})\big|^2 dt \, d\xi(x,y) \bigg\}^{\frac{1}{2}} \label{ineqlemma1}
	\end{align}

	Moreover, we have:
	\begin{align*}
	\Big|\int {m^{\alpha}_{\mu}(\tk)}^2 -{m^{\alpha}_{\nu}(\tk)}^2 dt\Big| & = \int \Big|\Big(m^{\alpha}_{\mu}(\tk) -m^{\alpha}_{\nu}(\tk)\Big)\Big(m^{\alpha}_{\mu}(\tk) + m^{\alpha}_{\nu}(\tk)\Big)\Big| dt \\
	& \leq 2 \frac{\Ja}{\la} \int \big|m^{\alpha}_{\mu}(\tk) -m^{\alpha}_{\nu}(\tk)\big| dt
	\end{align*}
	But
	\begin{align*}
	\int_0^T |(m^{\alpha}_{\mu}-m^{\alpha}_{\nu})(t)| dt & = \int_0^T \Big|\frac{1}{\la} \sum_{\gamma=1}^M \Jag \int \Sag (x^{\gamma}_{t-{\tau}}) d(\mu-\nu)(x)\Big| \, dt \\
	& \leq \frac{1}{\la} \sum_{\gamma=1}^M |\Jag| \:  \int_0^T \Big|\int \Sag (x^{\gamma}_{t-{\taag}}) d(\mu-\nu)(x)\Big| \, dt \\
	& \leq \frac{1}{\la} \sum_{\gamma=1}^M |\Jag| \:  \int \int_0^T  |\Sag (x^{\gamma}_{t-{\taag}}) - \Sag (y^{\gamma}_{t-{\tau}})| dt \, d\xi(x,y) \\
	& \leq \frac{ \Ja}{\la} \max_{\gamma=1\cdots M}  \Big( \int \int_0^T \big|\Sag (x^{\gamma}_{t-{\taag}}) - \Sag (y^{\gamma}_{t-{\tau}})\big|^2 dt \, d\xi(x,y)\Big)^{\frac{1}{2}} \\
	\end{align*}
	by Cauchy-Schwarz inequality. \\
	Consequently:
	\begin{multline}
	|\Gamma_1^{\alpha,k}(\mu) - \Gamma_1^{\alpha,k}(\nu)| \leq \big(\frac{ \Ja^2}{\la^2} + \frac{\ka}{\la^2} \sqrt{T} \exp\Big\{\frac{\ka T}{2\la^2}\Big\}\big)  \max_{\gamma=1\cdots M}  \Big( \int \int_0^T \big|\Sag (x^{\gamma}_{\tk-\taag}) - \Sag (y^{\gamma}_{\tk-\taag})\big|^2 dt \, d\xi(x,y)\Big)^{\frac{1}{2}} \label{gamma1}
	\end{multline}
	As the $\Sag$ are $K_S$ Lipschitz, we have: \\
	\begin{equation}
	|\Gamma_1^{\alpha,k}(\mu) - \Gamma_1^{\alpha,k}(\nu)| \leq K_S \sqrt{T} \big(\frac{\Ja^2}{\la^2} + \frac{\ka}{\la^2} \sqrt{T} \exp\Big\{\frac{\ka T}{2\la^2}\Big\}\big)   d_T(\mu,\nu),
	\end{equation}
	so that $\Gamma_1^{\alpha,k}$ is Lipschitz for the Vaserstein distance. Using the triangle inequality, the result holds for  $\Gamma_1^k$.\\

\noindent {\bf Proof of Lemma\ref{lemma1}.(ii):}\\
Let
\begin{equation*}
F_{\mu}(x)= \log\bigg\{ \int  \exp\Big\{ \int_0^T \big(\Gv_{\tk}(\omega)+\m_{\mu}(\tk)\big)' \cdot d\W_t(x) \\
- \frac{1}{2} \int_0^T \Big\|\Gv_{\tk}(\omega)+\m_{\mu}(\tk)\Big\|^2 dt \Big\} d\gamma_{\mu} \bigg\}.
\end{equation*}
This function is a.s. finite but not bounded, let us hence define for $A \in \R^{+}$
\begin{equation*}
F_{\mu}^A(x)= \log\bigg\{ \int  A\wedge\exp\Big\{ \int_0^T \big(\Gv_{\tk}(\omega)+\m_{\mu}(\tk)\big)' \cdot d\W_t(x) \\
- \frac{1}{2} \int_0^T \Big\|\Gv_{\tk}(\omega)+\m_{\mu}(\tk)\Big\|^2 dt \Big\} d\gamma_{\mu} \bigg\}.
\end{equation*}
By the monotone convergence theorem and using equation~\eqref{eq:IneqRelativeEntropy}, we have for any $a\geq1$:
\begin{equation*}
a \int F_{\mu}(x) d\mu(x) \leq I(\mu|P)+ \log\bigg\{ \int \exp{ aF_{\mu}(x)} dP(x)\bigg\}.\\
\end{equation*}
By Jensen inequality and Fubini theorem,\\
\begin{align*}
\int \exp{ (aF_{\mu}(x)}) dP(x) & \leq \int \prod_{\alpha=1}^M \int  \exp\Big\{ a \int_0^T \big(G^{\alpha}_{\tk} + m^{\alpha}_{\mu}(\tk) \big) dW^{\alpha}_t(x)\Big\} d\pa(x) \\
& \exp\Big\{ - \frac{a}{2} \int_0^T \Big\|\Gv_{\tk}(\omega)+\m_{\mu}(\tk)\Big\|^2 dt \Big\}  d\gamma_{\mu}.\\
\end{align*}
But, as $W^{\alpha}$ is a $\pa$-Brownian motion,
\begin{equation*}
\int \exp\bigg\{ a \int_0^T \big(G^{\alpha}_{\tk} + m^{\alpha}_{\mu}(\tk) \big) dW^{\alpha}_t(x)\bigg\} d\pa(x) = \exp\bigg\{ \frac{a^2}{2} \int_0^T \big(G^{\alpha}_{\tk} + m^{\alpha}_{\mu}(\tk) \big)^2 dt \bigg\} \bigg].\\
\end{equation*}
so that,
\begin{equation*}
a \int F_{\mu}(x) d\mu(x) \leq I(\mu|P)+ \log\bigg\{ \int \exp\Big\{\frac{a^2-a}{2} \int_0^T \Big\|\Gv_{\tk}(\omega)+\m_{\mu}(\tk)\Big\|^2 dt \Big\}  d\gamma_{\mu} \bigg\}.\\
\end{equation*}
Letting $a=1$ proves that $\Gamma^k \leq I(|P)$.\\

\noindent{\bf Proof of Lemma\ref{lemma1} (iii):}\\
As the components of $\Gv$ are independent under $\gamma_{\mu}$, we only have to check that, for every $b>0$, there exists a finite constant $C_b$ such that \\
\begin{equation}
{\E}_{\mu} \bigg[ \exp{\Big(\frac{b}{2} \int_0^T \big(G^{\alpha}_s+m^{\alpha}_{\mu}(s)\big)^2 ds \Big) } \bigg] \leq \exp{\frac{b C_b \ka T}{\la^2}} \label{ineqA}.
\end{equation}
It was proved in~\cite[Lemma A.3(2)]{ben-arous-guionnet:95} in their particular framework that for every $b$ verifying $\frac{b\ka T}{\la^2}<1$, there exists a finite constant $c_b$ such that:
\begin{equation*}
\int \exp{\Big(\frac{b}{2} \int_0^T G_s^2 ds \Big) } d\gamma_{\mu} \leq \exp{\frac{b c_b \ka T}{\la^2}}.
\end{equation*}
In our case, the covariance function is slightly different of that of~\cite{ben-arous-guionnet:95}, but the proof and result remain unchanged and can be readily extended.

Moreover, since we have
\begin{equation*}
\big(G^{\alpha}_s + m^{\alpha}_{\mu}(s)\big)^2 \leq 2 {G^{\alpha}_s}^2 + 2 {m^{\alpha}_{\mu}(s)}^2 \leq 2{G^{\alpha}_s}^2 + 2\frac{\Ja^2}{\la^2}
\end{equation*}
we obtain the desired result with the following constant $C_b = 2 c_{2b} + \frac{\Ja^2}{\ka}$, under the condition $\frac{2b \ka T}{\la^2}<1$. \\

\noindent {\bf Proof of Lemma\ref{lemma1}.(iv)}\\
As above, lets prove the result for $\big| \Gamma^{\alpha,k}_{2,\nu} - \Gamma^{\alpha,k}_2 \big|$.
 We have:
\begin{multline*}
|\Gamma_{2,\nu}^{\alpha,k}(\mu) - \Gamma_2^{\alpha,k}(\mu)| \leq  \frac{1}{2} \Big|\int \int \bigg\{ \Big( \int G^{\alpha}_{\tk} (dW^{\alpha}_t - m^{\alpha}_{\mu}(\tk)dt) \Big)^2 - \Big( \int G^{\alpha}_{\tk} (dW^{\alpha}_t - m^{\alpha}_{\nu}(\tk)dt) \Big)^2 \bigg\} d\gamma_{\widetilde{K}_{\mu}^{T,k}} d\mu \Big| \\ +\frac{1}{2} \Big|\int \int \Big( \int G^{\alpha}_{\tk} (dW^{\alpha}_t - m^{\alpha}_{\nu}(\tk)dt) \Big)^2 d\big(\gamma_{\widetilde{K}_{\nu}^{T,k}}-\gamma_{\widetilde{K}_{\mu}^{T,k}}\big) d\mu | + |\int \int (m^{\alpha}_{\nu}-m^{\alpha}_{\mu})(\tk)dW^{\alpha}_t d\mu \Big|
\end{multline*}

Let $\xi$ be a probability measure on $\C \times \C$ with marginals $\mu$ and $\nu$, and let $\gamma_{\xi}$ be the law of a bidimensional centered gaussian process $(G^{\alpha}, \widetilde{G^{\alpha}})$ with covariance $K^{\alpha}_{\xi}$.
Let
\begin{equation*}
\Lambda_T^{\alpha,k}(G^{\alpha})= \frac{\exp{\bigg( -\frac{1}{2} \int_0^T {G^{\alpha}_{\tk}}^2 dt\bigg)}}{\int \exp{\bigg( -\frac{1}{2} \int_0^T {G^{\alpha}_{\tk}}^2 dt\bigg)} \, d\gamma_{\xi}}.
\end{equation*}
As in \cite[lemma 3.4]{ben-arous-guionnet:95}, we can show that:
\begin{multline}
|\Gamma_{2,\nu}^{\alpha,k}(\mu) - \Gamma_2^{\alpha,k}(\mu)| \leq \frac{1}{2} \overbrace{\int \int \Big|\Lambda_T^{\alpha,k}(G^{\alpha})-\Lambda_T^{\alpha,k}(\widetilde{G^{\alpha}})\Big| \Big( \int G^{\alpha}_{\tk} (dW^{\alpha}_t - m^{\alpha}_{\nu}(\tk)dt) \Big)^2 d\gamma_{\xi} d\mu }^{B_1} \\ + \underbrace{\frac{1}{2} \! \prod_{\varepsilon=\pm1} \Bigg( \int \! \int \Lambda_T^{\alpha,k}(\widetilde{G^{\alpha}}) \Big( \int (G^{\alpha}_{\tk} +\varepsilon \widetilde{G^{\alpha}}_{\tk}) (dW^{\alpha}_t - m^{\alpha}_{\nu}(\tk)dt) \Big)^2 d\gamma_{\xi} d\mu \Bigg)^\frac{1}{2}}_{B_2} \\ + \frac{1}{2} \underbrace{ |\int \int \Lambda_T^{\alpha,k}(G^{\alpha}) \bigg\{ \Big( \int G^{\alpha}_{\tk} (dW^{\alpha}_t - m^{\alpha}_{\mu}(\tk)dt) \Big)^2 - \Big( \int G^{\alpha}_{\tk} (dW^{\alpha}_t - m^{\alpha}_{\nu}(\tk)dt) \Big)^2 \bigg\} d\gamma_{\xi} d\mu |}_{B_3} \\ + \underbrace{ \bigg( \int \Big| \int (m^{\alpha}_{\nu}-m^{\alpha}_{\mu})(\tk)dW^{\alpha}_t \Big|^2 d\mu \bigg)^{\frac{1}{2}}}_{B_4} \label{ineqgamma2}
\end{multline}

and
\begin{equation*}
\Lambda_T^{\alpha,k}(G^{\alpha})= \exp\bigg\{ - \Gamma_1^{\alpha,k}(\mu) - \frac{1}{2} \int_0^T {m^{\alpha}_{\mu}}^2(\tk) dt - \frac{1}{2} \int_0^T {G^{\alpha}_{\tk}}^2 dt \bigg\}
\end{equation*}
Hence, we have by Jensen inequality,
\begin{align*}
\Lambda_T^{\alpha,k}(G^{\alpha}) \leq \exp\Big\{\frac{\ka T}{2\la^2}\Big\}
\end{align*}
so that
\begin{align*}
\big|\Lambda_T^{\alpha,k}(G^{\alpha})-\Lambda_T^{\alpha,k}(\widetilde{G^{\alpha}})\big| \leq \exp\Big\{\frac{\ka T}{2\la^2}\Big\} \Big(\frac{1}{2}\int_0^T \big| {G^{\alpha}_{\tk}}^2-\widetilde{G^{\alpha}}_{\tk}^2\big|dt + \big|\Gamma_1^{\alpha,k}(\mu)-\Gamma_1^{\alpha,k}(\nu)+\frac{1}{2}\int ((m^{\alpha}_{\mu})^2-(m^{\alpha}_{\nu})^2)(\tk)dt\big|\Big)
\end{align*}
which eventually gives,
\begin{multline*}
B_1 \leq \frac{1}{2}\exp\Big\{\frac{\ka T}{2\la^2}\Big\}\bigg(\big|\Gamma_1^{\alpha,k}(\mu)-\Gamma_1^{\alpha,k}(\nu)+\frac{1}{2}\int ((m^{\alpha}_{\mu})^2-(m^{\alpha}_{\nu})^2)(\tk)dt\big| \int\int \Big( \int_0^T G^{\alpha}_{\tk} (dW^{\alpha}_t - m^{\alpha}_{\nu}(\tk)dt) \Big)^2 d\gamma_{\xi}d\mu \\
+ \int\int \Big(\int_0^T \big|(G^{\alpha}_{\tk})^2-\widetilde{G^{\alpha}}_{\tk}^2\big| dt\Big) \Big( \int_0^T G^{\alpha}_{\tk} (dW^{\alpha}_t - m^{\alpha}_{\nu}(\tk)dt) \Big)^2 d\gamma_{\xi}d\mu \bigg)
\end{multline*}

Let $h, m \in L^2([0;T], dt)$, with $m$ bounded. By Cauchy-Schwarz and the relative entropy inequality~\eqref{eq:IneqRelativeEntropy} (with $\Phi(x)=\Big( \int_0^T h_t dW^{\alpha}_t(x) \Big)^2 \sim \mathcal{N}\left(0,\int_0^T h_t^2 dt\right)^2$ under $\pa$, and besides is a positive and measurable function of $\C$), we have the existence of a finite constant $C$ such that,
\begin{align}
\int \Big( \int_0^T h_t (dW^{\alpha}_t(x) - m(t)dt) \Big)^2 d\mu(x) & \leq  2\bigg\{ \int \Big(\int_0^T h_t dW^{\alpha}_t \Big)^2 + \Big( \int_0^T h_t m_t dt \Big)^2 d\mu \bigg\} \nonumber \\
& \leq 2 \bigg\{ \bigg(C\big(1+I(\mu|P)\big) + m^2_{\infty}T \bigg) \Big( \int_0^T h_t^2 dt \Big)\nonumber \\
& \leq C' \big(1+I(\mu|P)\big) \Big( \int_0^T h_t^2 dt \Big) \label{ineqphi}
\end{align}
We can now bound the different terms in inequality (\ref{ineqgamma2}).

In fact, as $h_t=G^{\alpha}_{\tk}$ and $m_t=m^{\alpha}_{\nu}(\tk)$ verify the required condition, (\ref{ineqphi}) gives the existence of $c_T$,
\begin{align*}
\int\Big( \int_0^T G^{\alpha}_{\tk} (dW^{\alpha}_t(x) - m^{\alpha}_{\nu}(\tk)dt) \Big)^2 d\mu(x) & \leq  \: c_T \big(1+I(\mu|P)\big) \int_0^T (G^{\alpha}_{\tk})^2 dt
\end{align*}

Hence, we can find a finite constant $c'_T$ such that
\begin{align*}
B_1 \leq c'_T \big(1+I(\mu|P)\big) \max_{\gamma=1\cdots M} \Big( \int \int_0^T \big|\Sag(x^{\gamma}_{\tk-\taag})-\Sag(y^{\gamma}_{\tk-\taag})\big|^2 dt \, d\xi(x,y) \Big)^{\frac{1}{2}}
\end{align*}

Similarly, there exists a constant $c_T$ such that
\begin{align*}
B_2 & \leq \frac{1}{2}\exp\Big\{\frac{\ka T}{2\la^2}\Big\} \prod_{\varepsilon=\pm1} \Bigg( c_T \big(1+I(\mu|P)\big)  \int \int (G^{\alpha}_{\tk} +\varepsilon \widetilde{G^{\alpha}}_{\tk})^2 dt  d\gamma_{\xi} \Bigg)^{\frac{1}{2}} \\
& \leq c'_T \big(1+I(\mu|P)\big) \max_{\gamma=1\cdots M} \Big( \int \int_0^T \big|\Sag(x^{\gamma}_{\tk-\taag})-\Sag(y^{\gamma}_{\tk-\taag})\big|^2 dt \, d\xi(x,y) \Big)^{\frac{1}{2}}
\end{align*}

To bound $B_3$, we first use Cauchy-Schwarz inequality:
\begin{align}
B_3 & \leq \frac{1}{2} \exp\Big\{ \frac{\ka T}{2\la^2} \Big\} \prod_{\varepsilon = \pm 1} \bigg\{ \int \int \Big| \int_0^T G^{\alpha}_{\tk} \big( (1+\varepsilon)dW^{\alpha}_t - (m^{\alpha}_{\nu}(\tk) + \varepsilon m^{\alpha}_{\mu}(\tk))dt \big) \Big|^2 d\gamma_{\xi} d\mu \bigg\}^{\frac{1}{2}}  \label{ineqB3}
\end{align}
But
\begin{align*}
\Big| \int_0^T G^{\alpha}_{\tk} \big(m^{\alpha}_{\mu}(\tk) -m^{\alpha}_{\nu}(\tk)\big) dt \Big|^2 & \leq \Big(\int_0^T {G_{\tk}^{\alpha}}^2 dt\Big) \Big( \int_0^T \big(m^{\alpha}_{\mu}(\tk) -m^{\alpha}_{\nu}(\tk)\big)^2 dt \Big)
\end{align*}

Remark that
\begin{align*}
\int_0^T \Big(m^{\alpha}_{\mu}(\tk) - m^{\alpha}_{\nu}(\tk)\Big)^2 dt & = \int_{0}^T \Big( \sum_{\gamma=1}^M \frac{\Jag}{\la} \int \Sag(x^{\gamma}_{\tk-\taag}) d(\mu - \nu)(x) \Big)^2 dt \\
& \leq \frac{\Ja^2}{\la^2}  \max_{\gamma=1\cdots M} \int_{0}^T  \Big( \int \Sag(x^{\gamma}_{\tk-\taag}) d(\mu - \nu)(x) \Big)^2 dt \\
& \leq \frac{\Ja^2}{\la^2}  \max_{\gamma=1\cdots M} \int_{0}^T  \Big( \int |\Sag(x^{\gamma}_{\tk-\taag}) - \Sag(y^{\gamma}_{\tk-\taag})| d\xi(x,y) \Big)^2 dt
\end{align*}
So that
\begin{multline*}
\Big| \int_0^T G^{\alpha}_{\tk} \big(m^{\alpha}_{\mu}(\tk) -m^{\alpha}_{\nu}(\tk)\big) dt \Big|^2 \\
\leq \frac{\Ja^2}{\la^2} \Big(\int_0^T G^2_{\tk} dt\Big) \max_{\gamma=1\cdots M} \int_{0}^T  \Big( \int |\Sag(x^{\gamma}_{\tk-\taag}) - \Sag(y^{\gamma}_{\tk-\taag})| d\xi(x,y) \Big)^2 dt
\end{multline*}

Moreover, (\ref{ineqphi}) gives:
\begin{align*}
\int \Big\{ \int_0^T 2 G^{\alpha}_{\tk}\big(dW^{\alpha}_t - \frac{m^{\alpha}_{\mu}(\tk)+m^{\alpha}_{\nu}(\tk)}{2} dt\big) \Big\}^2 d\mu & \leq c_T \big( 1+ I(\mu|P) \big) 4 \int_0^T {G^{\alpha}_{\tk}}^2 dt
\end{align*}

Using the last two inequalities in (\ref{ineqB3}) we have:
\begin{align*}
B_3 & \leq \frac{1}{2} \exp\Big\{ \frac{\ka T}{2\la^2} \Big\} \Big\{ \int c_T \big( 1+ I(\mu|P) \big) 4 \Big(\int_0^T {G^{\alpha}_{\tk}}^2 dt\Big) d\gamma_{\xi} \Big\}^{\frac{1}{2}} \\
 \Big\{ \int & \frac{\Ja^2}{\la^2} \Big(\int_0^T {G^{\alpha}_{\tk}}^2 dt\Big) \max_{\gamma=1\cdots M} \int_{0}^T  \Big( \int |\Sag(x^{\gamma}_{\tk-\taag}) - \Sag(y^{\gamma}_{\tk-\taag})| d\xi(x,y) \Big)^2 dt d\gamma_{\xi} \Big\}^{\frac{1}{2}} \\
& \leq c'_T \big( 1+ I(\mu|P) \big) \max_{\gamma=1\cdots M} \bigg\{\int_{0}^T \int |\Sag(x^{\gamma}_{\tk-\taag}) - \Sag(y^{\gamma}_{\tk-\taag})|^2 d\xi(x,y)  dt\bigg\}^\frac{1}{2}
\end{align*}
as $I(|P) \geq 0$.

As of the last term, we have
\begin{align*}
B_4 & \leq \Bigg( c_T \big(1+I(\mu|P)\big) \int_0^T \big(m^{\alpha}_{\mu}(\tk) - m^{\alpha}_{\nu}(\tk)\big)^2 dt \Bigg)^{\frac{1}{2}}\\
& \leq \frac{\Ja}{\la} \Big( c_T \big(1+I(\mu|P)\big)\Big)^{\frac{1}{2}} \max_{\gamma=1\cdots M} \bigg(\int_{0}^T  \Big( \int |\Sag(x^{\gamma}_{\tk-\taag}) - \Sag(y^{\gamma}_{\tk-\taag})| d\xi(x,y) \Big)^2 dt \bigg)^{\frac{1}{2}}\\
& \leq c'_T \big(1+I(\mu|P)\big) \max_{\gamma=1\cdots M} \bigg(\int_{0}^T  \int |\Sag(x^{\gamma}_{\tk-\taag}) - \Sag(y^{\gamma}_{\tk-\taag})|^2 d\xi(x,y) dt \bigg)^{\frac{1}{2}}
\end{align*}

We have proved that there exist a constant $c_T$ such that
\begin{multline}
|\Gamma_{2,\nu}^{\alpha,k}(\mu) - \Gamma_2^{\alpha,k}(\mu)| \leq c_T \big(1+I(\mu|P)\big) \max_{\gamma=1\cdots M} \bigg(\int_{0}^T  \int |\Sag(x^{\gamma}_{\tk-\taag}) - \Sag(y^{\gamma}_{\tk-\taag})|^2 d\xi(x,y) dt \bigg)^{\frac{1}{2}}\\ \label{gamma2}
\end{multline}
And, therefore
\begin{equation*}
|\Gamma_{2,\nu}^{\alpha,k}(\mu) - \Gamma_2^{\alpha,k}(\mu)| \leq c_T K_S \sqrt{T} \big(1+I(\mu|P)\big) d_T(\mu,\nu).
\end{equation*}
so that using the triangle inequality
\begin{equation*}
|\Gamma_{2,\nu}^{k}(\mu) - \Gamma_2^{k}(\mu)| \leq C_T\big(1+I(\mu|P)\big) d_T(\mu,\nu).
\end{equation*}.
{\bf Proof of Lemma\ref{lemma1}.(v)}:

For all $\alpha \in \{1\cdots M\}$, let
\begin{align*}
dQ_{\nu}^{\alpha, k}(x) & = \exp{\Gamma_{\nu}^{\alpha,k}(\delta_x)}d\pa(x)  \\
& = \int \exp{ \Bigg( \int_0^T (G^{\alpha}_{\tk} + m^{\alpha}_{\nu}(\tk)) \, dW^{\alpha}_t(x) - \frac{1}{2} \int_0^T (G^{\alpha}_{\tk} + m^{\alpha}_{\nu}(\tk))^2 dt \Bigg) } \, d\gamma_{\nu} \, d\pa(x)
\end{align*}
The equality between the two expression of $Q_{\nu}^{\alpha,k}$ is easily obtained by gaussian calculus (see the proof of proposition~\ref{pro:DécompositionGamma}). We deduce by the martingale property of this density that it is a probability measure on $\C([-\tau,T],\R)$.

Remark that
\begin{equation*}
Q_{\nu}^k = \otimes_{\alpha=1}^M Q_{\nu}^{\alpha, k}
\end{equation*}
\begin{equation*}
\der{Q_{\nu}^k}{P}(x) = \prod_{\alpha=1}^M \der{Q_{\nu}^{\alpha, k}}{\pa}(x^{\alpha})
\end{equation*}

It follows that $Q_{\nu}^k \in \M_1^+(\C)$, and
\begin{align*}
dQ_{\nu}^k(x) & = \exp{\Gamma_{\nu}^{k}(\delta_x)}dP(x)  \\
& = \int \exp{ \Bigg( \int_0^T \big( \Gv_{\tk} + \m_{\nu}(\tk) \big)' \cdot \, d\W_t(x) - \frac{1}{2} \int_0^T \Big\| \Gv_{\tk} + \m_{\nu}(\tk)\Big\|^2 dt \Bigg) } \, d\gamma_{\nu} \, dP(x)
\end{align*}

Lets now prove that $H^k_{\nu} =I(.|Q^k_{\nu})$.
We will first show that $I(Q^k_{\nu}|P)$ is finite.
In fact,
\begin{multline}
\der{Q_{\nu}^{\alpha, k}}{\pa}(x) = \bigg(\int \exp\bigg\{-\frac{1}{2} \int_0^T {G^{\alpha}_{\tk}}^2 + {m^{\alpha}_{\nu}}^2(\tk) dt\bigg\} d\gamma_{\nu}\bigg) \exp\Big\{\int_0^T m^{\alpha}_{\nu}(\tk) dW^{\alpha}_t(x)\Big\} \\ \exp\bigg\{\frac{1}{2} \int \bigg(\int_0^T G^{\alpha}_{\tk}\Big( dW^{\alpha}_t(x) - m^{\alpha}_{\nu}(\tk) dt\Big)\bigg)^2 d\gamma_{\widetilde{K}_{\nu}^{T,k}}\bigg\}
\end{multline}

Which becomes, after some gaussian computations (see \cite[Lemma 5.15]{ben-arous-guionnet:95}):
\begin{align*}
\der{Q_{\nu}^{\alpha, k}}{\pa} = \exp\bigg\{\int_0^T H^{\alpha}_{\tk}(Q_{\nu}^{\alpha, k})dW^{\alpha}_t-\frac{1}{2}\int_0^T {H^{\alpha}_{\tk}}^2(Q_{\nu}^{\alpha, k})dt\bigg\}
\end{align*}

where
\begin{align*}
H^{\alpha}_{\tk}(Q^{\alpha, k}_{\nu}) & = \bigg(\int G^{\alpha}_{\tk} \int_0^t G^{\alpha}_{\sk} \big(dW^{\alpha}_s - m^{\alpha}_{\nu}(\sk)ds\big) \; d\gamma_{\widetilde{K}_{\nu}^{t,k}}\bigg) + m^{\alpha}_{\nu}(\tk)\\
& = \int_0^t \big(\widetilde{K}_{\nu}^{t,k}(\tk,\sk)\big)_{\alpha\alpha} \big(dW^{\alpha}_s - m^{\alpha}_{\nu}(\sk)ds\big) + m^{\alpha}_{\nu}(\tk) \\
& = \sum_{l=0,..,k ; t_{l+1}\leq t} \Big(W^{\alpha}_{t_{l+1}}-W^{\alpha}_{t_l} - m^{\alpha}_{\nu}(t_l)(t_{l+1}-t_l)\Big) \big(\widetilde{K}_{\nu}^{t,k}(\tk, t_k)\big)_{\alpha\alpha}  + m^{\alpha}_{\nu}(\tk)
\end{align*}

Hence, according to Girsanov Theorem, there exists a $Q_{\nu}^{\alpha, k}$-brownian motion $B^{\alpha}$ such that:
\begin{equation*}
W^{\alpha}_t=B^{\alpha}_t+\int_0^t H^{\alpha}_{\sk}(Q_{\nu}^{\alpha, k})ds
\end{equation*}
As $W$ is an affine function of $B^{\alpha}$, it is, under $Q_{\nu}^{\alpha, k}$, a gaussian variable with finite moments.
In particular, $I(Q_{\nu}^{\alpha, k}|\pa)$ is finite.
But
\begin{equation*}
I(Q_{\nu}^k|P)= \int_{\C} \log\big( \der{Q^k_{\nu}}{P}(x) \big) dQ^k_{\nu}(x)
= \sum_{\alpha=1}^M  \int_{\C([-\tau,T],\R)} \log\big(  \der{Q^{\alpha, k}_{\nu}}{\pa}(x^{\alpha}) \big) dQ^{\alpha, k}_{\nu}(x^{\alpha}) = \sum_{\alpha=1}^M I(Q_{\nu}^{\alpha, k}|\pa)
\end{equation*}
so that $I(Q_{\nu}^k|P)$ is finite.

Now, let
\begin{equation*}
\bigg(\tau^{\alpha}_{m}(x)=\inf\bigg\{t \geq 0 ; \bigg| \int_0^t m^{\alpha}_{\nu}(\sk) dW^{\alpha}_s(x)\bigg| \geq m\bigg\}\bigg)_{m \in \N}
\end{equation*}
be a sequence of stopping times for the brownian filtration $\sigma( W^{\alpha}_s, 0\leq s \leq T)$.
As $W^{\alpha}$ is a $\pa$-MB, we have $\lim_{m\to\infty} \tau^{\alpha}_{m} \to \infty$ almost surely under $\pa$.
$\big( \tau_m= \min_{\alpha=1 \cdots M} \tau^{\alpha}_{m}\big)_{m\in \N}$ defines a sequence stopping times for $\sigma( W^{\alpha}_s, \alpha=1\cdots M, 0\leq s \leq T)$ which tends to infinity along with $m$ $P$ a.s.
We define:
\begin{equation*}
Q_{\nu,t}^k=\int \exp\bigg\{ \Bigg( \int_0^t (\Gv_{\sk} + \m_{\nu}(\sk))' \cdot \, d\W_s - \frac{1}{2} \int_0^t \Big\|\Gv_{\sk} + \m_{\nu}(\sk)\Big\|^2 ds \Bigg) \bigg\} \, d\gamma_{\nu} \, P
\end{equation*}

\begin{multline*}
\Gamma^{k}_{\nu,t}(\mu) = \int_{\C} \log\bigg( \int \exp\bigg\{ \int_0^t \big(\Gv_{\sk}(\omega)+\m_{\nu}(\sk)\big)' \cdot d\W_s(x) \\
- \frac{1}{2} \int_0^t \Big\|\Gv_{\sk}(\omega)+\m_{\nu}(\sk)\Big\|^2ds\bigg\}  d\gamma_{K^t_{\nu}}(\omega) \bigg) d\mu(x)
\end{multline*}
 as well as $Q_{\nu,t}^{\alpha,k}$ and $\Gamma^{\alpha, k}_{\nu,t}$
where
\[
K_{\mu}^t(s,u)=\Big( \mathbf{1}_{\alpha=\beta} \displaystyle{\frac{1}{\la^2} \sum_{\gamma=1}^M \sag^2 \int_{\C} \Sag(x^{\gamma}_{s-\taag})\Sag(x^{\gamma}_{u-\taag}) d\mu(x)} \Big)_{\alpha, \beta \in \{1 \cdots M\}}.
\]
is define on $[0,t]^2$.

These functions are clearly continuous in time on $[0,T]$.
The result $I(Q_{\nu,\tau_{m}\wedge T}^k|P) < \infty$ obviously remains.

By Jensen inequality, we have:
\begin{align*}
\der{Q_{\nu,\tau^{\alpha}_{m}\wedge T}^{\alpha, k}}{\pa}(x) & \geq \exp\Big\{-\frac{\ka (\tau^{\alpha}_{m}(x)\wedge T)}{2\la^2}\Big\} \exp\Big\{-\frac{\Ja^2 (\tau^{\alpha}_{m}(x)\wedge T)}{2 \la^2}\Big\} \exp\bigg\{\int_0^{\tau^{\alpha}_{m}(x)\wedge T} m^{\alpha}_{\nu}(\tk) dW^{\alpha}_t(x)\bigg\}\\
& \geq  \exp\Big\{-\frac{(\ka + \Ja^2) T}{2\la^2}-m\Big\}
\end{align*}
so that,
\begin{equation*}
\der{Q_{\nu,\tau_{m}\wedge T}^k}{P}(x)  \geq \exp\Big\{-\sum_{\alpha=1}^M \frac{(\ka + \Ja^2) T}{2\la^2}-Mm\Big\}
\end{equation*}

We then apply the same proof as in \cite[Appendix B]{ben-arous-guionnet:95} to find:
\begin{equation*}
\forall \mu \in \M_1^+(\C),  \; \; H_{\nu,\tau_{m}\wedge T}^k = I(\mu|Q_{ \nu,\tau_{m}\wedge T}^k)
\end{equation*}
Letting $m$ to infinity, we conclude using the continuity of $H_{\nu,t}^k$ and  $I(.|Q_{\nu,t}^k)$ on $[0,T]$.

\noindent{\bf Proof of Lemma\ref{lemma1}.(vi):}
In order to demonstrate that $H^k$ is a good rate function, we need to show that it is lower semi-continuous and that it has compact level sets, i.e. $\{H^k\leq L\}$ is a compact set for any $L>0$. This is a direct consequence of points (i)-(iv) proved above.
\end{proof}

\subsection{Proof of Lemma~\ref{lemma3.1}}\label{append:ProofLemma3.1}
This appendix is concerned with the proof of lemma~\ref{lemma3.1} ensuring an exponential bound that will be used to show a tightness result on the sequence of empirical laws.

\begin{proof}
Let
\begin{equation*}
B^n = \int_{\hat{\mu}_n \in B(\nu,\delta)} \exp\Big\{ a n \big(\Gamma(\hat{\mu}_n)- \Gamma_{\nu}(\hat{\mu}_n)\big)\Big\} dQ_{\nu}^n
\end{equation*}
Writing the definitions of $\Gamma$ and $\Gamma_{\nu}$, we find:
\begin{equation*}
B^n = \int_{\hat{\mu}_n \in B(\nu,\delta)} \prod_{i=1}^n \Bigg( \frac{{\E}_{\hat{\mu}_n} \bigg[\exp\bigg\{ \int_0^T \big( \Gv_t + \m_{\hat{\mu}_n}(t) \big)' \cdot d\W^i_t -\frac{1}{2} \int_0^T \Big\| \Gv_t + \m_{\hat{\mu}_n}(t) \Big\|^2 dt \bigg\} \bigg]}{{\E}_{\nu} \bigg[\exp\bigg\{ \int_0^T \big( \Gv_t + \m_{\nu}(t) \big)' \cdot d\W^i_t - \frac{1}{2} \int_0^T \Big\| \Gv_t + \m_{\nu}(t) \Big\|^2 dt \bigg\} \bigg]} \Bigg)^a d(Q_{\nu})^{\otimes n}
\end{equation*}

Let $\xi$ be a probability measure on $\C \times \C$ with marginals $\hat{\mu}_n$ and $\nu$. We then have:

\begin{equation*}
B^n = \int_{\hat{\mu}_n \in B(\nu,\delta)} \prod_{i=1}^n \Bigg( \frac{{\E}_{\xi} \bigg[\exp \bigg\{ \int_0^T \big( \Gv_t + \m_{\hat{\mu}_n}(t) \big)' \cdot d\W^i_t - \frac{1}{2} \int_0^T \Big\| \Gv_t + \m_{\hat{\mu}_n}(t) \Big\|^2 dt \bigg\} \bigg]}{{\E}_{\xi} \bigg[\exp\bigg\{ \int_0^T \big( \Gv_t' + \m_{\nu}(t) \big)d\W^i_t - \frac{1}{2} \int_0^T \Big\| \Gv_t' + \m_{\nu}(t) \Big\|^2 dt \bigg\} \bigg]} \Bigg)^a d(Q_{\nu})^{\otimes n}
\end{equation*}
where $(\Gv,\Gv')$ is a 2M-dimensional gaussian centered process  with covariance $K_{\xi}$ (see (\ref{kxi})).
\begin{equation}
K_{\xi}(s,t) =
\left(
\begin{array}{ccc}
K_{\mu}(s,t) & \big(\mathbf{1}_{\{\alpha=\gamma\}} K^{\alpha}_{\xi}(s,t)\big)_{\alpha, \gamma = 1 \cdots M} \\
\big(\mathbf{1}_{\{\alpha=\gamma\}} K^{\alpha}_{\xi}(s,t)\big)_{\alpha, \gamma = 1 \cdots M} & K_{\nu}(s,t) \\
\end{array}
\right) \label{kxi2},
\end{equation}

Let
\begin{equation*}
Y_i= \int_0^T \big( \Gv_t + \m_{\hat{\mu}_n}(t) \big)' \cdot d\W^i_t -\frac{1}{2} \int_0^T \Big\| \Gv_t + \m_{\hat{\mu}_n}(t) \Big\|^2 dt,
\end{equation*}
\begin{equation*}
Y_i'= \int_0^T \big( \Gv'_t + \m_{\nu}(t) \big)' \cdot d\W^i_t -\frac{1}{2} \int_0^T \Big\| \Gv_t' + \m_{\nu}(t) \Big\|^2 dt.
\end{equation*}

Then
\begin{align*}
B^n & = \int_{\hat{\mu}_n \in B(\nu,\delta)} \prod_{i=1}^n \bigg( \frac{{\E}_{\xi} \Big[\exp \Big\{ Y_i \Big\} \Big]}{{\E}_{\xi} \Big[\exp\Big\{ Y_i' \Big\} \Big]} \bigg)^a d(Q_{\nu})^{\otimes n} \\
& = \int_{\hat{\mu}_n \in B(\nu,\delta)} \prod_{i=1}^n \Bigg( {\E}_{\xi} \Bigg[ \frac{\exp{Y_i'}}{{\E}_{\xi} \Big[\exp{Y_i'} \Big]} \exp{ \Big( Y_i - Y_i' \Big) } \Bigg] \Bigg)^a d(Q_{\nu})^{\otimes n} \\
& \leq \int_{\hat{\mu}_n \in B(\nu,\delta)} \prod_{i=1}^n {\E}_{\xi} \Bigg[ \frac{\exp{Y_i'}}{{\E}_{\xi} \Big[\exp{ Y_i'} \Big]} \exp{ a \Big( Y_i - Y_i' \Big) } \Bigg] d(Q_{\nu})^{\otimes n} \\
\end{align*}
by Jensen inequality.\\
Then, using Holder inequality twice with conjugate exponents $(p,q)$ and $(\sigma, \eta)$, one finds: \\

\begin{multline}
B^n  \leq  \Bigg\{ \overbrace{\int \prod_{i=1}^n  \frac{{\E}_{\xi} \Big[\exp { pY_i'}\Big]}{\Big( {\E}_{\xi} \Big[\exp{Y_i'} \Big] \Big)^p } d(Q_{\nu})^{\otimes n}}^{B^n_1} \Bigg\}^{\frac{1}{p}} \Bigg\{ \overbrace{\int \exp\Big\{n \sigma \Gamma_{\nu}(\hat{\mu}_n) \Big\} dP^{\otimes n}}^{B^n_2} \Bigg\}^{\frac{1}{q\sigma}} \\
\times \Bigg\{ \underbrace{\int_{\hat{\mu}_n \in B(\nu,\delta)} \prod_{i=1}^n {\E}_{\xi} \Big[ \exp{ a \eta q \Big( Y_i -Y_i' \Big) } \Big] dP^{\otimes n}}_{B^n_3} \Bigg\}^{\frac{1}{q\eta}} \label{ineqlemma}
\end{multline}

We first bound the first term of the right hand side. Let $\gamma_{p,\widetilde{K}^T_{\mu}}$ be a probability measure on $\Omega$ such that $d\gamma_{p,\widetilde{K}^T_{\mu}} = \frac{\prod_{\gamma=1}^M \exp{-\frac{p}{2}\int_0^T {G^{\gamma}_t}^2 dt} }{\int \prod_{\gamma=1}^M \exp{-\frac{p}{2}\int_0^T {G^{\gamma}_t}^2 dt} d\gamma_{\mu} } d\gamma_{\mu}$. \\
According to appendix A of \cite{ben-arous-guionnet:95} (where $p=\beta^2$), we have, for any $p \geq 0$, that $\Gv$ is a M-dimensional centered gaussian process under $\gamma_{p,\widetilde{K}^T_{\mu}}$.\\
Consequently, using the independence of $(\Gv,\Gv')$'s components:

\begin{align*}
{\E}_{\xi} \Big[\exp { pY_i'}\Big] & = \prod_{\alpha=1}^M \exp\bigg\{ p \Big( \int_0^T m_{\nu}^{\alpha}(t) dW^{i_{\alpha}}_t - \frac{1}{2} \int_0^T {m^{\alpha}_{\nu}}^2(t) dt \Big) \bigg\} \; {\E}_{\xi}\Big[ \exp\bigg\{-\frac{p}{2} \int_0^T {{G^{\alpha}_t}'}^2 dt\bigg\} \Big] \\
&\times \int \exp\bigg\{ p\Big(\int_0^T {G^{\alpha}_t}' \big(dW^{i_{\alpha}}_t - m_{\nu}^{\alpha}(t) dt\big) \Big) \bigg\}d\gamma_{p,\widetilde{K}^T_{\nu}}\\
& = \prod_{\alpha=1}^M \exp\bigg\{ p \Big( \int_0^T m_{\nu}^{\alpha}(t) dW^{i_{\alpha}}_t - \frac{1}{2} \int_0^T {m^{\alpha}_{\nu}}^2(t) dt \Big) \bigg\} \; {\E}_{\xi}\Big[ \exp\bigg\{-\frac{p}{2} \int_0^T {{G^{\alpha}_t}'}^2 dt\bigg\} \Big] \\
& \exp\bigg\{ \Big( \frac{p^2}{2} \int  \Big(\int_0^T G^{\alpha}_t \big(dW^{i_{\alpha}}_t - m_{\nu}^{\alpha}(t) dt\big) \Big)^2 \frac{\exp{\left(-\frac{p}{2}\int_0^T {G^{\alpha}_t}^2 dt\right)} }{\int \exp{\left(-\frac{p}{2}\int_0^T {G^{\alpha}_t}^2 dt\right)} d\gamma_{\nu}} d\gamma_{\nu} \Big)\bigg\}\\
\end{align*}

Hence,
\begin{align*}
\frac{{\E}_{\xi}\Big[ \exp{pY_i'} \Big]}{{\E}_{\xi}\Big[ \exp{Y_i'} \Big]^p} & = \prod_{\alpha=1}^M \underbrace{\frac{{\E}_{\xi}\Big[ \exp{-\frac{p}{2} \int_0^T {{G^{\alpha}_t}'}^2 dt} \Big]}{{\E}_{\xi}\Big[ \exp{-\frac{1}{2} \int_0^T {{G^{\alpha}_t}'}^2 dt} \Big]^p}}_{f^{\alpha}(p)} \exp\bigg\{ \frac{p}{2} \int  \Big(\int_0^T G^{\alpha}_t \big(dW^{i_{\alpha}}_t - m_{\nu}^{\alpha}(t) dt\big) \Big)^2 \\
& \times \underbrace{\big(p\frac{\exp{-\frac{p}{2}\int_0^T {G^{\alpha}_t}^2dt} }{\int \exp{-\frac{p}{2}\int_0^T {G^{\alpha}_t}^2dt} d\gamma_{\nu}} - \Lambda_T(G^{\alpha}) \big)}_{g^{\alpha}(p)} d\gamma_{\nu}  \bigg\}
\end{align*}

Remark that$ f^{\alpha}(p)$ is bounded for $p>0$. In fact, on one hand it is clear that
\[\forall p \in [0,+\infty[, {\E}_{\xi}\Big[ \exp{-\frac{p}{2} \int_0^T {{G^{\alpha}_t}'}^2 dt} \Big]\leq 1,\]
furthermore Jensen inequality and Fubini theorem give us:
\begin{equation*}
{\E}_{\nu}\Big[\exp\Big\{-\frac{p}{2}\int_0^T {G^{\alpha}_t}^2 dt\Big\} \Big] \geq \exp\bigg\{ - \frac{p}{2} \int_0^T {\E}_{\nu}\Big[ {G^{\alpha}_t}^2 \Big] dt \bigg\} \geq \exp\Big\{ -\frac{p T \ka}{2 \la^2} \Big\}.\\
\end{equation*}
Therefore, bounded convergence monotone gives $f^{\alpha}(p) \to 1$  and, similarly, $g^{\alpha}(p) \to 0$ as $p \searrow 1$.
Moreover, as $\Big(\int_0^T {G^{\alpha}_t}' \big(dW^{i_{\alpha}}_t - m_{\nu}^{\alpha}(t) dt\big) \Big)^2$ has finite moments under $\gamma_{\nu}$, we can find a finite constant $C_1(p)$, $C_1(p) \searrow 0$ as $p \searrow 1$, such that:
\begin{equation}
B^n_1 = \Bigg( \int \frac{{\E}_{\xi} \Big[\exp { pY_i'}\Big]}{\Big( {\E}_{\xi} \Big[\exp{ Y_i'} \Big] \Big)^p } dQ_{\nu}\Bigg)^n \leq e^{ C_1(p) n} \label{t1}
\end{equation}

Moreover:
\begin{align*}
B^n_2 & = \Bigg( \int {\E}_{\nu}\bigg[ \exp\bigg\{ \int_0^T \big(\Gv_t + \m_{\nu}(t)\big)' \cdot d\W_t(x) - \frac{1}{2} \int_0^T \Big\|\Gv_t + \m_{\nu}(t)\Big\|^2 dt\bigg\} \bigg]^{\sigma} dP(x) \Bigg)^n \\
& \leq \Bigg( {\E}_{\nu}\bigg[ \int \exp\bigg\{ \sigma \int_0^T \big(\Gv_t + \m_{\nu}(t)\big)' \cdot d\W_t(x) \bigg\} dP(x) \; \exp\bigg\{ - \frac{\sigma}{2} \int_0^T \Big\|\Gv_t + \m_{\nu}(t)\Big\|^2 dt\bigg\} \bigg] \Bigg)^n \\
& = \Bigg( {\E}_{\nu}\bigg[ \exp\bigg\{ \frac{\sigma^2-\sigma}{2} \int_0^T \Big\|\Gv_t + \m_{\nu}(t)\Big\|^2 dt\bigg\} \bigg] \Bigg)^n
\end{align*}

So that if we take $\sigma$ close enough to 1, we can find a finite constant $C_2(\sigma), \lim_{\sigma \to 1} C_2(\sigma) =0$, such that (see inequality (\ref{ineqA})):

\begin{equation}
B^n_2 \leq \Bigg( {\E}_{\nu}\bigg[ \exp\bigg\{ \frac{\sigma^2-\sigma}{2} \int_0^T \Big\|\Gv_t + \m_{\nu}(t)\Big\|^2 dt\bigg\} \bigg] \Bigg)^n < e^{ C_2(\sigma) n } \label{t2}
\end{equation}

We now will bound the last term of the right hand side of (\ref{ineqlemma}). By Cauchy-Schwarz inequality, if $\kappa=qa\eta$:
\begin{align*}
B^n_3 \leq \Bigg\{ \int_{\hat{\mu}_n \in B(\nu,\delta)} \prod_{i=1}^n &{\E}_{\xi} \bigg[ \exp\Big\{ 2\kappa \int_0^T \big(\Gv_t-\Gv_t'+ (\m_{\hat{\mu}_n}-\m_{\nu})(t)\big)' \cdot d\W^i_t\\
- 2\kappa^2 \! \int_0^T \! \Big\|\Gv_t \! - \! \Gv_t' \! +& \! (\m_{\hat{\mu}_n} \!-\m_{\nu})(t)\Big\|^2 dt \Big\}\bigg] dP^{\otimes n} \Bigg\}^{\frac{1}{2}} \! \Bigg\{ \! \int_{\hat{\mu}_n \in B(\nu,\delta)} \! {\E}_{\xi} \bigg[ \exp\Big\{ 2\kappa^2 \! \int_0^T \! \Big\|\Gv_t \!- \! \Gv_t'\! + \!(\m_{\hat{\mu}_n} \! - \!\m_{\nu})(t)\Big\|^2 \!dt\\
& - \kappa \int_0^T \Big\|\Gv_t+\m_{\hat{\mu}_n}(t)\Big\|^2 - \Big\| \Gv_t' + \m_{\nu}(t)\Big\|^2 dt \Big\} \bigg]^n dP^{\otimes n} \Bigg\}^{\frac{1}{2}}
\end{align*}
The first term is the square root of a martingale's expectation, thus equal to one. For the second term, we remark that:
\begin{multline*}
- \int_0^T \big(G^{\alpha}_t+m^{\alpha}_{\hat{\mu}_N}(t)\big)^2 - \big({G^{\alpha}_t}' + m_{\nu}^{\alpha}(t)\big)^2 dt \leq \frac{\delta^{\frac{1}{2}}}{2} \bigg( \frac{1}{\delta} \int_0^T \big(G^{\alpha}_t- {G^{\alpha}_t}' + (m^{\alpha}_{\hat{\mu}_N}(t)  - m_{\nu}^{\alpha})(t)\big)^2 dt \\
 + \int_0^T \big(G^{\alpha}_t+ {G^{\alpha}_t}' + (m^{\alpha}_{\hat{\mu}_N}(t)+m^{\alpha}_{\nu})(t)\big)^2 dt \bigg)
\end{multline*}
 so that, by Cauchy-Schwarz inequality:
\begin{align*}
B^n_3 & \leq \Bigg\{ \int {\E}_{\xi} \bigg[ \exp\Big\{ \big(4\kappa^2+\kappa\delta^{-\frac{1}{2}}\big) \int_0^T \Big\| \Gv_t-\Gv_t'+ (\m_{\hat{\mu}_n}-\m_{\nu})(\tk)\Big\|^2 dt \Big\} \bigg]^n dP^{\otimes n}\Bigg\}^{\frac{1}{4}} \\
& \times \Bigg\{ \int {\E}_{\xi} \bigg[ \exp\Big\{ \kappa\delta^{\frac{1}{2}}\int_0^T \Big\| \Gv_t+\Gv_t'+ (\m_{\hat{\mu}_n}+\m_{\nu})(t)\Big\|^2 dt \Big\} \bigg]^n dP^{\otimes n} \Bigg\}^{\frac{1}{4}} \\
& \leq \exp{\sum_{\alpha=1}^n \Big\{\frac{1}{2} (4\kappa^2+\kappa\delta^{-\frac{1}{2}})\frac{\Ja^2 K_S^2}{\la^2}T\delta^2 + (2\kappa\delta^{\frac{1}{2}}T\frac{\Ja^2}{\la^2}) \Big\}}  \\
& \times \Bigg\{ \int {\E}_{\xi} \bigg[ \exp\Big\{ 2\big(4\kappa^2+\kappa\delta^{-\frac{1}{2}}\big) \int_0^T \Big\|\Gv_t-\Gv_t'\Big\|^2 dt \Big\} \bigg]^n dP^{\otimes n}\Bigg\}^{\frac{1}{4}} \\
& \times \Bigg\{ \int {\E}_{\xi} \bigg[ \exp\Big\{ 2\kappa\delta^{\frac{1}{2}}\int_0^T \Big\|\Gv_t+\Gv_t'\Big\|^2 dt \Big\} \bigg]^n dP^{\otimes n} \Bigg\}^{\frac{1}{4}}
\end{align*}
As ${\E}_{\xi}\Big[ \int_0^T \Big\|\Gv_t-\Gv_t'\Big\|^2 dt \Big] = \sum_{\alpha, \gamma=1}^M \frac{\sag^2 K_S^2}{\la^2} \int_0^T \int \big| \Sag(x^{\gamma}_{t-\taag}) - \Sag(y^{\gamma}_{t-\taag})\big|^2 d\xi(x,y) dt \leq \sum_{\alpha, \gamma=1}^M \frac{\sag^2 T}{\la^2} \delta^2$ (in fact, $d_T(\hat{\mu}_n,\nu) \leq \delta)$, we can use appendix A, lemma 3.2 of \cite{ben-arous-guionnet:95} to bound the two last term of the previous inequality: there exists two finite constants $C^{\kappa}_1(\delta)$ and $C^{\kappa}_2(\delta)$ such that,
\begin{equation*}
{\E}_{\xi} \bigg[ \exp\Big\{ 2\big(4\kappa^2+\kappa\delta^{-\frac{1}{2}}\big) \int_0^T \Big\|\Gv_t-\Gv_t'\Big\|^2 dt \Big\} \bigg] \leq \exp\Big\{ 4\big(4\kappa^2+\kappa\delta^{-\frac{1}{2}}\big) C^{\kappa}_1(\delta) C_T \delta^2 \Big\}
\end{equation*}
\begin{equation*}
{\E}_{\xi} \bigg[ \exp{ \Big\{ 2\kappa\delta^{\frac{1}{2}}\int_0^T \Big\|\Gv_t+\Gv_t'\Big\|^2 dt \Big\}} \bigg] \leq \exp{\Big\{ 4\kappa\delta^{\frac{1}{2}} C^{\kappa}_2(\delta) \sum_{\alpha=1}^M \frac{4\ka T}{\la^2} \Big\} }
\end{equation*}

Hence, we can find $C_{\kappa}(\delta), \lim_{\delta \to 0} C_{\kappa}(\delta) = 0$, such that
\begin{equation}
B^n_3 \leq e^{C_{\kappa}(\delta) n} \label{t3}
\end{equation}

We conclude by using (\ref{t1}), (\ref{t2}) and (\ref{t3}) in (\ref{ineqlemma}).
\end{proof}


\begin{thebibliography}{10}

\bibitem{amari:72}
{\sc Amari, S.} (1972).
\newblock Characteristics of random nets of analog neuron-like elements.
\newblock {\em Syst. Man Cybernet. SMC-2\/}.

\bibitem{aradi-soltesz:02}
{\sc Aradi, I. and Soltesz, I.} (2002).
\newblock Modulation of network behaviour by changes in variance in
  interneuronal properties.
\newblock {\em The Journal of physiology\/} {\bf 538,} 227.

\bibitem{ben-arous-guionnet:95}
{\sc Ben-Arous, G. and Guionnet, A.} (1995).
\newblock Large deviations for langevin spin glass dynamics.
\newblock {\em Probability Theory and Related Fields\/} {\bf 102,} 455--509.

\bibitem{bressloff:09}
{\sc Bressloff, P.} (2009).
\newblock Stochastic neural field theory and the system-size expansion.
\newblock {\em SIAM J. on Applied Mathematics\/} {\bf 70,} 1488--1521.

\bibitem{brunel:00}
{\sc Brunel, N.} (2000).
\newblock Dynamics of sparsely connected networks of excitatory and inhibitory
  spiking neurons.
\newblock {\em Journal of Computational Neuroscience\/} {\bf 8,} 183--208.

\bibitem{buszaki:06}
{\sc Buzsaki, G.} (2004).
\newblock {\em Rhythms of the brain}.
\newblock Oxford University Press, USA.

\bibitem{coombes-laing:11}
{\sc Coombes, S. and Laing, C.} (2011).
\newblock Delays in activity based neural networks.
\newblock {\em Submitted to the Royal Society\/}.

\bibitem{da-prato:92}
{\sc Da~Prato, G. and Zabczyk, J.} (1992).
\newblock {\em Stochastic equations in infinite dimensions}.
\newblock Cambridge Univ Pr.

\bibitem{dembo-zeitouni:09}
{\sc Dembo, A. and Zeitouni, O.} (2009).
\newblock {\em Large deviations techniques and applications} vol.~38.
\newblock Springer Verlag.

\bibitem{deuschel-stroock:89}
{\sc Deuschel, J. and Stroock, D.} (1989).
\newblock {\em Large deviations} vol.~137.
\newblock Academic Pr.

\bibitem{aldo-faisal:08}
{\sc Faisal, A., Selen, L. and Wolpert, D.} (2008).
\newblock Noise in the nervous system.
\newblock {\em Nat. Rev. Neurosci.\/} {\bf 9,} 292--303.

\bibitem{faugeras-touboul-etal:09b}
{\sc Faugeras, O., Touboul, J. and Cessac, B.} (2009).
\newblock A constructive mean-field analysis of multi population neural
  networks with random synaptic weights.
\newblock In {\em COSYNE 09}.

\bibitem{guionnet:97}
{\sc Guionnet, A.} (1997).
\newblock {Averaged and quenched propagation of chaos for spin glass dynamics}.
\newblock {\em Probability Theory and Related Fields\/} {\bf 109,} 183--215.

\bibitem{hopfield:82}
{\sc Hopfield, J.~J.} (1982).
\newblock Neural networks and physical systems with emergent collective
  computational abilities.
\newblock {\em Proc. National Academy of Sciences, USA\/} {\bf 79,} 2554--2558.

\bibitem{mao:08}
{\sc Mao, X.} (2008).
\newblock {\em Stochastic Differential Equations and Applications}.
\newblock Horwood publishing.

\bibitem{marder-goaillard:06}
{\sc Marder, E. and Goaillard, J.} (2006).
\newblock Variability, compensation and homeostasis in neuron and network
  function.
\newblock {\em Nature Reviews Neuroscience\/} {\bf 7,} 563--574 

\bibitem{mountcastle:97}
{\sc Mountcastle, V.} (1997).
\newblock The columnar organization of the neocortex.
\newblock {\em Brain\/} {\bf 120,} 701--722.

\bibitem{parker:03}
{\sc Parker, D.} (2003).
\newblock Variable properties in a single class of excitatory spinal synapse.
\newblock {\em The Journal of neuroscience\/} {\bf 23,} 3154--3163.

\bibitem{series-fregnac:02}
{\sc Series, P., Georges, S., Lorenceau, J. and Fr{\'e}gnac, Y.} (2002).
\newblock Orientation dependent modulation of apparent speed: a model based on
  the dynamics of feed-forward and horizontal connectivity in v1 cortex.
\newblock {\em Vision research\/} {\bf 42,} 2781--2797.

\bibitem{sompolinsky-crisanti-etal:88}
{\sc Sompolinsky, H., Crisanti, A. and Sommers, H.} (1988).
\newblock {Chaos in Random Neural Networks}.
\newblock {\em Physical Review Letters\/} {\bf 61,} 259--262.

\bibitem{sznitman:84}
{\sc Sznitman, A.} (1984).
\newblock Equations de type de boltzmann, spatialement homogenes.
\newblock {\em Probability Theory and Related Fields\/} {\bf 66,} 559--592.

\bibitem{touboulNeuralFieldsDynamics:11}
{\sc Touboul, J.}
\newblock On the dynamics of mean-field equations for stochastic neural fields
  with delays.

\bibitem{touboulNeuralfields:11}
{\sc Touboul, J.} (2011).
\newblock Mean-field equations for stochastic neural fields with
  spatio-temporal delays.
\newblock {\em (submitted)\/}.

\bibitem{touboul-hermann:12}
{\sc Touboul, J. and Hermann, G.} (2012).
\newblock Heterogeneous connections induce oscillations in large-scale
  networks.
\newblock {\em Physical Review Letters\/} {\bf (in press),}.

\bibitem{touboul-hermann-faugeras:11}
{\sc Touboul, J., Hermann, G. and Faugeras, O.} (2011).
\newblock Noise-induced behaviors in neural mean field dynamics.
\newblock {\em SIAM J. on Dynamical Systems\/} {\bf 11,}.

\bibitem{uhlhaas-singer:06}
{\sc Uhlhaas, P. and Singer, W.} (2006).
\newblock Neural synchrony in brain disorders: relevance for cognitive
  dysfunctions and pathophysiology.
\newblock {\em Neuron\/} {\bf 52,} 155--168 

\bibitem{wainrib-touboul:2012}
{\sc Wainrib, G. and Touboul, J.} (2012).
\newblock Topological and dynamical complexity of random neural networks.
\newblock {\em (submitted)\/} {\bf http://arxiv.org/abs/1210.5082/,}.

\bibitem{wilson-cowan:72}
{\sc Wilson, H. and Cowan, J.} (1972).
\newblock Excitatory and inhibitory interactions in localized populations of
  model neurons.
\newblock {\em Biophys. J.\/} {\bf 12,} 1--24.

\bibitem{wilson-cowan:73}
{\sc Wilson, H. and Cowan, J.} (1973).
\newblock A mathematical theory of the functional dynamics of cortical and
  thalamic nervous tissue.
\newblock {\em Biological Cybernetics\/} {\bf 13,} 55--80.

\end{thebibliography}
\end{document}